\begin{document}
\begin{titlepage}
\title{Discrete Choice under Risk with Limited Consideration\thanks{We are grateful to Liran Einav, three anonymous referees, Abi Adams, Jose Apesteguia, Miguel Ballester, Arthur Lewbel, Chuck Manski, and Jack Porter for useful comments and constructive criticism. For comments and suggestions we thank the participants to the 2017 Barcelona GSE Summer Forum on Stochastic Choice, the 2018 Cornell Conference ``Identification and Inference in Limited Attention Models'', the 2018 Penn State-Cornell Conference on Econometrics and IO, the 2018 GNYMA Conference, the 2019 ASSA meetings, the IFS 2019 ``Consumer Behaviour: New Models, New Methods'' Conference, and to seminars at Stanford, Berkeley, UCL, Wisconsin, Bocconi, and Duke. Part of this research was carried out while Barseghyan and Molinari were on sabbatical leave at the Department of Economics at Duke University, whose hospitality is gratefully acknowledged.  We gratefully acknowledge support from National Science Foundation grant SES-1824448 and from the Institute for Social Sciences at Cornell University.}}
\author{\normalsize{Levon Barseghyan}\thanks{Department of Economics, Cornell University, lb247@cornell.edu.} \and
            \normalsize{Francesca Molinari}\thanks{Department of Economics, Cornell University, fm72@cornell.edu.} \and
            \normalsize{Matthew Thirkettle}\thanks{Department of Economics, Rice University, mt77@rice.edu.}}
\date{\normalsize{\today}}
\maketitle
\begin{abstract}
\noindent
This paper is concerned with learning decision makers'
preferences using data on observed choices from a finite set of risky alternatives. We propose a discrete choice model with unobserved heterogeneity in consideration sets and
in standard risk aversion. We obtain sufficient conditions for the model's semi-nonparametric point identification,
including in cases where consideration depends on preferences and on some of the exogenous variables. Our method yields an estimator that is easy to compute and is applicable in markets with large choice sets. We illustrate its properties using a dataset on property insurance purchases.\\

\noindent\textbf{Keywords:} discrete choice, limited consideration, semi-nonparametric identification \\

\end{abstract}
\setcounter{page}{0}
\thispagestyle{empty}
\end{titlepage}
\pagebreak \newpage

\setcounter{page}{1}
\setcounter{section}{0}
\section{Introduction}
This paper is concerned with learning decision makers' (DMs) preferences using data on observed choices from a finite set of risky alternatives with monetary outcomes. The prevailing empirical approach to study this problem merges expected utility theory (EUT) models with econometric methods for discrete choice analysis.  Standard EUT assumes that the DM evaluates all available alternatives and chooses the one yielding the highest expected utility. The DM's risk aversion is determined by the concavity of her Bernoulli utility function. The set of all alternatives  -- the \emph{choice set} -- is assumed to be observable by the researcher.

We depart from this standard approach by proposing a discrete choice model with unobserved heterogeneity in preferences and unobserved heterogeneity in consideration sets. Specifically, preferences satisfy the classic Single Crossing Property (SCP) of \citet{mirrlees1971exploration} and \citet{spence1974market}, central to important studies of decision making under risk.\footnote{E.g., \cite{apesteguia2017single,chiappori2019aggregate}. While our focus is on decision making under risk, the SCP property is satisfied in many contexts, ranging from single agent models with goods that can be unambiguously ordered based on quality, to multiple agents models (e.g., \cite{athey2001single}).} That is, the preference order of any two alternatives switches only at one value of the preference parameter.\footnote{The EUT framework satisfies the SCP, which requires that if a DM with a certain degree of risk aversion prefers a safer lottery to a riskier one, then all DMs with higher risk aversion also prefer the safer lottery.} Given her \emph{unobserved} preference parameter, each DM evaluates only the alternatives in her \emph{unobserved} consideration set, which is a subset of the choice set.

Our first contribution is to provide a general framework for point identification of these models. Our analysis relies on two types of observed data variation. In the first case, we assume that the data include a single (common) excluded regressor affecting the utility of each alternative. In the second case, we assume that each alternative has its own excluded regressor. In both cases, the excluded regressor(s) is independent of unobserved preference heterogeneity. When the excluded regressor(s) also has large support it becomes a ``special regressor'' \citep{Lewbel00, lewbel2012overview}. For reasons we explain, the case of the single common excluded regressor is the most demanding from an identification standpoint. Nonetheless, under classic conditions for identification of full-consideration discrete choice models \citep[see, e.g., ][]{Lewbel00,matzkin07} and the SCP, we obtain  semi-nonparametric identification of the preference distribution given basically any consideration set formation mechanism (henceforth, consideration mechanism).\footnote{The identification results are semi-nonparametric because we specify the utility function up to a DM-specific preference parameter. We establish nonparametric identification of the distribution of the latter.}   We also prove identification of the consideration mechanism for the widely used Alternative-specific Random Consideration (ARC) model of \cite{manski1977structure} and \cite{man:mar14}. The identification argument is constructive and applicable beyond the ARC model. We establish identification results for preferences that do not require large support of the excluded regressor(s).  We also show that identification of both preferences and the consideration mechanism is attainable when consideration depends on preferences. In particular, we introduce (i) binary consideration types, and (ii) proportionally shifting consideration, both of which can capture the notion that the DM's attention probabilistically shifts from riskier to safer alternatives as her risk aversion increases. In these cases, identification requires that the distribution of the preference parameter admits a continuous density function.

We can significantly expand our results with alternative-specific excluded regressors. First, we can allow for essentially unrestricted dependence of consideration on preferences without assuming that the excluded regressors have large support. Second, we show that consideration can depend both on preferences and on some excluded regressors. We show this for two cases. In the first case, there is one alternative (the default) that is always considered.  The probability of considering other alternatives can depend on the default-specific excluded regressor. This is a generalization of the models in \citet{heiss2016inattention,ho2017impact, Abaluck2019}, where the consideration mechanism only allows for the possibility that either the default or the entire choice set is considered.  We, however, allow for each subset of the choice set containing the default to have its own probability of being drawn and this probability can vary with the DM's preferences. In the second case, we allow the consideration of each alternative to depend on its own excluded regressor, but not on the regressors of other alternatives \citep{Goeree2008,Abaluck2019,kawaguchi2019designing}. In addition, consideration may depend on preferences -- a feature unique to our paper. %\sout{Identification rests on the restrictions implied by the SCP for the empirical choice frequencies and how these frequencies respond to changes in the excluded regressor(s).\textcolor{blue}{ZZZ FM Anything to be said here about correlation case?  MT: Follow up}}

Our second contribution is to provide a simple method to compute our likelihood-based estimator. Its computational complexity grows polynomially in the number of parameters governing the consideration mechanism. Because the SCP generates a natural ordering of alternatives akin to vertical product differentiation, our method does not require enumerating all possible subsets of the choice set. If it did, the computational complexity would grow exponentially with the size of the choice set. Moreover, we compute the utility of each alternative only once for a given value of the preference parameter, gaining enormous computational advantage similar to that of importance-sampling methods.

Our third contribution is to elucidate the applicability and the advantages of our framework over the standard application of full consideration random utility models (RUMs) with additively separable unobserved heterogeneity (e.g., Mixed Logit). First, our model can generate zero shares for non-dominated alternatives. Second, the model has no difficulty explaining relatively large shares of dominated alternatives. Third, in markets with many choice domains, our model can match not only the marginal but also the joint distribution of choices across domains. Forth, our framework is immune to an important criticism by \cite{ape:bal18} against using standard RUMs to study decision making under risk. As these authors note, combining standard EUT with additive noise results in non-monotonicity of choice probabilities in the risk preferences, a clearly undesirable feature.

Random preference models like the ones we consider are random utility models as envisioned by \cite{Mcfadden1974}  \citep[for a textbook treatment see] []{Manski2007book}. We show that our random preference models can be written as RUMs with unobserved heterogeneity in risk aversion and with an additive error that has a discrete distribution with support $\{-\infty,0\}$.
Then, it is natural to draw parallels with the Mixed (random coefficient) Logit model \citep[e.g.,][]{McFaddenTrain2000}. In our setting, the Mixed Logit boils down to assuming that, given the DM's risk aversion, her evaluation of an alternative equals its expected utility summed with an unobserved heterogeneity term capturing the DM's idiosyncratic taste for unobserved characteristics of that alternative. However, in some markets it is hard to envision such characteristics.\footnote{Many insurance contracts are identical in \emph{all} aspects \emph{except} for the coverage level and price, e.g., employer provided health insurance, auto, or home insurance offered by a single company. In other contexts, unobservable characteristics may affect choice mostly via consideration -- as we model -- rather than via ``additive noise''. E.g., a DM may only consider those supplemental prescription drug plans that cover specific medications.} We show that limited consideration models and the Mixed Logit generate several contrasting implications. First, the Mixed Logit generally implies that each alternative has a positive probability of being chosen, while a limited consideration model can generate zero shares by setting the consideration probability of a given alternative to zero. Second, the Mixed Logit satisfies a \emph{Generalized Dominance Property} that we derive: if for any degree of risk aversion alternative $j$ has lower expected utility than either alternative $k$ or $l$, then the probability of choosing $j$ must be no larger than the probability of choosing $k$ or $l$. Limited consideration models do not necessarily abide Generalized Dominance. Third, in limited consideration models choice probabilities depend on the ordinal expected utility rankings of the alternatives, while in the Mixed Logit it depends on the cardinal ranking. This difference implies that choice probabilities may be monotone in risk preferences in the limited consideration models we propose, while in the Mixed Logit they are not \citep{ape:bal18}.

We illustrate our method in a study of households' deductible choices across three lines of insurance: auto collision, auto comprehensive, and home (all perils). We aim to estimate the distribution of risk preferences and the consideration parameters and to assess the resulting fit of the models. We find that the $\modA$ model does a remarkable job at matching the distribution of observed choices, and because of its aforementioned properties, outperforms the Mixed Logit. Under the ARC model, we find that although households are on average strongly risk averse, they consider lower coverages more often than higher coverages. We also find support for proportionally shifting consideration. In particular, risk-neutral DMs consider each of the safer alternatives $15\%$ ($11\%$) less often than do extremely risk averse DMs (DMs with median risk aversion). %Finally, the average monetary loss per household resulting from limited consideration is $\$50$.

The rest of the paper is organized as follows. We describe the model of DMs' preferences in Section \ref{sec:preferences}, and study identification in Section \ref{sec:idchallenge}. In Section \ref{S:LikeTract} we describe the computational advantages of our approach. Section \ref{sec:modprop} compares our model to the Mixed Logit. Section \ref{sec:appplication} presents our empirical application. Section \ref{sec:discussion} contextualizes our contribution relative to the extant literature and offers concluding remarks.

\section{Preferences}\label{sec:preferences}
\subsection{Decision Making under Risk in a Market Setting: An Example}\label{sec:mainexample}
Consider as an example the following insurance market, which mimics the setting of our empirical application. There is an underlying risk of a loss that occurs with probability $\mu$ that may vary across DMs. A finite number of alternatives are available to insure against this loss. Conditional on risk type, i.e., given $\mu$, each alternative $j \in \Dc \equiv \{1,\dots,D\}$ is fully characterized by the pair $(d_j,p_j)$. The first element is the insurance deductible, which is the DM's out of pocket expense in the case a loss occurs. Deductibles are decreasing with index $j$, and all deductibles are less than the lowest realization of the loss. The second element is the price (insurance premium), which also varies across DMs. For each DM there is a baseline price $\bar{p}$ that determines prices for all alternatives faced by the DM according to the multiplication rule $p_{j}=g_j\cdot \bar{p}+\delta$. Lower deductibles provide more coverage and cost more, so $g_j$ is increasing with $j$. Both $g_j$ and $\delta$ are invariant across DMs. The lotteries that DMs face are $L_{j}(x)\equiv \left( -p_{j},1-\mu ;-p_{j}-d_j,\mu \right)$, where  $x\equiv\bar{p}$. DMs are expected utility maximizers. Given initial wealth $w$, the expected utility of deductible lottery $L_{j}(x)$ is
\begin{equation*}
U_{\pparam}(L_{j}(x))=\left( 1-\mu \right) u_{\pparam}\left( w-p_{j}\right) +\mu u_{\pparam}\left(w-p_{j}-d_j\right) ,
\end{equation*}
where $u_{\pparam}(\cdot)$ is a Bernoulli utility function defined over final wealth states. We assume that $u_{\pparam}(\cdot)$ belongs to a family of utility functions that are fully characterized by a scalar $\pparam$ (e.g. Constant Absolute Risk Aversion (CARA), Constant Relative Risk Aversion (CRRA), or Negligible Third Derivative (NTD)), which varies across DMs.\footnote{Under CRRA, it is implied that DMs' initial wealth is known to the researcher. NTD utility is defined in \cite{Cohen2007} and in \cite{Barseghyan2013}.}

Given the risk type, the relationship between risk aversion and prices is standard. At sufficiently high $\bar{p}$, less coverage is always preferred to more coverage for all $\pparam$ on the support: $U_{\pparam}(L_{1}(\covx))>U_{\pparam}(L_{2}(\covx)) >\dots> U_{\pparam}(L_{D}(\covx))$. At sufficiently low $\bar{p}$, we have the opposite ordering for all $\pparam$ on the support: $U_{\pparam}(L_{D}(\covx))>U_{\pparam}(L_{D-1}(\covx)) >\dots> U_{\pparam}(L_{1}(\covx))$. At moderate prices, for each pair of deductible lotteries $j<k$ there is a cutoff value $c_{j,k}(x)$ in the interior of $\pparam$'s support, found by solving $U_{\pparam}(L_{j}(\covx))=U_{\pparam}(L_{k}(\covx))$ for $\pparam$. On the left of this cutoff the higher deductible is preferred and on the right  the lower deductible is preferred. In other words, $c_{j,k}(x)$ is the unique coefficient of risk aversion that makes the DM indifferent between $L_j(\covx)$ and $L_k(\covx)$, known to the researcher at any given $\covx$. Those with lower $\pparam$ choose the riskier alternative $L_j(\covx)$, while those with higher $\pparam$ choose the safer alternative $L_k(\covx)$. Provided $U_{\pparam}(\cdot)$ is smooth in $\pparam$, $c_{j,k}(x)$ is smooth in $\covx$. In fact, under CARA, CRRA, or NTD, $c_{j,k}(x)$ is a continuously differentiable monotone function. The prices are such that, under CARA, CRRA, or NTD, whenever $U_{\pparam}(L_{1}(\covx))>U_{\pparam}(L_{j}(\covx))$ it is also the case that $U_{\pparam}(L_{1}(\covx))>U_{\pparam}(L_{j+1}(\covx))$.\footnote{We analytically verify this claim for our application in Appendix \ref{app:AppAssumption}, but it can also be checked numerically for any given dataset.} As we show below, this can be stated as $c_{1,j}(x)<c_{1,j+1}(x)$. That is, if the DM's risk aversion is so low that she prefers the riskiest lottery to a safer one, then she also prefers it to an even safer one. Finally, there are no three-way ties. That is, for a given $\covx$ there are no alternatives $\{j,k,l\}$ such that  $U_{\pparam}(L_{j}(\covx))=U_{\pparam}(L_{k}(\covx))=U_{\pparam}(L_{l}(\covx))$.\footnote{It is straightforward to very this condition, and we do so in our application.}

\subsection{Preferences with Single Crossing Property}\label{S:full}

There is a continuum of DMs.  Each of them faces a choice among a finite number of alternatives, i.e., a choice set, which is denoted ${\cal{D}}=\{1,\dots,D\}$. The number of alternatives is invariant across DMs. Alternatives vary by their utility-relevant characteristics and are distinguished by (at least) one characteristic, $\dor_j \in \R,~ j \in {\cal{D}}$, which is DM invariant. This characteristic reflects the quality of alternative $j$ (e.g., insurance deductible). When it is unambiguous, we may write $\dor_j$ instead of ``alternative $j$''. Other characteristics may vary across DMs or across alternatives. Our analysis rests on the excluded regressor(s) $\covx$. To keep the notation as lean as possible, we state our assumptions and results implicitly conditioning on all remaining characteristics. Hence, alternative $j$ is fully characterized by $(\dor_j,\covx_j)$. We consider two cases. In one case, all $\covx_j$'s are perfectly correlated with a single (common) excluded regressor, $\covx$ (e.g., $\bar{p}$ in our insurance example). In the other case, each $\covx_j$ has its own variation conditional on all other $\covx_k,~ k\neq j$ (e.g., each alternative on the market exhibits locally independent price variation).
\phantomsection
\begin{assumptionSP}{T0}\label{TA0}
	The random variable (or vector) $\covx$ has a strictly positive density on a set $\mathcal{S} \subset \R$ $\left(\mathcal{S} \subset \R^D, ~ \dim \mathcal{S}=D\right)$.
\end{assumptionSP}
Each DM's valuation of the alternatives is defined by a utility function $U_{\pparam}(d_j,x)$, which depends on a DM-specific index $\pparam$ distributed according to $F(\cdot)$ over a bounded support.\footnote{We assume that while $\pparam$ has bounded support, the utility function is well defined for any real valued $\pparam$.}
\phantomsection
\begin{assumptionSP}{T1}\label{TA1}
	The density of $F(\cdot)$, denoted $f(\cdot)$, is continuous and strictly positive on $[0,\bar \pparam]$ and zero everywhere else.
\end{assumptionSP}
The DMs' draws of $\pparam$ are not observed by the researcher. We require that DMs' preferences satisfy the Single Crossing Property (SCP).
\phantomsection
\begin{assumptionSP}{T2} [Single Crossing Property]\label{TA2} For any two alternatives, $d_j$ and $d_k$, there exists a continuously differentiable function $\cmap_{L,R}: \mathcal{S} \to \R_{[-\infty,\infty]}$ such that
	\begin{align*}
	&\util_\pparam(d_L,\covx) > \util_\pparam(\dor_R,\covx)  \quad \forall \pparam \in (-\infty,\cmap_{L,R}(\covx)) \\
	&\util_\pparam(d_L,\covx) = \util_\pparam(\dor_R,\covx)  \quad \pparam = \cmap_{L,R}(\covx) \\
	&\util_\pparam(d_L,\covx) < \util_\pparam(\dor_R,\covx)  \quad \forall \pparam \in (\cmap_{L,R}(\covx),\infty).
	\end{align*}
	where $(L,R)=(j,k)$ or $(L,R)=(k,j)$. We refer to $\cmap_{L,R}(\cdot)$ as the cutoff between $d_L$ and $d_R$.
\end{assumptionSP}
The SCP implies that the DM's ranking of alternatives is monotone in $\pparam$.  In the context of risk preferences, if a DM with a certain level of risk aversion prefers a safer asset to a riskier one, then all DMs with higher risk aversion also prefer the safer asset. Since the cutoffs may be infinite, the SCP does not exclude dominated alternatives.%\footnote{For choice under risk, this definition of dominance is equivalent to first order stochastic dominance.}
\begin{definition} [Dominated Alternatives] Given $\covx$, alternative $d_j$ is dominated if there exists an alternative $d_k$ such that $\forall \pparam \in \R$, $\util_\pparam(d_k,\covx) > \util_\pparam(\dor_j,\covx)$.
\end{definition}
We now establish some useful facts that follow from Assumption \ref{TA2}. First, the index $L$ in $c_{L,R}(\cdot)$ indicates the alternative that is preferred on the left of the cutoff.  It is without loss of generality to assume $L=\min(j,k)$ and $R=\max(j,k)$ because of the following fact:
\begin{fact} [Natural Ordering of Alternatives]\label{A:NOA} Suppose Assumption \ref{TA2} holds. Then alternatives can be enumerated such that as $\pparam\rightarrow -\infty$, $\util_\pparam(d_1,\covx) > \util_\pparam(\dor_2,\covx) >\cdots>\util_\pparam(d_D,\covx)$
for all $\covx$ at which no alternative is dominated.
\end{fact}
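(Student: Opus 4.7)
The plan is to extract from Assumption \ref{TA2} a strict total order on $\mathcal{D}$ and then appeal to the standard fact that a strict total order on a finite set admits a unique linear enumeration. Concretely, I would define a binary relation $\succ$ on $\mathcal{D}$ by declaring $\dor_j \succ \dor_k$ whenever $j$ plays the role of $L$ in Assumption \ref{TA2} applied to the pair $\{\dor_j,\dor_k\}$. The key observation is that the identity of $L$ is part of the data supplied by the SCP and depends only on the pair, not on $\covx$, so $\succ$ is itself $\covx$-independent. Moreover, at any $\covx$ where no alternative is dominated, $\cmap_{L,R}(\covx)$ must be finite, because the values $\pm\infty$ are exactly what encode dominance of $\dor_L$ by $\dor_R$ (or vice versa) at that $\covx$; hence by the SCP, $\dor_j \succ \dor_k$ is equivalent to $\util_\pparam(\dor_j,\covx) > \util_\pparam(\dor_k,\covx)$ for all sufficiently negative $\pparam$.

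Next, I would verify that $\succ$ is a strict total order. Totality and antisymmetry follow immediately because the SCP selects exactly one of $L = j$ or $L = k$ for each pair. The crux is transitivity: suppose toward contradiction that $\dor_j \succ \dor_k$, $\dor_k \succ \dor_l$, but $\dor_l \succ \dor_j$. Choose any $\covx$ at which no alternative is dominated (if no such $\covx$ exists the claim is vacuous) and any $\pparam$ strictly below the three finite cutoffs $\cmap_{j,k}(\covx)$, $\cmap_{k,l}(\covx)$, $\cmap_{l,j}(\covx)$. The SCP then forces $\util_\pparam(\dor_j,\covx) > \util_\pparam(\dor_k,\covx) > \util_\pparam(\dor_l,\covx) > \util_\pparam(\dor_j,\covx)$, which is impossible.

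Finally, since $\succ$ is a strict total order on a finite set, I would relabel the alternatives so that $\dor_1 \succ \dor_2 \succ \cdots \succ \dor_D$. By the equivalence established above, at any $\covx$ where no alternative is dominated and any $\pparam$ smaller than $\min_{j<k} \cmap_{j,k}(\covx)$, the desired chain $\util_\pparam(\dor_1,\covx) > \util_\pparam(\dor_2,\covx) > \cdots > \util_\pparam(\dor_D,\covx)$ holds, and because $\succ$ does not depend on $\covx$ the same enumeration works simultaneously for all such $\covx$. The main subtlety is the insistence that $L$ really is $\covx$-independent; I would read this directly off the SCP statement, which attaches a single function $\cmap_{L,R}$ to each unordered pair and uses the $\pm\infty$ endpoints of that function to accommodate dominance at particular values of $\covx$. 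Beyond that, the proof is essentially a one-line appeal to the impossibility of cyclic strict inequalities among real numbers.
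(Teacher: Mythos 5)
Your argument is correct: the paper states this fact without a separate proof, treating it as an immediate consequence of Assumption \ref{TA2}, and your construction — reading the $\covx$-independent label $L$ off the SCP for each pair, noting that all cutoffs are finite at any $\covx$ with no dominated alternative, and ruling out intransitivity via the impossibility of a cycle of strict inequalities at a single $(\pparam,\covx)$ — is exactly the formalization the authors have in mind (it mirrors the contradiction argument they do spell out for Fact \ref{A:CO}). No gaps; your handling of the vacuous case and of the $\covx$-independence of the enumeration is the only subtlety, and you address it correctly.
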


We assume that alternatives are enumerated according to the Natural Ordering of Alternatives.\footnote{Under this enumeration, $d_{j}$ will be ordered in either ascending or descending order. In our example from the previous section, since $d_j$ refers to the deductible and $\pparam$ is the risk aversion coefficient, the natural ordering implies $d_1>d_2>\dots>d_D.$} As the next fact shows, for high values of $\pparam$ the preference over the Natural Ordering of Alternatives is reversed.
\begin{fact}[Rank Switch]\label{A:RSw}
	Suppose Assumption \ref{TA2} holds. Consider any $\covx$ such that no alternative is dominated. As $\pparam\rightarrow \infty$, $\util_\pparam(d_1,\covx) < \util_\pparam(\dor_2,\covx) <\cdots<\util_\pparam(d_D,\covx).$
\end{fact}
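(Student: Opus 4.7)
The plan is to derive Fact~\ref{A:RSw} directly from Assumption~\ref{TA2} and Fact~\ref{A:NOA}, working pair by pair. Fix $\covx \in \mathcal{S}$ such that no alternative is dominated at $\covx$, and fix any $j < k$ with $j,k \in \mathcal{D}$. By Fact~\ref{A:NOA}, as $\pparam \to -\infty$ we have $\util_\pparam(d_j,\covx) > \util_\pparam(d_k,\covx)$, which identifies $L = j$ and $R = k$ in Assumption~\ref{TA2}: the SCP cutoff is $\cmap_{j,k}(\covx)$, with $d_j$ strictly preferred for $\pparam < \cmap_{j,k}(\covx)$ and $d_k$ strictly preferred for $\pparam > \cmap_{j,k}(\covx)$.

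The key intermediate step is to rule out $\cmap_{j,k}(\covx) = +\infty$. If the cutoff were $+\infty$, then by the first line of Assumption~\ref{TA2} we would have $\util_\pparam(d_j,\covx) > \util_\pparam(d_k,\covx)$ for every $\pparam \in \R$, which by the definition of dominated alternatives means that $d_k$ is dominated at $\covx$ by $d_j$, contradicting our choice of $\covx$. Since $\cmap_{j,k}(\covx)$ is therefore finite (we do not need to exclude $-\infty$ separately for what follows, but the natural ordering at $-\infty$ also rules it out), we can pick any $\pparam > \cmap_{j,k}(\covx)$ and conclude from the third line of Assumption~\ref{TA2} that $\util_\pparam(d_j,\covx) < \util_\pparam(d_k,\covx)$.

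Finally, taking the intersection over all pairs $j<k$ (there are only finitely many), there exists a threshold $\bar{\pparam}(\covx) \equiv \max_{j<k}\cmap_{j,k}(\covx) < \infty$ such that for every $\pparam > \bar{\pparam}(\covx)$ the strict inequality $\util_\pparam(d_j,\covx) < \util_\pparam(d_k,\covx)$ holds simultaneously for all $j<k$. This gives the full reversed chain $\util_\pparam(d_1,\covx) < \util_\pparam(d_2,\covx) < \cdots < \util_\pparam(d_D,\covx)$ for all sufficiently large $\pparam$, which is precisely the claim.

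The only real obstacle is the argument that the pairwise cutoff cannot be $+\infty$; everything else is a direct bookkeeping application of the SCP and the finiteness of $\mathcal{D}$. This obstacle is resolved by the no-dominance hypothesis, which is exactly the hook provided in the statement.
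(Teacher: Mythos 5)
Your proof is correct. The paper itself states Fact~\ref{A:RSw} without providing a proof (only Fact~\ref{A:CO} is proved in the appendix), but your argument is exactly the intended one: the Natural Ordering fixes which alternative in each pair is preferred on the left of the cutoff, the no-dominance hypothesis forces every pairwise cutoff $\cmap_{j,k}(\covx)$ to be finite, and taking the maximum over the finitely many pairs yields a finite threshold beyond which the entire preference chain is reversed.
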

The SCP also has implications for the relative position of the cutoffs. For readability, we state them for alternatives $\{d_1, d_2,d_3\}$, but they hold for any $\{d_j,d_k,d_l\}$, $j<k<l$.

\begin{fact}[Simple Relative Order of Cutoffs]\label{Fact:simpleorder}
		Suppose Assumption \ref{TA2} holds.  Given $\covx$, if $\cmap_{1,2}(x)<\cmap_{1,3}(x)$, then $\cmap_{1,3}(x)<\cmap_{2,3}(x)$ or both $d_1$ and $d_2$ dominate $d_3$ ($\cmap_{1,3}(x)=\cmap_{2,3}(x)=\infty$).
\end{fact}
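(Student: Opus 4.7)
The plan is to split into cases based on whether $\cmap_{1,3}(\covx)$ is finite. In the finite case, I would evaluate the three utilities at $\pparam = \cmap_{1,3}(\covx)$: by Assumption \ref{TA2} we have $\util_\pparam(\dor_1,\covx) = \util_\pparam(\dor_3,\covx)$, and because $\cmap_{1,2}(\covx) < \cmap_{1,3}(\covx)$, the SCP applied to the pair $(\dor_1,\dor_2)$ gives $\util_\pparam(\dor_2,\covx) > \util_\pparam(\dor_1,\covx)$ at that same $\pparam$. Combining these two facts yields $\util_\pparam(\dor_2,\covx) > \util_\pparam(\dor_3,\covx)$ at $\pparam = \cmap_{1,3}(\covx)$, and SCP on $(\dor_2,\dor_3)$ then forces $\cmap_{1,3}(\covx) < \cmap_{2,3}(\covx)$ strictly.

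If instead $\cmap_{1,3}(\covx) = \infty$, then by the natural-ordering convention (with $L=\min$, $\dor_1$ is preferred to the left of the cutoff), $\dor_1$ is preferred to $\dor_3$ at every finite $\pparam$, i.e., $\dor_1$ dominates $\dor_3$. The goal becomes to show $\dor_2$ also dominates $\dor_3$, since this gives $\cmap_{2,3}(\covx) = \infty = \cmap_{1,3}(\covx)$, which is the second clause of the fact. Note $\cmap_{1,2}(\covx) < \infty$ in this case. If $\cmap_{1,2}(\covx) = -\infty$, then $\dor_2$ dominates $\dor_1$ and transitivity of strict inequality immediately gives that $\dor_2$ dominates $\dor_3$. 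If $\cmap_{1,2}(\covx)$ is finite, then for every $\pparam > \cmap_{1,2}(\covx)$ we have $\util_\pparam(\dor_2,\covx) > \util_\pparam(\dor_1,\covx) > \util_\pparam(\dor_3,\covx)$, and SCP applied to $(\dor_2,\dor_3)$ allows at most one sign change in $\pparam \mapsto \util_\pparam(\dor_2,\covx) - \util_\pparam(\dor_3,\covx)$, forcing $\cmap_{2,3}(\covx) = \infty$.

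The main obstacle is the infinite-cutoff sub-case, specifically arguing that $\dor_2$ dominates $\dor_3$ on the left of $\cmap_{1,2}(\covx)$. Direct inequality chasing fails there, because $\util_\pparam(\dor_1,\covx) > \util_\pparam(\dor_2,\covx)$ together with $\util_\pparam(\dor_1,\covx) > \util_\pparam(\dor_3,\covx)$ is not enough to compare $\dor_2$ with $\dor_3$ pointwise. The resolution uses SCP structurally rather than pointwise: the difference $\util_\pparam(\dor_2,\covx) - \util_\pparam(\dor_3,\covx)$ changes sign at most once, so sustained strict positivity on the half-line $(\cmap_{1,2}(\covx),\infty)$ rules out any sign change anywhere and extends dominance to all $\pparam$. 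Together with the finite-cutoff case, this delivers the stated dichotomy.
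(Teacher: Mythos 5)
Your proof is correct. The paper states Fact \ref{Fact:simpleorder} without an explicit proof; the only cutoff-ordering argument it writes out is for Fact \ref{A:CO}, which assumes away dominated alternatives and proceeds by contradiction, picking a $\pparam$ strictly between two cutoffs and chaining a cyclic sequence of strict preferences. Your finite-cutoff case is the same transitivity-plus-single-crossing idea, just evaluated at the tie point $\pparam=\cmap_{1,3}(\covx)$ (where one relation is an equality) rather than on an open interval, which is slightly cleaner. The real content you add is the infinite-cutoff branch, which is precisely what separates Fact \ref{Fact:simpleorder} from Fact \ref{A:CO}: you correctly observe that transitivity alone only yields $\util_\pparam(\dor_2,\covx)>\util_\pparam(\dor_3,\covx)$ on $(\cmap_{1,2}(\covx),\infty)$, and that the single-crossing structure of the pair $(\dor_2,\dor_3)$ --- at most one sign change, from $\dor_2$ preferred to $\dor_3$ preferred --- is then what forces $\cmap_{2,3}(\covx)=\infty$ and extends the dominance to all of $\R$. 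One cosmetic remark: you should note that $\cmap_{1,3}(\covx)=-\infty$ is ruled out by the hypothesis $\cmap_{1,2}(\covx)<\cmap_{1,3}(\covx)$, so your two cases are exhaustive; this is immediate but worth a clause.
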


The next fact concerns the relative order of cutoffs for non-dominated alternatives. Before stating it, it is convenient to define Never-the-First-Best Alternatives.

\begin{definition} [Never-the-First-Best]\label{A:NFB} Given $\covx$, alternative $d_j$ is Never-the-First-Best in $\cal{D}$ if for every $\pparam$ there exists another alternative $d_k(\pparam)$ in $\cal{D}$ such that $\util_\pparam(d_k(\pparam),\covx)>\util_\pparam(d_j,\covx)$.
\end{definition}

\begin{fact}[Cutoff Relative Order]\label{A:CO}
	Suppose that Assumption \ref{TA2} holds. If, given $\covx$, alternatives $d_1$, $d_2$, and $d_3$ are not dominated, then one and only one of the following cases holds:
	\begin{enumerate}
		\item [(i)] $\cmap_{1,2}(x)<\cmap_{1,3}(x)<\cmap_{2,3}(x)$ and $d_2$ is the first best in $\{d_1,d_2,d_3\}$, $\forall \pparam \in (\cmap_{1,2}(x),\cmap_{1,3}(x))$;
		\item [(ii)] $\cmap_{1,2}(x)>\cmap_{1,3}(x)>\cmap_{2,3}(x)$ and $d_2$ is Never-the-First-Best in $\{d_1,d_2,d_3\}$;
		\item [(iii)]$\cmap_{1,2}(x)=\cmap_{1,3}(x)=\cmap_{2,3}(x)$ and $d_2$ is strictly worse than either $d_1$ or $d_3$ for all $\pparam$ except for $\pparam=\cmap_{1,2}(x)$ where there is a three-way tie among these alternatives.
	\end{enumerate}	
\end{fact}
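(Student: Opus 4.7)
The plan is to trichotomize according to the relative order of $\cmap_{1,2}(x)$ and $\cmap_{1,3}(x)$: strict inequality in one direction yields case (i), the reverse inequality yields case (ii), and equality yields case (iii). In each branch, the key technique is to pin down $\cmap_{2,3}(x)$ by evaluating Assumption \ref{TA2} at carefully chosen values of $\pparam$ (namely at the other cutoffs) and exploiting transitivity of the strict preference order among $d_1,d_2,d_3$. Because the three alternatives are non-dominated at $\covx$, each pairwise cutoff is finite and distinct from $\pm\infty$, so these evaluations are meaningful.

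For case (i), suppose $\cmap_{1,2}(x)<\cmap_{1,3}(x)$. Evaluating at $\pparam=\cmap_{1,2}(x)$, the SCP applied to $(d_1,d_3)$ gives $\util_\pparam(d_1,\covx)>\util_\pparam(d_3,\covx)$, hence $\util_\pparam(d_2,\covx)=\util_\pparam(d_1,\covx)>\util_\pparam(d_3,\covx)$, so by SCP applied to $(d_2,d_3)$ we obtain $\cmap_{1,2}(x)<\cmap_{2,3}(x)$. Evaluating next at $\pparam=\cmap_{1,3}(x)$, since $\pparam>\cmap_{1,2}(x)$ we have $\util_\pparam(d_2,\covx)>\util_\pparam(d_1,\covx)=\util_\pparam(d_3,\covx)$, and thus $\cmap_{1,3}(x)<\cmap_{2,3}(x)$. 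This yields the chain $\cmap_{1,2}(x)<\cmap_{1,3}(x)<\cmap_{2,3}(x)$. The first-best claim on $(\cmap_{1,2}(x),\cmap_{1,3}(x))$ follows by combining the three SCP inequalities at any such $\pparam$: $\util_\pparam(d_2,\covx)>\util_\pparam(d_1,\covx)>\util_\pparam(d_3,\covx)$.

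Case (ii), where $\cmap_{1,2}(x)>\cmap_{1,3}(x)$, is handled by the mirror-image argument. Evaluating at $\pparam=\cmap_{1,3}(x)$ yields $\util_\pparam(d_1,\covx)>\util_\pparam(d_2,\covx)$, hence $\util_\pparam(d_3,\covx)>\util_\pparam(d_2,\covx)$, giving $\cmap_{2,3}(x)<\cmap_{1,3}(x)$; evaluating at $\pparam=\cmap_{1,2}(x)$ yields $\cmap_{2,3}(x)<\cmap_{1,2}(x)$ and, combined with the previous step, the chain $\cmap_{2,3}(x)<\cmap_{1,3}(x)<\cmap_{1,2}(x)$. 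To conclude that $d_2$ is Never-the-First-Best, I would sweep over the four intervals determined by these cutoffs and, in each, apply the three binary SCP comparisons to verify that one of $d_1$ or $d_3$ strictly dominates $d_2$.

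The main obstacle is case (iii), because Fact \ref{Fact:simpleorder} only speaks to strict inequalities and one needs to rule out any asymmetry between $\cmap_{2,3}(x)$ and the common value $c\equiv\cmap_{1,2}(x)=\cmap_{1,3}(x)$. The key observation is that at $\pparam=c$ we have $\util_\pparam(d_1,\covx)=\util_\pparam(d_2,\covx)=\util_\pparam(d_3,\covx)$, so by SCP applied to $(d_2,d_3)$ we must have $\cmap_{2,3}(x)=c$, giving the three-way tie. To show that $d_2$ is strictly worse than either $d_1$ or $d_3$ off this tie, observe that for $\pparam<c$ the SCP gives $\util_\pparam(d_1,\covx)>\util_\pparam(d_2,\covx)$, and for $\pparam>c$ it gives $\util_\pparam(d_3,\covx)>\util_\pparam(d_2,\covx)$. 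Finally, the three cases are mutually exclusive because the strict cutoff orderings in (i)–(ii) contradict the equality in (iii), completing the proof.
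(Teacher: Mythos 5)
Your proof is correct, and it reaches the conclusion by a somewhat different route than the paper. The paper trichotomizes on the relative order of $\cmap_{1,2}(x)$ and $\cmap_{2,3}(x)$ (if $\cmap_{1,2}(x)<\cmap_{2,3}(x)$ then $d_2$ is first best on the interval between them, otherwise it is never first best) and then locates $\cmap_{1,3}(x)$ by contradiction: it posits a misordering such as $\cmap_{1,3}(x)<\cmap_{1,2}(x)<\cmap_{2,3}(x)$ and exhibits a preference cycle $\util_\pparam(d_3,\covx)>\util_\pparam(d_1,\covx)>\util_\pparam(d_2,\covx)>\util_\pparam(d_3,\covx)$ on the open interval between the two smallest cutoffs. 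You instead trichotomize on $\cmap_{1,2}(x)$ versus $\cmap_{1,3}(x)$ and pin down $\cmap_{2,3}(x)$ directly by evaluating at the cutoffs themselves, where the indifference $\util_\pparam(d_1,\covx)=\util_\pparam(d_2,\covx)$ (or $=\util_\pparam(d_3,\covx)$) can be chained with a strict SCP inequality to place the third cutoff without any contradiction argument. Both proofs ultimately rest on transitivity of the strict preference order at well-chosen values of $\pparam$, so neither is more general; but your version has two small advantages: it is constructive rather than by contradiction, and it treats the degenerate case (iii) explicitly (deducing $\cmap_{2,3}(x)=c$ from the three-way indifference at $\pparam=c$), whereas the paper dismisses the remaining configurations with ``the same logic.'' Your observation that non-dominance forces all three cutoffs to be finite is also a useful piece of bookkeeping that the paper leaves implicit.
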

Fact \ref{A:CO} is a convenient way to distill and exploit the SCP. In particular, for any $\covx$, the complete preference order of the alternatives is known for all DMs as well as the identity (of the preference parameter) of the DM indifferent between any two alternatives $d_j$ and $d_k$. %\sout{In addition, generically this DM either prefers some other alternative $d_l$ to both and $d_j$ and $d_k$, or she prefers both $d_j$ and $d_k$ to $d_l$.}

\section{Identification}\label{sec:idchallenge}
The classic identification argument for discrete choice under full consideration rests on the following four canonical assumptions.
\phantomsection
\begin{assumptionSP}{I0}\label{IA0}
The random variable (or vector) $x$ is \textbf{independent} of preferences.
\end{assumptionSP}
\vspace{-1em}
\phantomsection
\begin{assumptionSP}{I1}\label{IA1}
$\exists \covxs \subset \mathcal{S}$ s.t. $\cmap_{1,2}(x)$ \textbf{covers the support} of $\pparam$: $[0,\bar{\pparam}]\subset \{\cmap_{1,2}(x), x\in \covxs\}$.
\end{assumptionSP}
\vspace{-1em}
\phantomsection
\begin{assumptionSP}{I2}\label{IA2}
Consideration is \textbf{independent} of preferences.
\end{assumptionSP}
\vspace{-1em}
\phantomsection
\begin{assumptionSP}{I3}\label{IA3}
Consideration is \textbf{independent} of $\covx$.
\end{assumptionSP}
The last two conditions are vacuous in the standard full consideration model, while the first two are typically stated as data requirements.

We first discuss how to obtain identification and the role of Assumptions \ref{IA0}-\ref{IA3} in the simplest case of two alternatives (Section \ref{S:two alt}). We then consider the general model with $D$ alternatives. Table \ref{OrgTable} organizes our results by assumptions imposed, the consideration mechanism assumed, data availability, and the theorems' conclusions. Theorems \ref{T:Benchmark}-\ref{T:Basic-G} in Section \ref{S:3Alt} demonstrate that the preference distribution and some features of the consideration mechanism are identified with a single excluded regressor. Next, we show that alternative-specific variation allows for identification of both the preference distribution and the consideration mechanism when consideration depends on preferences and one of the excluded regressors (Theorem \ref{Broad-alternative specific} and Corollary \ref{Broad-alternative specific_corollary} in Section \ref{subsec:altspec}). We discuss testing for limited consideration in Section \ref{s:testing}. We then turn to the ARC model in Section \ref{S:ARC}. We show that the full model is identified with a single excluded regressor (Theorem \ref{T:ARC_independent}). Moreover, identification attains for a particular case where consideration depends on preferences (Theorem \ref{T:ARC_dependent_1}). Finally, Theorem \ref{Broad-alternative specific_ARC} shows that with alternative-specific variation, identification attains when consideration of each alternative depends both on preferences and its own regressor, without requiring full support.

\begin{table}[!h]\centering

\scriptsize

	\caption{Identification Theorems}
		\label{OrgTable}
	\begin{tabular}{|l| c c c c|c c c| c c  | c c|}
		\hline \hline \\[-1em]
		              & \multicolumn{4}{c|}{Assumptions}   &\multicolumn{3}{c|}{Consideration Mechanism}  & \multicolumn{2}{c|}{Excluded Regressor} & \multicolumn{2}{c|}{Identification Result} \\[0.5em]	
		              &&&&&Generic& Loosely & ARC&Single& Alternative Specific & Preferences & Consideration \\[0em]
		              & I0      & I1      &  I2  & I3      & &Ordered &      &     &&  &   \\[0.5em]
		\hline \\[-1em]
 		Theorem 1     & \checkmark  & \checkmark  &\checkmark& \checkmark &  \checkmark &        &          & \checkmark&                               & \checkmark & $\checkmark^1$ \\[0.5em]
 		 Theorem 2     & \checkmark  & \checkmark  &      & \checkmark &  \checkmark          &          & &   \checkmark&                            & \checkmark & $\checkmark^1$  \\[0.5em]
 	    Theorem 3     & \checkmark  &         &\checkmark& \checkmark &         & \checkmark &          & \checkmark&                               & $\checkmark$ & \\[0.5em]
 	\hline \\[-1em]
     	Theorem 4     & \checkmark  &         &      & \checkmark &  \checkmark          &          &      &&\checkmark                         & \checkmark & \checkmark  \\[0.5em]
      	Corollary 1  & \checkmark  &         &      &        &  \checkmark          &          &       &&\checkmark                         & \checkmark & \checkmark  \\[0.5em]  	    	
     	\hline \\[-1em]
     	Theorem 5     & \checkmark  & \checkmark  &\checkmark& \checkmark &    &             &\checkmark   &   \checkmark&                               & \checkmark & \checkmark \\[0.5em]      	
     	Theorem 6     & \checkmark  & \checkmark  &      & \checkmark &                 &&\checkmark     & \checkmark&                               & \checkmark & \checkmark   \\[0.5em]
     	Theorem 7     & \checkmark  &         &      &        &               &  &\checkmark     &       &\checkmark                         & \checkmark & \checkmark   \\[0.5em]
     \hline \hline
	\end{tabular}
\caption*{\footnotesize  \begin{flushleft} Note 1: In Theorems 1 and 2 we identify features of the consideration mechanism.  \\
\end{flushleft}}
\end{table}
\subsection{The Role of the Canonical Assumptions}\label{S:two alt}

Let the choice set be binary and suppose that the DM considers both alternatives. In addition, let $\covx$ be a scalar so that there is a single excluded regressor. Under Assumptions \ref{TA0}-\ref{TA2} and \ref{IA0}-\ref{IA3}, any realization of $x$ is associated with a single conditional moment in the data:
\begin{equation*}
	\Pr(d=d_1|x) = \int_{0}^{\cmap_{1,2}(\covx)}dF = F(\cmap_{1,2}(x)),
\end{equation*}
because the DM chooses $d_1$ if and only if her preference parameter is less than $\cmap_{1,2}(x)$.
The distribution $F(\cdot)$ is non-parametrically identified, since for any $\pparam$ on the support there is an $x$ such that $\pparam=\cmap_{1,2}(x)$.

We emphasize two points. First, given a family of utility functions, for any $x$ the value of the cutoff can be solved for. Hence, the function $\cmap_{1,2}(x)$ (and its derivatives) can be treated as data. Second, Assumption \ref{IA1} requires that the cutoff reaches both ends of the support: there exist $x^0$ and $x^1$ such that $F(\cmap_{1,2}(x^0))=0$ and $F(\cmap_{1,2}(x^1))=1$.

Turning to limited consideration, suppose that $d_1$ is considered with probability $0<\aparam_{1}\leq 1$, and whenever it is considered so is $d_2$.\footnote{With two alternatives this implies that $d_2$ is always considered.} Then, $d_1$ is chosen when it is considered and it is preferred to $d_2$, yielding:
\begin{align}
	\Pr(d=d_1|x) = \aparam_1 F(\cmap_{1,2}(x)) \quad \text{and} \quad
	\frac{d\Pr(d=d_1|x)}{dx} = \aparam_1 f(\cmap_{1,2}(x))\frac{d\cmap_{1,2}(x)}{dx}.\label{eq:L0}
\end{align}

At first glance, it appears that the distribution of preferences is identified up to a constant. Yet, at the boundary of the support   $\Pr(d=d_1|x^1) = \aparam_1 F(\cmap_{1,2}(x^1)) = \aparam_1$, so that $\aparam_{1}$ is identified. Once $\aparam_1$ is known, the distribution $F(\cdot)$ is identified by varying $\cmap_{1,2}(\covx)$ over the support of $\pparam$, similar to the full consideration case. We now explore what happens to identification if Assumptions \ref{IA0}--\ref{IA3} are not satisfied.

\textbf{Assumption \ref{IA0} fails:} the variation in $x$ is not independent of preferences. Then $F(\cdot)$ is not non-parametrically identified under either full or limited consideration.

\textbf{Assumption \ref{IA1} fails:} the variation in $\covx$ is such that $\cmap_{1,2}(x)$ only covers an interval $[\pparam^{l},\pparam^{u}] \subset [0,\bar{\pparam}]$. Then the data provide no information about preferences outside of the interval $[\pparam^{l},\pparam^{u}]$.  Inside the interval, the conditional distribution $F(\pparam|\pparam \in [\pparam^{l},\pparam^{u}])=\frac{F(\pparam)-F(\pparam^{l})}{F(\pparam^{u})-F(\pparam^{l})}$ is identified under both limited and full consideration. The consideration probability (and hence the scale of $F(\cdot)$) is partially identified and satisfies the bounds $\Pr(d=d_1|\covx^{u}) \leq \aparam_1 \leq 1$, where $\covx^{u}$ is such that $\cmap_{1,2}(\covx^{u})=\pparam^{u}$. Point identification can be attained if an additional assumption is maintained to pin down the scale of $F(\cdot)$. For example, one can simply assume full consideration and set $\aparam_{1}=1$.%\footnote{Alternatively, one can make an equal tail assumption such as $F(\pparam^{l})=1-F(\pparam^{u})$. This equal tail assumption implies the two expressions for $\Pr(d=d_1|x^{l})$ and $\Pr(d=d_1|x^{u})$ define two equations in the two unknowns $\aparam_{1}$ and $F(\pparam^{u})$.  It is easy to show that these equations have a unique solution.}

\textbf{Assumption \ref{IA2} fails:} $\aparam_{1}$ depends on preferences and this dependence is arbitrary. Then identification breaks down completely as there is one data moment to identify two unknown objects. However, since we assume -- as it is common in the econometrics literature -- that the density function of $\pparam$ is continuous and strictly positive, identification is possible for some types of dependence between consideration and preferences. Suppose there are two consideration types:
\begin{align*}
\aparam_1(\pparam)&=
\begin{cases}
 \underline{\aparam}_1, & \forall \pparam \in [0,\pparam^{*})\\
 \overline{\aparam}_1, & \forall \pparam \in [\pparam^{*},\bar{\pparam}]
\end{cases},
\end{align*}
where $\pparam^{*}$ is an unobserved breakpoint. We show that $\underline{\aparam}_1$, $\overline{\aparam}_1$, and $\pparam^{*}$ are identified. First, the product $\aparam_1(\pparam)f(\pparam)$ is identified under Assumptions \ref{IA0}, \ref{IA1}, and \ref{IA3}, since
\begin{equation}\label{eq:L1}
\frac{d\Pr(d=d_1|x)}{dx}=\frac{d}{dx}\left(\int_{0}^{\cmap_{1,2}(\covx)}\aparam_1(\pparam)dF\right)= \aparam_1(\pparam)
f(\pparam)\frac{d \cmap_{1,2}(x)}{dx}
\end{equation}
at $\pparam=\cmap_{1,2}(x)$. The product $\aparam_1(\pparam)f(\pparam)$ is discontinuous only at the point $\pparam^{*}$. Thus, the breakpoint is identified by continuously varying $c_{1,2}(\covx)$ across $[0,\bar{\pparam}].$ Next, the ratio $\frac{\underline{\aparam}_1}{\overline{\aparam}_1}$ is identified by the ratio of the right and left derivatives of $\Pr(d=d_1|x)$ at the breakpoint $\covx^{*}$ ($\pparam^{*}=\cmap_{1,2}(\covx^{*})$). The quantity $F(\pparam^{*})$ is identified by the ratio:
\begin{align*}
\frac{\Pr\left(d=d_1|\covx^{*}\right)}{\Pr(d=d_1|\covx^1)-\Pr\left(d=d_1|\covx^{*}\right)} & =
\frac{\underline{\aparam}_1}{\overline{\aparam}_1}\cdot
\frac{F(\pparam^{*})}{1-F(\pparam^{*})}.
\end{align*}
Hence, $\underline{\aparam}_1$ and $\overline{\aparam}_1$ are identified. Identification of $F(\cdot)$ on the entire support follows from Assumption \ref{IA1}. The same argument above applies if the probability of considering an alternative discretely jumps in $\covx$ (i.e., \textbf{Assumption \ref{IA3} fails}). Concretely, suppose there is a breakpoint in $\aparam_{1}(\covx)$ at $\covx^{*}$ and let $\pparam^{*}=\cmap_{1,2}(x^{*})$. The breakpoint $\covx^{*}$ is identified by the point of discontinuity in Equation \eqref{eq:L1}, and the rest follows.

To summarize the case of the binary choice set, the only seemingly real difference in identification is that without large support the scale of the preference distribution $F(\cdot)$ is partially identified under limited consideration, while it is assumed to be known under full consideration. The key to identification is a one-to-one mapping from a data moment, $\Pr\left(d=d_1|\covx\right)$, and the preference distribution $F(\cdot)$ at a single point on the support, $\cmap_{1,2}(x)$. As we will show next, even with just a single excluded regressor, Assumptions \ref{IA0}-\ref{IA3} allow for such a mapping to be constructed for a generic consideration mechanism and a choice set of arbitrary size.

\subsection{Single Common Excluded Regressor}\label{S:3Alt}

We start by introducing general notation for consideration probabilities.
\begin{definition}
	Let $\mathcal{Q}_{\pparam}^{\covx}(\cal{K})$  be the probability that, given $\covx$, the DM with preference parameter $\pparam$ draws consideration set $\cal{K} \subset \Dc$ conditional on $\covx$.
	
	Let $\mathcal{O}_{\pparam}^{\covx}(\mathcal{A};\mathcal{B})$ be the probability that, given $\covx$, every alternative in set $\mathcal{A}$ is in the consideration set and every alternative in set $\mathcal{B}$ is not for the DM with preference parameter $\pparam$:
	\[ \mathcal{O}_{\pparam}^{\covx}(\mathcal{A};\mathcal{B})\equiv\sum_{\Kc: ~\mathcal{A} \subset \Kc, ~ \mathcal{B} \cap \Kc =\emptyset}\mathcal{Q}_{\pparam}^{\covx}(\Kc).
	\]
\end{definition}
The subscript is suppressed when consideration does not depend on preferences, and the superscript is suppressed when it does not depend on the excluded regressor(s).

To ease exposition, we build our discussion around a choice set with three alternatives, $ \Dc = \{d_1,d_2,d_3\}$, such that $\cmap_{1,2}(x)<\cmap_{1,3}(x)<\cmap_{2,3}(x)$ for all $\covx$. That is, by Fact \ref{Fact:simpleorder}, if $U_{\pparam}(d_1,\covx)>U_{\pparam}(d_2,\covx)$ then $U_{\pparam}(d_1,\covx)>U_{\pparam}(d_3,\covx)$ for all $\covx$.  Suppose consideration is independent of preferences and of the excluded regressor. Then the choice frequencies of $d_1$ and $d_3$ are
\begin{align*}
	\Pr(d=d_1|x) = &\mathcal{O}(\{d_1,d_2\};\emptyset) F(\cmap_{1,2}(x)) +\mathcal{O}(\{d_1,d_3\};d_2)F(\cmap_{1,3}(\covx))+\mathcal{O}(d_1;\{d_2,d_3\});\\
	\\
		\Pr(d=d_3|x) = &\mathcal{O}(\{d_1,d_3\};d_2)(1-F(\cmap_{1,3}(\covx)))
		+\mathcal{O}(\{d_2,d_3\};\emptyset) (1-F(\cmap_{2,3}(x))) +\mathcal{O}(d_3;\{d_1,d_2\}).
\end{align*}
Consider the expression for $\Pr(d=d_1|x)$. Its RHS has three terms. The first term captures the case when $d_1$ is considered along with $d_2$, which happens with probability $\mathcal{O}(\{d_1,d_2\};\emptyset).$ Given the relative position of the cutoffs, whether $d_3$ is considered or not is irrelevant. The DM will choose $d_1$ over $d_2$ if and only if her preference parameter is below $\cmap_{1,2}(x)$. The second term captures the case when $d_1$ is considered along with $d_3$, but $d_2$ is not considered, which happens with probability $\mathcal{O}(\{d_1,d_3\};d_2)$. Then the relevant cutoff for choosing $d_1$ is $\cmap_{1,3}(x)$. %Third, with probability $\mathcal{O}(d_1;\{d_2,d_3\})$, $d_1$ is the only alternative considered and hence is chosen regardless of the DMs' risk aversion.
Third, when $d_1$ is the only alternative considered, it is chosen regardless of the DM's risk aversion. This event occurs with probability $\mathcal{O}(d_1;\{d_1,d_2\})$.

Since there are two cutoffs, $\cmap_{1,2}(x)$ and $\cmap_{1,3}(x)$, that enter the moment $\Pr(d=d_1|x)$, there is not, without additional assumptions, a one-to-one mapping between the moment and the preference distribution at one point on the support, as it was the case in Section \ref{S:two alt}. That is, as $\covx$ changes, the observed choice frequency of $d_1$ may change because of two types of marginal DMs: those indifferent between $d_1$ and $d_2$, and those indifferent between $d_1$ and $d_3$.  This is apparent in the following derivative:
\begin{align}\label{eq:L2}
	\frac{d  \Pr(d=d_1|x)}{dx}= &\mathcal{O}(\{d_1,d_2\};\emptyset)f(\cmap_{1,2}(x))\frac{d \cmap_{1,2}(x)}{dx}+
	\mathcal{O}(\{d_1,d_3\};d_2)\frac{d \cmap_{1,3}(x)}{dx}.
\end{align}
The corresponding equation for $\Pr(d=d_3|x)$ does not immediately help, as it brings about $f(\cdot)$ evaluated at yet another cutoff, $\cmap_{2,3}(x)$:
\begin{align}\label{eq:L23}
	\frac{d  \Pr(d=d_3|x)}{dx}= &-\mathcal{O}(\{d_1,d_3\};d_2)\frac{d \cmap_{1,3}(x)}{dx}-\mathcal{O}(\{d_2,d_3\};\emptyset)\frac{d \cmap_{2,3}(x)}{dx}.
\end{align}
\subsubsection{Identification with Large Support}
When Assumption \ref{IA1} holds, we can construct a one-to-one mapping sequentially. The algorithm for doing so consists of four steps. First, we rewrite Equation \eqref{eq:L2} as
\begin{equation}\label{eq:L_2-1}
\frac{d\Pr(d=d_1|\covx)}{dx}= \hat{f}(\cmap_{1,2}(\covx))\frac{d\cmap_{1,2}(\covx)}{dx}
+\phi\hat{f}(\cmap_{1,3}(\covx))\frac{d\cmap_{1,3}(\covx)}{dx},
\end{equation}
	
where $\phi \equiv\frac{ \mathcal{O}(\{d_1,d_3\};d_2)}{\mathcal{O}(\{d_1,d_2\};\emptyset)}$ and $\hat f(\pparam) \equiv \mathcal{O}(\{d_1,d_2\};\emptyset) f(\pparam)$.
Second, for $\pparam$'s near the far end of the support, we can find $\covx$ and $\covx'$ such that $\pparam= \cmap_{1,2}(x)<\bar{\pparam}<\cmap_{1,3}(x)$ and $\pparam=\cmap_{1,3}(x')<\bar{\pparam}<\cmap_{2,3}(x')$. For any such pair, $f(\cmap_{1,3}(x))=f(\cmap_{2,3}(x'))=0$, and, hence, by Equations \eqref{eq:L2} and \eqref{eq:L23}:
	\begin{align*}
		\frac{d\Pr(d=d_1|\covx)}{dx}
		=\textcolor{white}{\phi}\hat{f}(\pparam)\frac{d\cmap_{1,2}(x)}{dx} \quad \text{and} \quad
		\frac{d\Pr(d=d_3|\covx')}{dx}
		= -\phi\hat{f}(\pparam) \frac{d\cmap_{1,3}(x')}{dx}.
	\end{align*}
The first equation identifies $\hat{f}(\pparam)$, while the ratio of the two equations identifies $\phi$. Third, whenever $\hat{f}(\cmap_{1,3}(\covx))$ is known, $\hat{f}(\cmap_{1,2}(\covx))$ is uniquely pinned down by Equation \eqref{eq:L_2-1}. Because $\cmap_{1,2}(\covx)<\cmap_{1,3}(\covx)$, $\forall \covx$, we can learn $\hat{f}(\cdot)$ sequentially:
\begin{enumerate}
	\item Take an $\covx^1$ such that $\hat{f}(\cmap_{1,3}(\covx^1))$ is already known, learn $\hat{f}(\cmap_{1,2}(x^1))$;
	\item Take $\covx^{2}$ such that $\cmap_{1,3}(\covx^{2})=\cmap_{1,2}(\covx^{1})$, learn $\hat{f}(\cmap_{1,2}(\covx^{2}))$;
	\item Let $\covx^{1}=\covx^{2}$. Repeat Step 2 until the entire support has been covered, i.e., $\cmap_{1,2}(\covx^{2})\leq 0$.
\end{enumerate}
For this approach to work, $ \cmap_{1,3}(\covx)$ cannot ``catch up'' to $\cmap_{1,2}(\covx)$ (i.e., as assumed, $\cmap_{1,2}(x) < \cmap_{1,3}(\covx)$ whenever $\cmap_{1,2}(x)$ is on the support). This requires that DMs with preference coefficients on the support are never indifferent between $d_1$ and two other alternatives -- i.e. there are no three way ties involving $d_1$. %\footnote{This situation arises because all cutoffs move together with the single common excluded regressor, e.g., price in our example from Section \ref{sec:mainexample} and empirical application in Section \ref{sec:appplication}.}
Fourth, integration of $\hat{f}(\pparam)$ over the entire support recovers the scale and the true density. Indeed,
\[
\int_{0}^{\bar{\pparam}} \hat{f}(\pparam)d\pparam=\mathcal{O}(\{d_1,d_2\};\emptyset)\int_{0}^{\bar{\pparam}} f(\pparam)d\pparam =\mathcal{O}(\{d_1,d_2\};\emptyset)
\]
pins down $\mathcal{O}(\{d_1,d_2\};\emptyset)$, and hence $f(\cdot)$ is identified. A generalization of this strategy yields our first formal result.

\begin{theorem}\label{T:Benchmark}
	Suppose Assumptions \ref{IA0}, \ref{IA2}, \ref{IA3}, \ref{TA0}-\ref{TA2} hold, and
	\begin{enumerate}
		\item The consideration mechanism is s.t. with positive probability $d_1$ and $d_2$ are considered together;
       	\item Assumption \ref{IA1} holds for $\mathcal{X} \subset \mathcal{S}$ s.t. $\forall \covx \in \covxs$
		\[
		U_{\pparam}(d_1,x)>U_{\pparam}(d_j,\covx) \Rightarrow U_{\pparam}(d_1,\covx)>U_{\pparam}(d_{j+1},\covx), \quad \forall j>1.
		\]
	\end{enumerate}
	Then $f(\cdot)$ is identified and so are $\mathcal{O}(d_1;\emptyset)$ and $\mathcal{O}(\{d_1,d_2\};\emptyset)$.
	For $j>2$, if $\Pr(d=d_j|\covx)>0$ for some $\covx$, then $\mathcal{O}(\{d_1,d_j\};\{d_2,\dots, d_{j-1}\})$ is identified.
\end{theorem}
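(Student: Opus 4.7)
The plan is to extend the four-step algorithm sketched for $D=3$ to the general case. Condition~2 together with iterated application of Fact~\ref{A:CO} implies the full cutoff ordering $c_{1,2}(x) < c_{1,3}(x) < \cdots < c_{1,D}(x)$ on $\mathcal{X}$, and moreover that for every non-dominated $j \ge 2$ one has $c_{k,j}(x) > c_{1,j}(x)$ for all $1 < k < j$ and $c_{j,k}(x) > c_{1,j}(x)$ for all $k > j$. Using SCP with this ordering, the choice probability of $d_1$ decomposes as
\[
\Pr(d=d_1 \mid x) = \sum_{j=2}^{D} A_j\, F(c_{1,j}(x)) + \mathcal{O}(d_1;\{d_2,\ldots,d_D\}),
\]
where $A_j \equiv \mathcal{O}(\{d_1,d_j\};\{d_2,\ldots,d_{j-1}\})$, because conditional on $d_1 \in \mathcal{K}$ the smallest-indexed non-$d_1$ element of $\mathcal{K}$ binds the DM's preference for $d_1$ at cutoff $c_{1,j^*}(x)$. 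A parallel accounting for $\Pr(d=d_j \mid x)$ shows that the coefficient of $F(c_{1,j}(x))$ equals $-A_j$: the same event ``$d_1, d_j \in \mathcal{K}$, $d_2,\ldots,d_{j-1} \notin \mathcal{K}$'' yields $d_1$ as the left-binding competitor of $d_j$ regardless of which $d_k$, $k>j$, is in $\mathcal{K}$.

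Next, I would differentiate in $x$ and isolate a single cutoff at a time using Assumption~I1. Choose $x_j^{\top} \in \mathcal{X}$ such that $c_{1,j}(x_j^{\top}) = \pparam$ lies just below $\bar{\pparam}$. By the cutoff ordering, every other cutoff involving $d_j$ exceeds $\bar{\pparam}$, so $f$ vanishes at those cutoffs and
\[
\frac{d\Pr(d=d_j \mid x)}{dx}\bigg|_{x=x_j^{\top}} = -A_j\, f(\pparam)\,\frac{dc_{1,j}(x)}{dx}\bigg|_{x=x_j^{\top}}.
\]
The analogous equation for $j=2$ applied to $\Pr(d=d_1 \mid x)$ yields $A_2 f(\pparam)$ near $\bar{\pparam}$, so the ratio $\phi_j \equiv A_j/A_2$ is identified for every $j \ge 2$ such that $\Pr(d=d_j \mid x) > 0$ for some $x$. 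Condition~1 ensures $A_2 > 0$, so the ratio is well-defined.

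Defining $\hat{f}(\pparam) \equiv A_2 f(\pparam)$, the derivative of $\Pr(d=d_1 \mid x)$ becomes $\sum_{j=2}^{D} \phi_j\, \hat{f}(c_{1,j}(x))\, dc_{1,j}(x)/dx$, a single linear equation in the unknown $\hat{f}(c_{1,2}(x))$ once $\hat{f}$ is known at the larger arguments $c_{1,j}(x)$, $j \ge 3$. Because the cutoffs are strictly ordered, I can march $\hat{f}$ downward starting from the neighborhood of $\bar{\pparam}$ where only $j=2$ contributes, iteratively solving for $\hat{f}$ at successively lower arguments exactly as in the three-alternative sketch above. Integration on $[0,\bar{\pparam}]$ then yields $A_2 = \int_0^{\bar{\pparam}} \hat{f}(\pparam)\,d\pparam$, from which $f$ and every $A_j = \phi_j A_2$ are pinned down. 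Evaluating the expression for $\Pr(d=d_1 \mid x)$ at any $x$ and subtracting $\sum_j A_j F(c_{1,j}(x))$ identifies $\mathcal{O}(d_1;\{d_2,\ldots,d_D\})$, and $\mathcal{O}(d_1;\emptyset) = \sum_{j=2}^{D} A_j + \mathcal{O}(d_1;\{d_2,\ldots,d_D\})$.

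The main obstacle is the combinatorial bookkeeping in deriving the coefficient $-A_j$ in $\Pr(d=d_j \mid x)$: each eligible consideration set produces an interval in $\pparam$ of the form $(c_{1,j}(x),\, c_{j,k^*}(x))$ with $k^* = \min\{k>j: d_k \in \mathcal{K}\}$, and summing over $k^*$ must collapse the $F(c_{1,j}(x))$ piece to exactly $-A_j$ while the upper-cutoff pieces vanish at $x=x_j^{\top}$. Both the collapse and the vanishing hinge on the strict inequalities obtained by iterating Fact~\ref{A:CO} on all relevant triples of non-dominated alternatives, together with continuity of the cutoffs ensuring an open set of $x$ on which the isolation trick is valid.
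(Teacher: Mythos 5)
Your proposal follows the paper's proof essentially step for step: the same decomposition $\Pr(d=d_1\mid x)=\sum_{j}\Lambda_j F(c_{1,j}(x))+\text{const}$ with $\Lambda_j=\mathcal{O}(\{d_1,d_j\};\{d_2,\dots,d_{j-1}\})$, the same limit of $d\Pr(d=d_j\mid x)/dx$ as $c_{1,j}(x)\to\bar\pparam$ to identify the ratios $\Lambda_j/\Lambda_2$, the same downward recursion for $\hat f\equiv\Lambda_2 f$, and the same integration over $[0,\bar\pparam]$ to recover the scale and the remaining consideration probabilities. The one point you leave implicit is that the downward march actually reaches $\pparam=0$ in finitely many steps rather than stalling at an interior accumulation point; the paper secures this with a short separate lemma (its Lemma A.1) using the strict cutoff ordering.
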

The first assumption of the theorem ensures that a generalized version of Equation \eqref{eq:L_2-1} is informative. The second assumption implies that the cutoffs for alternative $d_1$ are ordered: $\cmap_{1,j}(x)<\cmap_{1,j+1}(x)$. While Theorem \ref{T:Benchmark} requires large support for the excluded regressor, it does not generally require it to exhibit variation that forces alternative $d_1$ to go from being the first best to the least preferred. Rather, the theorem requires that at one extreme of the support alternative $d_1$ dominates all others. However, at the other extreme we only require that $d_2$ is preferred to $d_1$ for all DMs. Identification is attained for any consideration mechanism that allows $d_1$ and $d_2$ to be considered together with positive probability. Moreover, if the probability of being considered together is zero for $d_1$ and $d_2$, but positive for $d_1$ and $d_3$, the theorem still holds as long as the assumptions of the theorem hold for $d_3$ instead of $d_2$. Theorem \ref{T:Benchmark} identifies some features of the consideration mechanism. These features may be sufficient for identifying the entire mechanism. In particular, as shown in Section \ref{S:ARC}, Theorem \ref{T:Benchmark} yields identification of the ARC model, including the consideration mechanism.%\footnote{See \citet{BaMoTh19} for other examples of identified consideration mechanisms.}

\textbf{Dependence between consideration and preferences.} We next generalize the example in Section \ref{S:two alt} by allowing for high/low consideration types.
%\begin{definition}
%	Let $\mathcal{Q}_{\pparam}(\cal{K})$ denote the	probability that a DM with preference parameter $\pparam$ draws consideration set $\cal{K}$, and $\mathcal{O}_{\pparam}(\mathcal{A};\mathcal{B})\equiv\sum_{\Kc: ~\mathcal{A} \subset \Kc, ~ \mathcal{B} \cap \Kc =\emptyset}\mathcal{Q}_{\pparam}(\Kc).$
%\end{definition}
\begin{assumptionSP}{I2.BCT}[Binary Consideration Types]\label{IA2.BCT}
	For some  unknown $\pparam^* \in (0, \bar{\pparam})$:
	\[
	\mathcal{Q_{\pparam}(K)} =
	\begin{cases}
		\mathcal{\underline{Q}(K)} & \text{if } \pparam < \pparam^*  \\
		\mathcal{\overline{Q}(K)}  & \text{if }      \pparam > \pparam^*  \\
	\end{cases}
	\]	
	where, $\forall \pparam$ and $\forall \mathcal{K} \subset \Dc$, $\sum_{\mathcal{K} \subset \Dc} \mathcal{Q_{\pparam}(K)}=1$ and $\mathcal{Q_{\pparam}(K)}\geq 0$.
\end{assumptionSP}
\begin{theorem}\label{T:Binary}
	Suppose Assumptions \ref{IA0}, \ref{IA2.BCT}, \ref{IA3}, \ref{TA0}-\ref{TA2}, and Condition 2 of Theorem \ref{T:Benchmark} hold. Suppose Condition 1 of Theorem \ref{T:Benchmark} holds for all $\pparam$. Then $f(\cdot)$ is identified and so is $\mathcal{O}_{\pparam}(\{d_1,d_2\};\emptyset)$.
	Suppose $\frac{d\Pr(d=d_1|\covx)}{dx}$ is discontinuous.  Then $\pparam^*$ is identified.  If, in addition,  $\cmap_{1,j}(\covx) < \pparam^*$ for some $\covx \in \mathcal{X}$ and $j>2$, then $\mathcal{O}_{\pparam}(\{d_1,d_j\};\{d_2,\dots, d_{j-1}\})$ is also identified.
\end{theorem}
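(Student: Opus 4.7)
The plan is to adapt the sequential recursion from Theorem \ref{T:Benchmark} to the binary-type setting, tracking the product $\hat f(\pparam):=\mathcal{O}_{\pparam}(\{d_1,d_2\};\emptyset)\,f(\pparam)$ as the central identified object. Leibniz's rule under Assumption \ref{IA2.BCT} gives
$$\frac{d\Pr(d=d_1|x)}{dx}=\sum_{j\geq 2}\mathcal{O}_{c_{1,j}(x)}\bigl(\{d_1,d_j\};\{d_2,\dots,d_{j-1}\}\bigr)\,f(c_{1,j}(x))\,c_{1,j}'(x),$$
and Condition 2 of Theorem \ref{T:Benchmark} provides $c_{1,2}(x)<c_{1,3}(x)<\cdots<c_{1,D}(x)$. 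At the top of the support, picking $x$ with $c_{1,j}(x)>\bar\pparam$ for all $j\geq 3$ isolates the $j=2$ term and identifies $\hat f(\pparam)$ for $\pparam$ near $\bar\pparam$; the same device applied to $d\Pr(d=d_j|x)/dx$ delivers the upper-stratum ratios $\overline\phi^{(j)}:=\overline{\mathcal{O}}(\{d_1,d_j\};\{d_2,\dots,d_{j-1}\})/\overline{\mathcal{O}}(\{d_1,d_2\};\emptyset)$. I then propagate $\hat f$ downward exactly as in Theorem \ref{T:Benchmark}, subtracting off already-identified higher-$j$ contributions; Condition 1 holding for every $\pparam$ keeps the denominators $\mathcal{O}_{\pparam}(\{d_1,d_2\};\emptyset)$ strictly positive, so the recursion remains well defined. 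Once the recursion pushes $c_{1,3}(x)$ below $\pparam^*$, the lower-stratum ratios $\underline\phi^{(j)}$ are recovered from the jump of $d\Pr(d=d_1|x)/dx$ at the $x$ satisfying $c_{1,j}(x)=\pparam^*$, using continuity of $f$ at $\pparam^*$ (Assumption \ref{TA1}) to pin down $f(\pparam^*)$ and hence $\underline{\mathcal{O}}(\{d_1,d_j\};\{d_2,\dots,d_{j-1}\})f(\pparam^*)$. This yields $\hat f$ on all of $[0,\bar\pparam]$.

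To separate $f$ from $\mathcal{O}_{\pparam}(\{d_1,d_2\};\emptyset)$, note that because $f$ is continuous and strictly positive and $\mathcal{O}_{\pparam}(\{d_1,d_2\};\emptyset)$ takes at most two values, any discontinuity of $\hat f$ must sit at $\pparam^*$. If it does, the ratio $r:=\lim_{\pparam\downarrow\pparam^*}\hat f(\pparam)/\lim_{\pparam\uparrow\pparam^*}\hat f(\pparam)$ equals $\overline{\mathcal{O}}(\{d_1,d_2\};\emptyset)/\underline{\mathcal{O}}(\{d_1,d_2\};\emptyset)$, and the normalization $\int_0^{\bar\pparam}f\,d\pparam=1$ translates into
$$\frac{1}{\underline{\mathcal{O}}(\{d_1,d_2\};\emptyset)}\int_0^{\pparam^*}\hat f\,d\pparam+\frac{1}{r\,\underline{\mathcal{O}}(\{d_1,d_2\};\emptyset)}\int_{\pparam^*}^{\bar\pparam}\hat f\,d\pparam=1,$$
a linear equation that pins down both weights; dividing $\hat f$ by the appropriate value returns $f$. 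In the degenerate case of a continuous $\hat f$, $\underline{\mathcal{O}}(\{d_1,d_2\};\emptyset)=\overline{\mathcal{O}}(\{d_1,d_2\};\emptyset)$ and the common value is $\int_0^{\bar\pparam}\hat f\,d\pparam$.

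For the remaining claims, observe that $d\Pr(d=d_1|x)/dx$ is continuous in $x$ wherever each integrand $\mathcal{O}_{c_{1,j}(x)}(\cdot)\,f(c_{1,j}(x))$ is continuous in $\pparam=c_{1,j}(x)$, and those integrands can only be discontinuous at $\pparam=\pparam^*$. Hence any hypothesized discontinuity point $x^*$ satisfies $c_{1,j}(x^*)=\pparam^*$ for some $j$; since the cutoffs are continuously differentiable and strictly monotone on $\mathcal X$ (Assumption \ref{TA2} and Condition 2), inversion recovers $\pparam^*$ uniquely, and simultaneous discontinuities at distinct $x^*$'s coming from different $j$'s all invert to the same value. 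Finally, multiplying the two values of $\phi^{(j)}_{\pparam}$ produced by the first paragraph's recursion by $\mathcal{O}_{\pparam}(\{d_1,d_2\};\emptyset)$ recovers $\mathcal{O}_{\pparam}(\{d_1,d_j\};\{d_2,\dots,d_{j-1}\})$; the hypothesis $c_{1,j}(x)<\pparam^*$ for some $x\in\mathcal X$ guarantees that the data cover the lower stratum for this $j$, so $\underline\phi^{(j)}$ appears in the identifying system and is separately recovered.

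The hardest part is the first paragraph's recursion once it crosses $\pparam^*$: the top-of-support trick used to identify the $\overline\phi^{(j)}$'s is no longer available, so the lower-stratum ratios have to be extracted from jumps in $d\Pr(d=d_1|x)/dx$. Attributing each jump to the correct $j$ rests on the fact that the different cutoffs $c_{1,j}$ hit $\pparam^*$ at distinct $x$-values (a consequence of their strict ordering in Condition 2), and on continuity of $f$ to convert jumps in $\mathcal{O}_{\pparam}(\{d_1,d_j\};\{d_2,\dots,d_{j-1}\})f(\pparam)$ into statements about $\mathcal{O}_{\pparam}(\{d_1,d_j\};\{d_2,\dots,d_{j-1}\})$ alone.
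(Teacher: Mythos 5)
Your construction tracks the paper's own proof almost step for step: the same derivative decomposition with stratum-dependent coefficients $\Lambda_j(\pparam)\equiv\mathcal{O}_{\pparam}(\{d_1,d_j\};\{d_2,\dots,d_{j-1}\})$, the same top-of-support isolation to obtain the upper-stratum ratios, the same left/right-derivative comparison at the crossing points $\cmap_{1,j}(\covx)=\pparam^*$ (with continuity of $f$ from Assumption \ref{TA1}) to obtain the lower-stratum ratios, and the same stratum-weighted integration to recover the scale. The one place where you genuinely diverge is the identification of $\pparam^*$ itself, and there your argument has a gap. You attribute an observed discontinuity of $\frac{d\Pr(d=d_1|\covx)}{dx}$ at some $\covx^*$ to a cutoff crossing $\pparam^*$ and then invert, arguing that discontinuities generated by different $j$'s all invert to the same value. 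That consistency check only has bite if several of the $\Lambda_j$'s actually jump at $\pparam^*$, so that there are several discontinuity points to cross-reference. Assumption \ref{IA2.BCT} permits $\underline{\mathcal{Q}}$ and $\overline{\mathcal{Q}}$ to differ in a way that moves only one aggregate $\Lambda_j$ (the paper's discussion after the theorem gives exactly such an example, where $\mathcal{O}_{\pparam}(\{d_1,d_2\};\emptyset)$ stays constant while other terms jump). In that case you observe a single kink at $\covx^*$, and the candidates $\cmap_{1,2}(\covx^*),\cmap_{1,3}(\covx^*),\dots$ are distinct admissible values of $\pparam^*$; nothing in $\Pr(d=d_1|\cdot)$ alone selects among them. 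Since your recursion below $\pparam^*$ must locate the points where each $\cmap_{1,j}$ crosses $\pparam^*$, a misattribution propagates into the identification of the $\underline{\Lambda}_j$'s and of $f$ on $[0,\pparam^*)$.

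The paper closes this hole with information you never use: at the $\covx^*$ where $\cmap_{1,j}$ crosses $\pparam^*$, there is a unique alternative $d_j$ whose own derivative $\frac{d\Pr(d=d_j|\covx)}{dx}$ also jumps (its leading coefficient is $\Lambda_j(\cdot)f(\cmap_{1,j}(\covx))$ by Fact \ref{Fact:simpleorder}), and this tags the discontinuity with the correct index $j$, hence $\pparam^*=\cmap_{1,j}(\covx^*)$. Everything else in your write-up --- the role of Condition 1 holding for all $\pparam$ in keeping $\mathcal{O}_{\pparam}(\{d_1,d_2\};\emptyset)$ bounded away from zero, the conversion of jumps in $\Lambda_j(\pparam)f(\pparam)$ into the ratios $\underline{\Lambda}_j/\overline{\Lambda}_j$, the normalization $\int_0^{\bar\pparam}f=1$ applied stratum by stratum, and the observation that $\mathcal{O}_{\pparam}(\{d_1,d_j\};\{d_2,\dots,d_{j-1}\})$ for $j>2$ is recovered on the lower stratum only when $\cmap_{1,j}(\covx)<\pparam^*$ for some $\covx\in\mathcal{X}$ --- matches the paper's argument.
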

A discontinuity in $\frac{d\Pr(d=d_1|\covx)}{dx}$ may occur when a cutoff $\cmap_{1,j}(\covx)$ crosses $\pparam^*$. In some cases it may not happen despite binary consideration. For example, the probability of considering $d_1$ and $d_2$ may jump but in a way that $\mathcal{O}_{\pparam}(\{d_1,d_2\};\emptyset)$ remains constant. In such a case, $f(\cdot)$ is identified but not necessarily the breakpoint $\pparam^*$.

The theorem holds if Assumption \ref{IA2.BCT} is replaced with
\[
\mathcal{Q}^{\covx}(\mathcal{K}) =
\begin{cases}
	\mathcal{\underline{Q}(K)} & \text{if } \covx < \covx^*  \\
	\mathcal{\overline{Q}(K)}  & \text{if }      \covx > \covx^*
\end{cases}
\]	
for some unknown $\covx^* \in \mathcal{S}$. In sum, preferences can be identified even when there are threshold effects affecting consideration. Assumption \ref{IA2.BCT} is one instance where Assumption \ref{IA2} does not hold but identification attains. Another instance, which we establish for the ARC model in Section \ref{sec:gendependence}, is proportionately shifting consideration.

\subsubsection{Identification without large support}
Returning to our example with three alternatives, it is immediate to see that if whenever $d_1$ is considered so is $d_2$, i.e. $\mathcal{O}(\{d_1,d_3\};d_2)  = 0$, the one-to-one mapping is restored. Indeed, the second term on the RHS of Equation \eqref{eq:L2} disappears and we are back to Equation \eqref{eq:L0}.
\begin{proposition}\label{P:Basic}
	Suppose Assumptions \ref{IA0}, \ref{IA2}, \ref{IA3}, \ref{TA0}-\ref{TA2} hold, and
	\begin{enumerate}
		\item The consideration mechanism is such that $d_1$ is considered with positive probability and whenever it is considered so is $d_2$;
		\item There exists $\covxs \subset \mathcal{S}$ such that $\cmap_{1,2}(\covx)$, $\covx \in \covxs$, covers $[\pparam^{l},\pparam^{u}]\subset [0,\bar \pparam]$ and $\forall \covx \in \covxs$
		\[
		U_{\pparam}(d_1,\covx)>U_{\pparam}(d_2,\covx) \Rightarrow U_{\pparam}(d_1,\covx)>U_{\pparam}(d_j,\covx), \quad \forall j>2.
		\]
	\end{enumerate}
	Then $F(\pparam|\pparam \in [\pparam^{l},\pparam^{u}])$ is identified.
\end{proposition}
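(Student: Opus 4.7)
The plan is to reduce the problem to the binary-choice argument from Section \ref{S:two alt} by showing that, on $\covxs$, Conditions 1 and 2 make every alternative other than $d_1$ and $d_2$ ``invisible'' for the event of choosing $d_1$. Once that reduction is in place, identification of the truncated CDF $F(\pparam\mid \pparam\in[\pparam^{l},\pparam^{u}])$ follows from taking a ratio that cancels the unknown consideration probability.

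First I would show the following decomposition of the choice event: for each $\covx\in\covxs$, the DM chooses $d_1$ if and only if $d_1\in\Kc$ and $\pparam<\cmap_{1,2}(\covx)$, up to the $F$-null set $\{\pparam=\cmap_{1,2}(\covx)\}$ (the null property uses TA1). The ``if'' direction combines SCP with Condition 2: $\pparam<\cmap_{1,2}(\covx)$ gives $\util_{\pparam}(d_1,\covx)>\util_{\pparam}(d_2,\covx)$, and Condition 2 extends the inequality to every $d_j$ with $j>2$, so $d_1$ is the argmax of $\util_{\pparam}(\cdot,\covx)$ over any $\Kc\ni d_1$. The ``only if'' direction uses Condition 1: if $d_1\in\Kc$ then $d_2\in\Kc$, and for $\pparam>\cmap_{1,2}(\covx)$ the SCP implies $\util_{\pparam}(d_2,\covx)>\util_{\pparam}(d_1,\covx)$, ruling out the choice of $d_1$. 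Next I would invoke IA0 together with IA2 and IA3 to obtain that, conditional on $\covx$, the consideration set $\Kc$ and the preference $\pparam$ are independent, with $\pparam\sim F$. Combining with the decomposition yields
\[
\Pr(d=d_1\mid\covx)=\mathcal{O}(d_1;\emptyset)\,F(\cmap_{1,2}(\covx))\qquad\forall\covx\in\covxs,
\]
where $\mathcal{O}(d_1;\emptyset)>0$ by Condition 1.

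For the inversion step I would pick $\covx^{l},\covx^{u}\in\covxs$ with $\cmap_{1,2}(\covx^{l})=\pparam^{l}$ and $\cmap_{1,2}(\covx^{u})=\pparam^{u}$ (possible by Condition 2) and, for any $\pparam\in[\pparam^{l},\pparam^{u}]$, a $\covx(\pparam)\in\covxs$ with $\cmap_{1,2}(\covx(\pparam))=\pparam$. Then
\[
\frac{\Pr(d=d_1\mid\covx(\pparam))-\Pr(d=d_1\mid\covx^{l})}{\Pr(d=d_1\mid\covx^{u})-\Pr(d=d_1\mid\covx^{l})}=\frac{F(\pparam)-F(\pparam^{l})}{F(\pparam^{u})-F(\pparam^{l})}=F(\pparam\mid\pparam\in[\pparam^{l},\pparam^{u}]),
\]
with the denominator strictly positive by TA1. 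The unknown $\mathcal{O}(d_1;\emptyset)$ cancels from numerator and denominator, so the left-hand side is a function of observables only, delivering identification.

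The only genuinely substantive step is the first one: recognizing that Condition 2 is precisely the ``irrelevance of higher alternatives'' property that collapses the sum in an expression like equation \eqref{eq:L2} to a single term and restores the one-to-one mapping between $\Pr(d=d_1\mid\covx)$ and $F(\cmap_{1,2}(\covx))$. Without it, the choice probability would inherit additional $\cmap_{1,j}(\covx)$ terms (through events in which $d_2\notin\Kc$ but some $d_j$, $j>2$, is), and the clean factorization above would fail. Everything after the decomposition is routine algebra, and no large-support assumption is needed beyond what Condition 2 already supplies on $[\pparam^{l},\pparam^{u}]$.
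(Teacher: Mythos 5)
Your proof is correct and follows essentially the same route as the paper's: both arguments use Conditions 1 and 2 to collapse $\Pr(d=d_1\mid\covx)$ to a single term proportional to $F(\cmap_{1,2}(\covx))$, after which the unknown consideration probability $\mathcal{O}(d_1;\emptyset)$ is eliminated by normalization. The only difference is cosmetic --- the paper differentiates with respect to $\covx$ and identifies $\alpha f(\cdot)$ up to scale before renormalizing, whereas you work directly with levels of the choice probability and cancel the scale via a ratio of differences; both deliver $F(\pparam\mid\pparam\in[\pparam^{l},\pparam^{u}])$.
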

The proposition above uses (the derivative of) $\Pr(d=d_1|x)$ to create the one-to-one mapping from data to the preference density function. Depending on the consideration mechanism, the same can be achieved using the derivative of $\Pr(d\in \{d_1,d_2,\dots,d_j\}|x)$.
\begin{definition}[Loosely Ordered Consideration]\label{L-Ordered}
	The consideration mechanism is loosely ordered around $j$, $j<D$, if whenever alternatives $d_k$ and $d_l$, $k\leq j<l$, are both considered, so are $d_{j}$ and $d_{j+1}$. In addition, $d_{j}$ and $d_{j+1}$ have a positive probability of being considered together.
\end{definition}
	
\begin{theorem}\label{T:Basic-G}
	Suppose Assumptions \ref{IA0}, \ref{IA2}, \ref{IA3}, \ref{TA0}-\ref{TA2} hold, and
	\begin{enumerate}
		\item The consideration mechanism is loosely ordered around $j$.
		\item There exists $\covxs \subset \mathcal{S}$ such that $\cmap_{j,j+1}(\covx)$, $\covx \in \covxs$, covers $[\pparam^{l},\pparam^{u}]\subset [0,\bar \pparam]$ and $\forall \covx \in \covxs$
		\begin{align*}
			U_{\pparam}(d_j,\covx)>U_{\pparam}(d_{j+1},\covx) &\Rightarrow U_{\pparam}(d_j,\covx)>U_{\pparam}(d_k,\covx), \quad~~ \forall k>j+1,\\
			U_{\pparam}(d_{j+1},\covx)>U_{\pparam}(d_j,\covx) &\Rightarrow U_{\pparam}(d_{j+1},\covx)>U_{\pparam}(d_k,\covx), \quad \forall k<j.
		\end{align*}
	\end{enumerate}
	Then $F(\pparam|\pparam \in [\pparam^{l},\pparam^{u}])$ is identified.	
\end{theorem}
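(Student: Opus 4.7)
The plan is to construct a one-to-one mapping between an observable data moment—namely, the derivative of $\Pr(d\in\{d_1,\ldots,d_j\}\mid\covx)$ with respect to $\covx$—and the preference density $f$ evaluated at $\cmap_{j,j+1}(\covx)$, in close analogy to the binary-choice argument of Section~\ref{S:two alt}. First I would classify each consideration set $\Kc$ into three types: (a) $\Kc\subset\{d_1,\ldots,d_j\}$, (b) $\Kc\subset\{d_{j+1},\ldots,d_D\}$, and (c) $\Kc$ intersects both blocks. Definition~\ref{L-Ordered} forces every type-(c) set to contain both $d_j$ and $d_{j+1}$, and since $\{d_j,d_{j+1}\}$ is itself type (c), the positivity clause of that definition gives $C>0$, where $C$ denotes the total consideration probability assigned to type-(c) sets. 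Fix $\covx\in\covxs$ and $\pparam\in[0,\bar\pparam]$; I would then show the chosen alternative lies in $\{d_1,\ldots,d_j\}$ with probability $1$ under type (a), $0$ under type (b), and with probability $\mathbb{1}[\pparam<\cmap_{j,j+1}(\covx)]$ under type (c). The last case is the crux: when $\pparam<\cmap_{j,j+1}(\covx)$, the first implication of Condition~2 makes $d_j$ strictly preferred to every $d_k$ with $k>j+1$, so the best element of $\Kc$ lies in $\{d_1,\ldots,d_j\}$; when $\pparam>\cmap_{j,j+1}(\covx)$, the second implication makes $d_{j+1}$ strictly preferred to every $d_k$ with $k<j$ (and to $d_j$), so the best element lies in $\{d_{j+1},\ldots,d_D\}$.

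Integrating over $F$ and letting $A$ denote the total type-(a) consideration probability yields
\begin{equation*}
\Pr(d\in\{d_1,\ldots,d_j\}\mid\covx)=A+C\cdot F(\cmap_{j,j+1}(\covx)),\qquad \covx\in\covxs.
\end{equation*}
Differentiating with respect to $\covx$ eliminates the intercept $A$ and gives
\begin{equation*}
\frac{d\Pr(d\in\{d_1,\ldots,d_j\}\mid\covx)}{d\covx}=C\,f(\cmap_{j,j+1}(\covx))\,\frac{d\cmap_{j,j+1}(\covx)}{d\covx},
\end{equation*}
which is the desired one-to-one mapping: the LHS is a data moment, and the cutoff together with its derivative are known from the utility family, so $C f(\pparam)$ is identified at every $\pparam=\cmap_{j,j+1}(\covx)$, $\covx\in\covxs$, i.e.\ on $[\pparam^l,\pparam^u]$. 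I would then normalize by $\int_{\pparam^l}^{\pparam^u} Cf(s)\,ds=C\bigl(F(\pparam^u)-F(\pparam^l)\bigr)$, which cancels the unknown scale $C$ and delivers the conditional density $f(\pparam)/(F(\pparam^u)-F(\pparam^l))$ and hence $F(\pparam\mid\pparam\in[\pparam^l,\pparam^u])$.

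The hard part is the case analysis summarized above: it requires combining the loose-orderedness definition (which plants both $d_j$ and $d_{j+1}$ inside every cross-boundary consideration set) with both implications of Condition~2 (which in turn promote $d_j$ or $d_{j+1}$ into a dominator of all remaining cross-boundary alternatives in the appropriate preference region). Missing either ingredient would reintroduce other cutoffs $\cmap_{k,l}$ into the moment and destroy the one-to-one mapping, exactly as happens in the naive derivative of Equation~\eqref{eq:L2}. A minor technical point is that $d\cmap_{j,j+1}/d\covx$ must be nonzero on the inversion range; this is a generic consequence of $\cmap_{j,j+1}$ being continuously differentiable and covering $[\pparam^l,\pparam^u]$, and holds outright for the CARA/CRRA/NTD families described in Section~\ref{sec:mainexample}.
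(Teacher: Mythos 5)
Your proposal is correct and follows essentially the same route as the paper's proof: the same three-way classification of consideration sets, the same observation that the cross-boundary sets contribute $F(\cmap_{j,j+1}(\covx))$ via the indicator argument from Condition 2, and the same derivative-plus-normalization step inherited from Proposition \ref{P:Basic}. Your write-up is in fact more explicit than the paper's about why the type-(c) choice probability reduces to $\mathbbm{1}[\pparam<\cmap_{j,j+1}(\covx)]$, but the argument is the same.
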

Condition 1 in Theorem \ref{T:Basic-G} -- a loosely ordered consideration mechanism -- splits the choice set into ``low quality'' and ``high quality'' sets. Any subset of the low quality set can form the consideration set and so can any subset of the high quality set. However, if a consideration set contains both high and low quality alternatives, then it must also contain the ``bridging'' alternatives $\{d_{j},d_{j+1}\}$. The following mechanisms can generate loosely ordered consideration:
\begin{enumerate}
	\item [I.] \textbf{Bottom-Up consideration:} Alternative $d_k$ is considered only if $d_{k-1}$ is considered;
	\item [II.] \textbf{Top-Down consideration:} Alternative $d_k$ is considered only if $d_{k+1}$ is considered;
	\item [III.] \textbf{Center-to-edges consideration:} Alternative $d_{j}$, $1<j<D$,
	 is always considered. Alternative $d_k$, $k>j$, is considered only if $d_{k-1}$ is considered. Alternative $d_k$, $k<j$, is considered only if $d_{k+1}$ is considered;
	\item [IV.] \textbf{Trimmed-from-the-edges consideration:} Only consideration sets of the form $\mathcal{K} = \{d_k,d_{k+1},\dots,d_{k+l}\}$ can occur with positive probability.
\end{enumerate}
The identification result in Theorem \ref{T:Basic-G} extends to mixtures of these mechanisms. They cover a wide array of models including  versions of  threshold models \citep{kimya2018choice}, (partial) elimination-by-aspects \citep{Tversky72}, extremeness aversion \citep{simonson_tversky_1992}, and edge aversion \citep{teigen1983studies,christenfeld1995choices,rubinstein1997naive,attali2003guess}, %and edge advantage \citep{nisbett1977telling,schelling1980strategy,dayan2011nudge},
as well as models that embed budget or liquidity constraints.

Condition 2 in Theorem \ref{T:Basic-G} requires that whenever a DM prefers $d_j$ to $d_{j+1}$, she also prefers $d_j$ to all high quality alternatives; and whenever a DM prefers $d_{j+1}$ to $d_{j}$, she also prefers $d_{j+1}$ to all low quality alternatives. This condition can be tested in any given dataset and is automatically satisfied if no alternative is never-the-first-best.

The fundamental difference between Theorems \ref{T:Benchmark} and \ref{T:Basic-G} is that the former imposes the large support requirement, while the latter does not. On the other hand, Theorem \ref{T:Benchmark} imposes less restrictions on the consideration mechanism than Theorem \ref{T:Basic-G}.

\subsection{Alternative-specific Excluded Regressors}\label{subsec:altspec}

With alternative-specific excluded regressors we can allow for consideration to depend on preferences. To illustrate, we continue to assume that the choice set is $\{d_1,d_2,d_3\}$. However, now each alternative has its own regressor $x_j$ that only affects the utility of alternative $j$: $\covx=(\covx_{1},\covx_{2},\covx_{3})$. In addition, these regressors vary independently of one another and each consideration set contains at least two alternatives.

Identification is built on the following insight. Consider the change in the choice frequency of alternative $d_1$ in response to an incremental change in $\covx_2$ (e.g., a price increase for alternative $d_2$). The DMs who may switch to $d_1$ are those indifferent between $d_1$ and $d_2$ and consider them both. If these DMs prefer $d_1$ and $d_2$ to $d_3$, whether $d_3$ is considered is irrelevant; otherwise, for the response to occur, $d_3$ should not be considered.  These two cases translate to the following statements: \emph{(i)}  $\cmap_{1,2}(x)<\cmap_{1,3}(x)<\cmap_{2,3}(x)$; and \emph{(ii)} $\cmap_{2,3}(x)<\cmap_{1,3}(x)<\cmap_{1,2}(x)$ and $d_3$ is not considered. No other ordering of cutoffs can occur by Fact \ref{A:CO}. With alternative specific variation we can construct two vectors of regressors, $\covx^i$ and $\covx^{ii}$, such that $\pparam =  c_{1,2}(\covx^i) <\cmap_{1,3}(\covx^i)<\cmap_{2,3}(\covx^i)$ and $c_{2,3}(\covx^{ii}) <\cmap_{1,3}(\covx^{ii})<\cmap_{1,2}(\covx^{ii})=\pparam$.\footnote{ First, we can construct an $\covx$ such that $U_{\pparam}(d_1,\covx) = U_{\pparam}(d_2,\covx) = U_{\pparam}(d_3,\covx)$. To do so, we fix the price of the first alternative, $\covx_1$, and find a price for the second alternative that makes the DM with preference parameter $\pparam$ indifferent between $d_1$ and $d_2$. Since $\covx_3$ does not affect the utility of $d_1$ nor $d_2$, we can find an $\covx_3$ so that the DM is indifferent between $d_3$ and $d_1$, and hence she is indifferent between all three alternatives. The two cases are then constructed by taking a small perturbation of $\covx_3$. Taken in a direction that reduces  $U_{\pparam}(d_3,\covx)$ generates Case \emph{(i)}; and in the opposite direction Case \emph{(ii)}.} The derivative
of the choice frequency of $d_1$ with respect to $\covx_2$ for these cases are, respectively:
\begin{align*}
(i) \quad \frac{\partial \Pr(d=d_1|x)}{\partial x_2}
	&=\left[\mathcal{Q}_{\pparam}(\{d_1,d_2\})+\mathcal{Q}_{\pparam}(\{d_1,d_2,d_3\})\right] f(\pparam ) \frac{\partial \cmap_{1,2}(x) }{\partial x_2};\\
	(ii) \quad \frac{\partial \Pr(d=d_1|x)}{\partial x_2} &=\left[\mathcal{Q}_{\pparam}(\{d_1,d_2\}) \color{white}+ \mathcal{Q}_{\pparam}(\{d_1,d_2,d_3\}) \color{black}\right] f(\pparam)   \frac{\partial \cmap_{1,2}(x) }{\partial x_2}.
\end{align*}
It follows that $\mathcal{Q}_{\pparam}(\{d_1,d_2\})f(\pparam)$ and $\mathcal{Q}_{\pparam}(\{d_1,d_2,d_3\})f(\pparam)$ are identified. In a similar fashion, $Q_{\pparam}(\{d_1,d_3\})f(\pparam)$ and $Q_{\pparam}(\{d_2,d_3\})f(\pparam)$ are identified. Hence, the consideration probability of each non-singleton  set is identified up to the same scale. The scale, however, is identified because consideration probabilities must sum to one: $f(\pparam)= \sum_{\mathcal{K}} \mathcal{Q}_{\pparam}(\mathcal{K})f(\pparam)$.  Hence, $\mathcal{Q}_{\pparam}(\mathcal{K})$ is also identified for each $\mathcal{K}$. The following theorem generalizes this idea.
\begin{definition}[Alternative-Specific Variation]\label{ASV-def}
	We say that there is alternative-specific variation if $U_{\pparam}(d_j,\covx)$ depends only on $\covx_j$: $\frac{\partial U_{\pparam}(d_j,\covx)}{\partial x_k}\neq 0 \Leftrightarrow k=j$.
\end{definition}
\begin{theorem}\label{Broad-alternative specific}
	Suppose Assumptions \ref{IA0}, \ref{IA3}, \ref{TA0}-\ref{TA2} hold, there is alternative-specific variation, and the choice set contains at least three alternatives. Suppose
	\begin{enumerate}
		\item  Each consideration set contains at least two alternatives and $\mathcal{Q}_{\pparam}(\cdot)$ is measurable;
		\item For a given value of $\pparam$, there exists an $\covx$ with an open neighborhood around it in $\mathcal{S}$ s.t.
		\begin{align*}
			& U_{\pparam}(d_1,\covx) = U_{\pparam}(d_2,\covx) =\cdots= U_{\pparam}(d_D,\covx).
		\end{align*}		
	\end{enumerate}
Then $f(\pparam)$ is identified and so are  $\mathcal{Q}_{\pparam}(\cal{K})$, $\forall \mathcal{K}\ \subset \Cs$.
\end{theorem}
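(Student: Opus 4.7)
The plan is to generalize the three-alternative sketch just above the theorem by combining alternative-specific perturbations with Möbius inversion. Fix any $\pparam$ on the support. Condition 2 guarantees an open neighborhood of some $\covx^{*}\in\mathcal{S}$ on which $U_{\pparam}(d_{1},\covx^{*})=\cdots=U_{\pparam}(d_{D},\covx^{*})$, and alternative-specific variation allows each coordinate $\covx_{l}$ to be moved independently, tilting only $U_{\pparam}(d_{l},\covx)$. The strategy is to identify the scaled quantities $f(\pparam)\mathcal{Q}_{\pparam}(\mathcal{K})$ for every $\mathcal{K}$ with $|\mathcal{K}|\ge 2$, to recover $f(\pparam)$ via the normalization $\sum_{\mathcal{K}}\mathcal{Q}_{\pparam}(\mathcal{K})=1$ (which under Condition 1 is a sum over sets of size at least two), and then to obtain $\mathcal{Q}_{\pparam}(\mathcal{K})$ by division.

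For each ordered pair $(j,k)$ with $j\neq k$ and each $A\subseteq\mathcal{D}\setminus\{j,k\}$, the first step is to construct $\covx'$ near $\covx^{*}$ with $\cmap_{j,k}(\covx')=\pparam$, $U_{\pparam}(d_{l},\covx')>U_{\pparam}(d_{j},\covx')$ for every $l\in A$, and $U_{\pparam}(d_{l},\covx')<U_{\pparam}(d_{j},\covx')$ for every $l\in B:=\mathcal{D}\setminus(\{j,k\}\cup A)$. Openness of the neighborhood together with alternative-specific variation makes this feasible. Writing $\mathcal{O}_{\pparam}(\{d_{j},d_{k}\};A)\equiv\sum_{\mathcal{K}:\{j,k\}\subseteq\mathcal{K},\,\mathcal{K}\cap A=\emptyset}\mathcal{Q}_{\pparam}(\mathcal{K})$, the derivative formula I aim to establish is
\begin{equation*}
\frac{\partial\Pr(d=d_{j}\mid\covx)}{\partial\covx_{k}}\bigg|_{\covx=\covx'}=\mathcal{O}_{\pparam}(\{d_{j},d_{k}\};A)\,f(\pparam)\,\frac{\partial\cmap_{j,k}(\covx')}{\partial\covx_{k}},
\end{equation*}
the natural $D$-alternative analogue of cases $(i)$ and $(ii)$ in the three-alternative sketch. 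Because $\cmap_{j,k}$ and its derivative are determined by the utility primitives, this identifies $\mathcal{O}_{\pparam}(\{d_{j},d_{k}\};A)f(\pparam)$ for every admissible triple $(j,k,A)$.

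To justify the derivative formula, I would decompose $\Pr(d=d_{j}\mid\covx)$ as $\sum_{\mathcal{K}\ni j}\int\mathcal{Q}_{\pparam'}(\mathcal{K})f(\pparam')\mathbf{1}[d_{j}\text{ best in }\mathcal{K}\text{ at }(\pparam',\covx)]\,d\pparam'$. Perturbing $\covx_{k}$ moves only $U_{\pparam'}(d_{k},\covx)$, so contributions vanish unless $\mathcal{K}\supseteq\{j,k\}$, and in that case the only moving boundary of the integrand is $\cmap_{j,k}$. At the marginal DM $\pparam=\cmap_{j,k}(\covx')$, any $\mathcal{K}$ containing some $l\in A$ drops out because by construction $U_{\pparam}(d_{l},\covx')$ strictly exceeds $U_{\pparam}(d_{j},\covx')=U_{\pparam}(d_{k},\covx')$, so the local boundary of $\{d_{j}\text{ best in }\mathcal{K}\}$ is $\cmap_{j,l}$, which $\covx_{k}$ leaves fixed; elements of $B$ are strictly dominated by $d_{j}$ and $d_{k}$ at the margin and impose no binding constraint, so sets containing them contribute in full. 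The SCP ensures these strict orderings persist in a neighborhood of $\pparam$, making the derivative well defined.

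With the scaled $\mathcal{O}$'s in hand, for each fixed pair $(j,k)$ the $2^{D-2}$ equations indexed by $A\subseteq\mathcal{D}\setminus\{j,k\}$ form a triangular (Möbius) system in the unknowns $\{f(\pparam)\mathcal{Q}_{\pparam}(\mathcal{K}):\mathcal{K}\supseteq\{j,k\}\}$, which inverts by inclusion-exclusion to deliver each $f(\pparam)\mathcal{Q}_{\pparam}(\mathcal{K})$. Letting $(j,k)$ range over all pairs covers every $\mathcal{K}$ with $|\mathcal{K}|\ge 2$; summing over such $\mathcal{K}$ yields $f(\pparam)$, and division delivers $\mathcal{Q}_{\pparam}(\mathcal{K})$. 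Applying this construction at each $\pparam$ on the support finishes the identification. I expect the main obstacle to be the derivative bookkeeping in the preceding paragraph: one must verify that sets containing some $l\in A$ truly contribute zero and that the integrand boundaries move only through $\cmap_{j,k}$, and the engineered local utility ordering combined with the SCP is exactly the tool that makes this clean.
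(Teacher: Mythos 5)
Your proposal is correct and follows essentially the same route as the paper's proof: the same engineered local utility orderings around the all-indifference point, the same derivative formula identifying $f(\pparam)\,\mathcal{O}_{\pparam}(\{d_j,d_k\};A)$, and the same normalization step to recover the scale. The only cosmetic difference is that you invert the resulting triangular system by M\"obius inversion where the paper runs an induction on the size of the consideration set, which is the same computation.
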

The assumptions of the theorem above rule out singleton (and empty) consideration sets: identification is impossible with singleton consideration sets and arbitrary dependence on preferences, because any empirical choice frequency can be explained by such consideration sets.  An alternative approach is to have one alternative -- the ``default'' -- that is always considered as the following corollary demonstrates. The identification argument exploits the response of $\Pr(d=d_j|\covx)$ to changes in $x_k$, but not the response of $\Pr(d=d_k|\covx)$ to changes in $\covx_j$. Hence, $D-1$ excluded regressors are sufficient for identification, allowing for arbitrary dependence of consideration on one (the default's) excluded regressor.
%\begin{definition}
%	Let $\mathcal{Q}_{\pparam}^{\covx_1}(\cal{K})$ denote the probability that, given $\covx_1$, the DM with preference parameter $\pparam$ draws consideration set $\cal{K}$.
%\end{definition}
\begin{corollary}\label{Broad-alternative specific_corollary}
Suppose Assumptions \ref{IA0}, \ref{TA0}-\ref{TA2} hold, there is alternative-specific variation, the choice set contains at least three alternatives, all consideration sets contain $d_1$, and
	\begin{enumerate}
		\item  Consideration is independent of $\covx_{-1}\equiv(\covx_2,\dots,\covx_D)$: $\mathcal{Q}_{\pparam}^{\covx}(\mathcal{K})=\mathcal{Q}_{\pparam}^{\covx_1}(\mathcal{K})$, and $\mathcal{Q}_{\pparam}^{\covx_1}(\cdot)$ are measurable functions, continuous in $\covx_{1}$;
\item  The consideration of $\mathcal{K}=\{d_1\}$ is independent of $\pparam$:
$\mathcal{Q}_{\pparam}^{\covx_1}(d_1) =\mathcal{Q}^{\covx_1}(d_1)<1$, $\forall \pparam$;
\item  For a given value of $\covx_1$ and each value of $\pparam \in [0,\bar{\pparam}]$, there exists an $\covx_{-1}$ and an open neighborhood around $x=(x_1,x_{-1})$ in $\mathcal{S}$ s.t.
	\begin{align*}
		& U_{\pparam}(d_1,\covx) = U_{\pparam}(d_2,\covx) =\cdots= U_{\pparam}(d_D,\covx).
	\end{align*}
		
	\end{enumerate}
	Then $f(\pparam)$ is identified and so are  $\mathcal{Q}_{\pparam}^{\covx_{1}}(\cal{K})$, $\forall \mathcal{K} \subset \Cs$, for all $\pparam$ on the support.
\end{corollary}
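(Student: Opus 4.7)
The plan is to adapt the proof of Theorem \ref{Broad-alternative specific} by using variation in $\covx_{-1}=(\covx_2,\dots,\covx_D)$ while holding $\covx_1$ fixed. By Condition 1, $\mathcal{Q}_\pparam^{\covx}(\mathcal{K})=\mathcal{Q}_\pparam^{\covx_1}(\mathcal{K})$, so perturbing $\covx_{-1}$ moves cutoffs without disturbing consideration probabilities. The singleton consideration set $\{d_1\}$---the only one distinguishing this setup from Theorem \ref{Broad-alternative specific}---is special: whenever it is drawn, $d_1$ is chosen regardless of $\covx_{-1}$, so $\{d_1\}$ contributes zero to every cross-derivative $\partial\Pr(d=d_{j_0}|\covx)/\partial\covx_{j_1}$ with $j_1\neq 1$. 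This is why the default can absorb the singleton case and why $\covx_1$ is not needed to supply identifying variation.

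First I would fix $\covx_1$ and a target $\pparam\in[0,\bar\pparam]$, and invoke Condition 3 to locate $\covx_{-1}$ with an open neighborhood in $\mathcal{S}$ at which $U_\pparam(d_j,\covx)$ is identical across $j$. Under alternative-specific variation, $\cmap_{j,k}(\cdot)$ depends only on $(\covx_j,\covx_k)$, so perturbations of individual $\covx_k$, $k\neq 1$, realize every needed cutoff-ordering. Mimicking the two-configuration construction of Theorem \ref{Broad-alternative specific}, for each ordered pair $(j_0,j_1)$ with $j_0\in\{1,\dots,D\}$, $j_1\in\{2,\dots,D\}$, $j_0\neq j_1$, and each subset $S\subseteq\{2,\dots,D\}\setminus\{j_0,j_1\}$, I would build a perturbation in which the competitors $d_k$, $k\in S$, are pushed above both $d_{j_0}$ and $d_{j_1}$ at the marginal type $\pparam=\cmap_{j_0,j_1}(\covx)$, while the competitors $d_k$, $k\notin S\cup\{1,j_0,j_1\}$, are pushed below. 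Reading off $\partial\Pr(d=d_{j_0}|\covx)/\partial\covx_{j_1}$ in each such configuration yields one linear equation in the unknowns $\mathcal{Q}_\pparam^{\covx_1}(\mathcal{K})f(\pparam)$ for $\mathcal{K}\supset\{d_{j_0},d_{j_1}\}$ with $d_1\in\mathcal{K}$. The coefficient matrix of this $2^{D-2}$-equation system is a (reparameterized) Boolean-lattice zeta matrix and is therefore invertible via Möbius inversion, so each such product is identified. Cycling over all admissible $(j_0,j_1)$ covers every $\mathcal{K}$ with $d_1\in\mathcal{K}$ and $|\mathcal{K}|\geq 2$.

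Next I would recover the scale. Summing the identified products yields
\begin{equation*}
g(\pparam)\;\equiv\;\sum_{\mathcal{K}:\,d_1\in\mathcal{K},\,|\mathcal{K}|\geq 2}\mathcal{Q}_\pparam^{\covx_1}(\mathcal{K})f(\pparam)\;=\;\bigl[1-\mathcal{Q}^{\covx_1}(d_1)\bigr]f(\pparam),
\end{equation*}
where the second equality uses Condition 2 and the fact that every consideration set contains $d_1$. Integrating both sides over $[0,\bar\pparam]$ and applying $\int f=1$ gives $\int_0^{\bar\pparam} g(\pparam)\,d\pparam=1-\mathcal{Q}^{\covx_1}(d_1)$, which identifies $\mathcal{Q}^{\covx_1}(d_1)<1$. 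Dividing $g(\pparam)$ by $1-\mathcal{Q}^{\covx_1}(d_1)>0$ then recovers $f(\pparam)$, and each $\mathcal{Q}_\pparam^{\covx_1}(\mathcal{K})$ follows by dividing its identified product by $f(\pparam)>0$ (Assumption \ref{TA1}).

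The hard part will be verifying that, with $\covx_1$ held fixed, the $D-1$ coordinates of $\covx_{-1}$ genuinely provide enough freedom to realize every cutoff-ordering pattern required by the construction, and that the associated linear system for each pair $(j_0,j_1)$ is indeed invertible. Alternative-specific variation and Condition 3 together handle the freedom requirement, because each $\cmap_{j,k}$ with $j,k\neq 1$ moves with $\covx_j$ and $\covx_k$ independently, and each $\cmap_{1,j}$ moves with $\covx_j$ alone once $\covx_1$ is pinned down. Continuity of $\mathcal{Q}_\pparam^{\covx_1}(\cdot)$ in $\covx_1$ (Condition 1) plays only an auxiliary role, ensuring that the limits underlying the derivative arguments are well defined.
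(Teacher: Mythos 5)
Your proposal is correct and follows essentially the same route as the paper: restrict all derivatives to $\partial\Pr(d=d_{j_0}|\covx)/\partial\covx_{j_1}$ with $j_1\neq 1$ so that consideration (which may depend on $\covx_1$) is undisturbed, identify $f(\pparam)\mathcal{Q}^{\covx_1}_{\pparam}(\mathcal{K})$ for all non-singleton $\mathcal{K}$, and then recover the scale from $\sum_{|\mathcal{K}|>1}f(\pparam)\mathcal{Q}^{\covx_1}_{\pparam}(\mathcal{K})=f(\pparam)\bigl(1-\mathcal{Q}^{\covx_1}(d_1)\bigr)$ by integrating over $[0,\bar\pparam]$ and using Condition 2. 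Your M\"obius-inversion framing of the subset system is just a repackaging of the paper's induction on $|\mathcal{K}|$, and your use of all pairs $(j_0,j_1)$ rather than only $(1,k)$ is harmless redundancy.
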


Corollary \ref{Broad-alternative specific_corollary} generalizes the model of   \citet{heiss2016inattention,ho2017impact} in two dimensions. First, here each subset of the choice set containing the default has its own probability of being drawn. Second, this probability can vary with the DM's preferences as well as with the excluded regressor of the default alternative.

\subsection{Testing for limited consideration}\label{s:testing}
Since full consideration is a special case of limited consideration, it follows from the identification results above that under the SCP one can test for full consideration. The theorem below states one way of doing so \emph{without}: (1) relying on large support; (2) specifying a consideration mechanism; or (3) invoking the independence assumptions \ref{IA2} and \ref{IA3}.
\begin{proposition}\label{T:Test}
	Suppose Assumptions \ref{IA0}, \ref{TA0}-\ref{TA2} hold. Suppose there exist $\covx,\covx' \in \mathcal{S}$, and sets $\Lc, \Lc' \subset \Dc$ s.t. for some $\pparam^* \in [0,\bar{\pparam}]$
	\begin{enumerate}
		\item $\argmax_{j \in \Dc} \util_\pparam(d_j,x) \in \Lc$, $\forall \pparam \in [0,\pparam^*)$,   and   $\argmax_{j \in \Dc} \util_\pparam(d_j,x) \in \Dc \setminus \Lc$, $\forall \pparam \in (\pparam^*,\bar \pparam]$
		\item $\argmax_{j \in \Dc} \util_\pparam(d_j,x') \in \Lc'$, $\forall \pparam \in [0,\pparam^*)$,   and   $\argmax_{j \in \Dc} \util_\pparam(d_j,x') \in \Dc \setminus \Lc'$, $\forall \pparam \in (\pparam^*,\bar \pparam]$
	\end{enumerate}	
	If $\Pr(d\in \Lc|x)\neq \Pr(d\in \Lc'|x')$, then there is limited consideration.
\end{proposition}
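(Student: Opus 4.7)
The plan is to prove the contrapositive: if consideration is full, then $\Pr(d \in \Lc|x) = \Pr(d \in \Lc'|x')$ necessarily. Under full consideration each DM evaluates every alternative in $\Dc$ and picks $\argmax_{j \in \Dc} U_\pparam(d_j,\cdot)$, so the event $\{d \in \Lc\}$ conditional on regressor $x$ coincides with the event that the argmax falls in $\Lc$. Condition 1 of the proposition then identifies this event, up to the measure-zero point $\pparam = \pparam^*$ (which has probability zero by Assumption \ref{TA1}), with $\{\pparam < \pparam^*\}$. Condition 2 does the same for $x'$ and $\Lc'$.

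The key computation would therefore be, using Assumption \ref{IA0} to remove the conditioning on the regressor,
\begin{equation*}
\Pr(d \in \Lc \mid x) = \Pr(\pparam < \pparam^* \mid x) = \Pr(\pparam < \pparam^*) = F(\pparam^*),
\end{equation*}
and, by the same reasoning applied to $\Lc'$ and $x'$,
\begin{equation*}
\Pr(d \in \Lc' \mid x') = \Pr(\pparam < \pparam^* \mid x') = F(\pparam^*).
\end{equation*}
Equating these two expressions shows that the two conditional probabilities must coincide whenever consideration is full. Consequently, an observed inequality $\Pr(d \in \Lc \mid x) \neq \Pr(d \in \Lc' \mid x')$ rules out full consideration, which is exactly the claim.

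I expect no substantive obstacle; the content of the proposition is entirely in how conditions 1 and 2 pin down the argmax-based event as $\{\pparam < \pparam^*\}$ at both $x$ and $x'$, after which Assumption \ref{IA0} does the rest. The only point to handle carefully is noting that $\pparam = \pparam^*$ is a null event (so the strict-versus-weak inequalities in the conditions do not matter), and that the statement makes no reference to any specific consideration mechanism: limited consideration is detected purely through the discrepancy between two data moments that must be equal whenever all alternatives are always evaluated. The argument also does not use large support of $\covx$, no specification of a consideration mechanism, and no invocation of Assumptions \ref{IA2} or \ref{IA3}, consistent with the claim preceding the proposition.
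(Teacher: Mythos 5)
Your proposal is correct and follows essentially the same route as the paper: the paper's proof likewise assumes full consideration, equates each of $\Pr(d\in \Lc|x)$ and $\Pr(d\in \Lc'|x')$ to $F(\pparam^*)$ via Conditions 1 and 2 and the independence of $\covx$ from preferences, and derives a contradiction. Your contrapositive framing and the explicit remark that $\{\pparam=\pparam^*\}$ is a null event are only cosmetic differences.
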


Condition \emph{1} of the theorem requires that, given $\covx$, the first-best alternative belongs to $\Lc$ for all DMs with $\pparam <\pparam^*$ and to $\Dc\setminus\Lc$ for all DMs with $\pparam > \pparam^*$. Condition \emph{2} is the identical requirement, but given $\covx'$ and stated for $\Lc'$. Under these conditions and full consideration, the probability of choosing an alternative in $\Lc$ or, respectively, $\Lc'$ should be $F(\pparam^*)$ in both cases. Thus, if $\Pr(d\in \Lc|x)\neq \Pr(d\in \Lc'|x')$, then there is a limited consideration mechanism pushing DMs' choices away from $\Lc$ and $\Lc'$ at different rates.

\subsection{The $\modA$ Model}\label{S:ARC}
We now introduce a specific consideration mechanism, while maintaining the preference structure, including the SCP, from Section \ref{S:full}.  We refer to this model as the Alternative-specific Random Consideration ($\modA$) model \citep{manski1977structure,man:mar14}.  Each alternative $\dor_j$ appears in the consideration set with probability $\aparam_j$ independently of other alternatives. For now, we assume that these probabilities do not depend on DMs' preferences or the excluded regressor. Once the consideration set is drawn, the DM chooses the best alternative according to her preferences. To avoid empty consideration sets, following \cite{manski1977structure}, we assume that at least one alternative whose identity is unknown to the researcher is always considered.\footnote{In the previous version of this paper \citep{BaMoTh19} this completion rule is called Preferred Option(s). There we also provide identification results for other completion rules,  including Coin Toss (if the empty consideration set is drawn, the DM randomly uniformly picks one alternative from the choice set, i.e. each alternative has probability $1/D$ of being chosen), Default Option (there is a preset alternative that is chosen if the empty set is drawn), and Outside Option (the DM exits the market if the empty set is drawn).}
\phantomsection
\begin{assumptionSP}{ARC}[The Basic ARC Model]\label{TA3} The probability that the consideration set takes realization $\Ks$ is
	\begin{align*}
		\mathcal{Q}(\Ks) & \equiv \prod_{k \in \Kc }\aparam_k \prod_{k \in \Dc \setminus \Kc} (1-\aparam_k), \quad \forall \Ks \subset \Cs,
	\end{align*}
	where $\aparam_j>0, \, \forall j$, and $\exists d^*$ s.t. $\aparam_{d^*}=1$.
\end{assumptionSP}
By assuming $\aparam_{j}>0$, we omit never-considered alternatives from the choice problem. Since a never-considered alternative is never compared to any other alternative, whether it is in the choice set or not does not affect the DM's problem. Hence, never-considered alternatives have no impact on what we can learn about preferences.%\footnote{If an alternative is never chosen, it is w.l.o.g. to set its consideration parameter to zero, even if the said alternative is never chosen because it is dominated by always-considered alternatives (see Section \ref{S:LikeTract}).}

Under the assumptions of Theorem \ref{T:Benchmark}, identification attains. Notably, each consideration parameter $\aparam_{j}$ is identified (as long as $d_j$ is chosen with positive probability at some $\covx$).
\begin{theorem}\label{T:ARC_independent}
	Suppose Assumptions \ref{IA0}, \ref{IA2}-\ref{IA3}, \ref{TA0}-\ref{TA2}, \ref{TA3} hold, and Assumption \ref{IA1} holds for $\covxs \subset \mathcal{S}$ s.t. $\forall \covx\in \covxs$
	\[
	U_{\pparam}(d_1,x)>U_{\pparam}(d_j,\covx) \Rightarrow U_{\pparam}(d_1,\covx)>U_{\pparam}(d_{j+1},\covx), \quad \forall j>1.
	\]
	Then $f(\cdot)$ is identified and so are $\aparam_{1}$ and $\aparam_{2}$. In addition, if $\Pr(d = d_j|\covx) \neq 0$ for some $x$, then $\aparam_j$ is identified.
\end{theorem}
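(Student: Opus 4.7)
My plan is to invoke Theorem~\ref{T:Benchmark} and then use the ARC multiplicative structure to read off each $\aparam_j$ from the identified $\mathcal{O}(\cdot;\cdot)$ quantities.

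First I verify that the hypotheses of Theorem~\ref{T:Benchmark} are all in force. Assumptions \ref{IA2} and \ref{IA3} -- consideration independent of preferences and of $x$ -- follow immediately from Assumption~\ref{TA3}, under which each $\aparam_k$ is a fixed constant. Assumptions \ref{IA0}, \ref{TA0}--\ref{TA2}, and Assumption~\ref{IA1} together with the $d_1$-cutoff-ordering condition are assumed directly in the statement. Condition~1 of Theorem~\ref{T:Benchmark} -- positive probability that $d_1$ and $d_2$ are jointly considered -- holds because, under ARC, $\mathcal{O}(\{d_1,d_2\};\emptyset) = \aparam_1 \aparam_2 > 0$. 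Theorem~\ref{T:Benchmark} then identifies $f(\cdot)$, $\mathcal{O}(d_1;\emptyset)$, $\mathcal{O}(\{d_1,d_2\};\emptyset)$, and, whenever $\Pr(d=d_j|x) > 0$ at some $x$ for a given $j > 2$, also $\mathcal{O}(\{d_1,d_j\};\{d_2,\ldots,d_{j-1}\})$.

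Next I translate these identified $\mathcal{O}$-values into the $\aparam_j$'s. Under ARC independence, $\mathcal{O}(d_1;\emptyset) = \aparam_1$ and $\mathcal{O}(\{d_1,d_2\};\emptyset) = \aparam_1 \aparam_2$; since $\aparam_1 > 0$, this identifies $\aparam_1$ directly and then $\aparam_2$ by division. For $j > 2$ with $\Pr(d=d_j|x)>0$ at some $x$, the ARC product formula yields $\mathcal{O}(\{d_1,d_j\};\{d_2,\ldots,d_{j-1}\}) = \aparam_1 \aparam_j \prod_{k=2}^{j-1}(1-\aparam_k)$. I proceed by induction on $j$: assuming $\aparam_1,\ldots,\aparam_{j-1}$ are already identified, dividing through by $\aparam_1 \prod_{k=2}^{j-1}(1-\aparam_k)$ recovers $\aparam_j$, provided this product is nonzero.

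The main obstacle is the boundary case in which $\aparam_{k^*} = 1$ for some $k^* \in \{2,\ldots,j-1\}$: Assumption~\ref{TA3} permits this because it requires only the \emph{existence} of some always-considered alternative. In that case the denominator vanishes and the recursion stalls. I would address this by re-running the sequential peeling argument underlying Theorem~\ref{T:Benchmark} with $d_{k^*}$ substituted for $d_1$ as the anchoring always-considered alternative: because $d_{k^*}$ is always in the consideration set, a derivative expansion of $\Pr(d=d_j|x)$ analogous to Equation~\eqref{eq:L_2-1}, evaluated at $x$'s where $\cmap_{k^*,j}(x)$ approaches $\bar\pparam$, isolates a chain of cutoff densities anchored at $d_{k^*}$, from which $\aparam_j$ can be peeled off using the already-identified $\aparam_k$ for indices between $k^*$ and $j$. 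Verifying that the Single Crossing Property together with the cutoff-ordering hypothesis continues to support this iterative peeling with $d_{k^*}$ in the anchor role is the most technical part of the argument.
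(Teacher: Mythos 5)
Your first half is exactly the paper's argument: invoke Theorem \ref{T:Benchmark} to get $f(\cdot)$ and $\mathcal{O}(\{d_1,d_2\};\emptyset)=\aparam_1\aparam_2$, read off $\aparam_1$ from the choice probability of $d_1$ at the top of the support (equivalently $\mathcal{O}(d_1;\emptyset)$), and divide to get $\aparam_2$. The gap is in your treatment of $j>2$. Under Assumption \ref{TA3} the always-considered alternative $d^*$ can be any element of $\Dc$, and whenever $2\leq d^*\leq j-1$ you have $\mathcal{O}(\{d_1,d_j\};\{d_2,\dots,d_{j-1}\})=\aparam_1\aparam_j\prod_{k=2}^{j-1}(1-\aparam_k)=0$ identically. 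Identifying a quantity that is structurally zero carries no information about $\aparam_j$, even though $\Pr(d=d_j|\covx)>0$ is perfectly possible (e.g., $d_j$ is chosen when it is considered together with $d^*$ and preferred to it). So your recursion genuinely stalls for every $j>d^*$, which is not an edge case but the typical situation. Your proposed repair -- re-anchoring the peeling argument at $d_{k^*}$ -- is only a sketch, and it needs the cutoffs $\cmap_{k^*,j}(\covx)$ to be ordered and to sweep the support; the theorem's hypothesis orders only the $\cmap_{1,j}(\covx)$'s, and Fact \ref{A:CO} does not let you deduce $\cmap_{k^*,j}(\covx)<\cmap_{k^*,j+1}(\covx)$ from that. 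As written, the repair imports assumptions the theorem does not make.

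The paper closes this differently and more cheaply. Fix any $\covx$ at which $\Pr(d=d_j|\covx)\neq 0$ and let $\mathcal{E}$ be the set of alternatives chosen with positive probability at that $\covx$. One first shows (by transitivity and the presence of an always-considered alternative) that $\{\Pr(d=d_k|\covx)\}_{k\in\mathcal{E}}$ does not depend on $\aparam_l$ for $l\notin\mathcal{E}$, so with $F(\cdot)$ already known these choice probabilities form a system of $|\mathcal{E}|$ nonlinear equations in the unknowns $\{\aparam_k\}_{k\in\mathcal{E}}$ alone. Uniqueness of the solution is then proved by a global monotonicity argument (Lemmas \ref{lem1} and \ref{lem2}): if two distinct parameter vectors both matched the data, summing the choice probabilities over the indices where one vector strictly exceeds the other yields a strict inequality, a contradiction. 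This route needs no cutoff ordering beyond what Theorem \ref{T:Benchmark} already used and handles every location of $d^*$ uniformly. You would need either to adopt this system-uniqueness argument or to fully work out (and add hypotheses for) your re-anchored peeling before your proof is complete.
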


\subsubsection{Preference-Dependent Consideration}\label{sec:gendependence}

Returning to our example with three alternatives, recall that we have an additional moment $\Pr(d=d_3|\covx)$. The information it provides allows us to identify
some forms of dependence between consideration and preferences, i.e., to relax Assumption \ref{IA2}. To see how, suppose $d_2$ is always considered. Then, with preference dependence, the choice frequencies become:
\begin{align*}
	\Pr(d=d_1|x) = \int_{0}                       ^{\cmap_{1,2}(x)}\aparam_1(\pparam) dF \quad \text{and} \quad
	\Pr(d=d_3|x) = \int_{\cmap_{2,3}(x)}^{\bar{\pparam}}\aparam_3(\pparam) dF.
\end{align*}
The ratio of the derivatives of these two moments yields $\frac{\aparam_1(\pparam)}{\aparam_3(\pparam)}$.
More assumptions are required to obtain point identification of the $\aparam_{j}(\pparam)$'s.  In Section \ref{S:3Alt} we provided identification results for Binary Consideration Types. Here, leveraging the additional structure provided by the ARC model, we can allow for more flexible dependence between consideration and preferences. We do so through a \emph{proportionally shifting} consideration mechanism, formally defined below. This mechanism may arise when there is a cost to evaluate each alternative. In such a case, the DMs may consider alternatives that they ex-ante deem more aligned with their preferences (e.g., the DM's consideration shifts away from riskier to safer alternatives as her risk aversion increases).

\begin{assumptionSP}{ARC.P}[ARC with Proportional Consideration]\label{TA3b}
	The consideration mechanism follows the $\modA$ model with $\{D\geq 4 ~ \& ~ 1\leq d^*\leq D\}$ or $\{D=3 ~ \& ~ d^*=2\}$, and\[
	\aparam_j(\pparam) =
	\begin{cases}
	\aparam_j(1 - \alpha(\pparam)) & \text{if } j < d^*  \\
	1							   & \text{if } j = d^*  \\
	\aparam_j(1 + \alpha(\pparam)) & \text{if } j > d^*
	\end{cases}
	\]
   s.t. $\alpha(\cdot)$ is differentiable a.e., $\alpha'(\cdot) \neq 0$ a.e., $\alpha(\bar{\pparam}) = 0$, $0<\aparam_j(\pparam)< 1, ~ \forall j \neq d^*$, $\forall\pparam \in [0,\bar \pparam]$.	
\end{assumptionSP}

In the case with three alternatives, $\frac{\aparam_1(\pparam)}{\aparam_3(\pparam)}=\frac{\aparam_1(1 - \alpha(\pparam))}{\aparam_3(1 + \alpha(\pparam))} $.   From this, $\frac{\aparam_1}{\aparam_3}$ is identified when $\covx$ and $\covx'$ are chosen such that $\cmap_{1,2}(\covx) = \cmap_{2,3}(\covx') =\bar{\pparam}$. Once $\frac{\aparam_1}{\aparam_3}$ is identified, $\frac{1-\alpha(\pparam)}{1+\alpha(\pparam)}$ is known for all $\pparam$; hence, $\alpha(\pparam)$ can be solved for. Identification of $f(\pparam)$ follows from substituting $\alpha(\pparam)$ into the expression for $\frac{\Pr(d=d_1|x)}{dx}$. The theorem below generalizes this argument.
\begin{definition}{}(No Three Way Ties)\label{NoTWT} For a given $\covx$, there are no-three way ties if $\nexists \pparam \in [0,\bar{\pparam}]$ and $\{j,k,l\}$ s.t. $U(d_j,x)=U(d_k,x)=U(d_l,x)$.
\end{definition}
\phantomsection
\begin{theorem}\label{T:ARC_dependent_1}
	Suppose Assumptions \ref{IA0}, \ref{IA3}, \ref{TA0}-\ref{TA2}, \ref{TA3b} hold, and Assumption \ref{IA1} holds for $\covxs$ s.t. $\forall \covx\in \covxs$ there are no three-way ties and
	\begin{align*}
	U_{\pparam}(d_1,x)>U_{\pparam}(d_j,x) &\Rightarrow U_{\pparam}(d_1,x)>U_{\pparam}(d_{j+1},x), \quad \forall j>1,\\
	U_{\pparam}(d_D,x)>U_{\pparam}(d_j,x) &\Rightarrow U_{\pparam}(d_D,x)>U_{\pparam}(d_{j-1},x), \quad \forall j<D,
	\end{align*}
	and $\exists x \in \covxs$ s.t. $\cmap_{j,k}(x)\leq0, ~ \forall j,k, ~j<k$.

Then $f(\cdot)$ and $\{\aparam_{j}(\cdot)\}_{j=1}^D$ are identified.
\end{theorem}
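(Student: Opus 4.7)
My plan is to generalize the three-alternative argument sketched immediately before the theorem to an arbitrary $D$ and arbitrary $d^*$. Under Assumption~\ref{TA3b}, Fact~\ref{A:CO}, the two cutoff monotonicities in the hypothesis, and the fact that $d^*$ is always considered, $\Pr(d=d_1\mid x)$ and $\Pr(d=d_D\mid x)$ decompose into integral forms whose $\pparam$-dependence inside the integrand is only through the scalar shift $\alpha(\pparam)$ and the known base rates $\aparam_j$. Differentiating in $x$ produces sums---one term per cutoff---of the form (jump in integrand)$\cdot f(\cmap(x))\cdot d\cmap(x)/dx$, where $\cmap(x)$ and $d\cmap(x)/dx$ are treated as data via Assumption~\ref{TA2}.

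The core step is to isolate a single surviving cutoff term. By Assumption~\ref{IA1} there exists $x_1\in\covxs$ with $\cmap_{1,2}(x_1)=\bar\pparam$; the ordering $\cmap_{1,2}(x)<\cmap_{1,3}(x)<\cdots<\cmap_{1,D}(x)$ supplied by the first hypothesis forces $\cmap_{1,j}(x_1)>\bar\pparam$ for $j>2$, so $f$ vanishes at those cutoffs and only the $\cmap_{1,2}$-term in $d\Pr(d=d_1\mid x)/dx$ survives at $x_1$. The anchor $x\in\covxs$ with all $\cmap_{j,k}(x)\le 0$, combined with continuity of cutoffs and the ordering $\cmap_{1,D}(x)<\cdots<\cmap_{D-1,D}(x)$, analogously produces $x_2\in\covxs$ with $\cmap_{D-1,D}(x_2)=\bar\pparam$ and the remaining $\cmap_{j,D}(x_2)$ outside $(0,\bar\pparam)$, isolating the $\cmap_{D-1,D}$-term in $d\Pr(d=d_D\mid x)/dx$ at $x_2$. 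Because $\alpha(\bar\pparam)=0$ and the no-three-way-ties hypothesis rules out coincidences, the ratio of the two isolated expressions identifies $\aparam_1/\aparam_D$ up to known multiplicative factors.

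For any interior $\pparam\in[0,\bar\pparam]$, continuity of the cutoffs and the two extreme configurations above yield pairs $(x,x')$ with $\cmap_{1,2}(x)=\cmap_{D-1,D}(x')=\pparam$; at these pairs only one cutoff again survives in each derivative by the same ordering argument. The ratio of the two derivatives equals a known multiple of $(1-\alpha(\pparam))/(1+\alpha(\pparam))$, which inverts to $\alpha(\pparam)$ pointwise. Substituting $\alpha(\pparam)$ back into either isolated derivative identifies $\aparam_1 f(\pparam)$ (and $\aparam_D f(\pparam)$) on the whole support, and $\int_0^{\bar\pparam}f=1$ fixes the scale, delivering $\aparam_1$, $\aparam_D$, $\alpha(\cdot)$, and $f(\cdot)$.

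The main obstacle is recovering the remaining base rates $\{\aparam_k:k\notin\{1,D,d^*\}\}$ when $D\ge 4$. My plan is an inductive extension: Fact~\ref{A:CO} propagates the two given monotonicities to all triples containing $d_1$ or $d_D$, providing enough partial order to successively isolate cutoffs $\cmap_{1,j}$ (resp.\ $\cmap_{j,D}$) at $\bar\pparam$ one at a time, while earlier-identified inner cutoffs sit at known positions. At each step, $\alpha(\bar\pparam)=0$ together with the previously identified quantities turns the single surviving jump into a linear equation in one new base rate. Verifying that $\covxs$ contains all the $x$-configurations needed for this sweep---that is, that Assumption~\ref{IA1}, the all-cutoffs-nonpositive anchor, and continuity of cutoffs jointly suffice to move each $\cmap_{1,j}$ or $\cmap_{j,D}$ to $\bar\pparam$ while controlling the others---is the most delicate part of the argument.
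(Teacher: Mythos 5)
Your high-level strategy---differentiate $\Pr(d=d_1|\covx)$ and $\Pr(d=d_D|\covx)$, isolate cutoff terms at $\bar\pparam$ where $\alpha(\bar\pparam)=0$, and take ratios---is the paper's Step 1, but that step only covers $D=3$ with $d^*=2$, and two of your key claims fail for general $D$ and $d^*$; these failures are exactly where the paper's proof does most of its work. First, the isolated jump of the derivative of $\Pr(d=d_1|\covx)$ at $\cmap_{1,2}(\covx)=\bar\pparam$ is $\aparam_1(\bar\pparam)\aparam_2(\bar\pparam)f(\bar\pparam)$ (times $d\cmap_{1,2}/dx$), and the analogous jump for $d_D$ is $\aparam_{D-1}(\bar\pparam)\aparam_D(\bar\pparam)f(\bar\pparam)$, so the ratio delivers $\aparam_1\aparam_2/(\aparam_{D-1}\aparam_D)$ rather than $\aparam_1/\aparam_D$ ``up to known factors'': $\aparam_2$ and $\aparam_{D-1}$ are unknown unless $d^*\in\{2,D-1\}$. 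The paper extracts individual base rates by writing the full system of equations for $\Pr(d=d_j|\covx^j)$, $j=2,\dots,d^*$, near $\bar\pparam$ and using the telescoping identity implied by $\aparam_{d^*}=1$ so that the \emph{sum} of the equations recovers $\aparam_1(\pparam)f(\pparam)$; the level of $\aparam_1$ is then pinned down by intersecting a decreasing relation between $\aparam_1$ and $\aparam_D$ (from the integrated moments $\Pr(d=d_1|\covx^{\bar\pparam})=\aparam_1(1-E\alpha(\pparam))$ and $\Pr(d=d_D|\covx^{0})=\aparam_D(1+E\alpha(\pparam))$) with an increasing relation built sequentially as the cutoffs $\cmap_{d^*,j}$ cross $\bar\pparam$. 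Your ``$\int_0^{\bar\pparam}f=1$ fixes the scale'' shortcut closes the system only when $D=3$.

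Second, your claim that for interior $\pparam$ the configurations with $\cmap_{1,2}(\covx)=\pparam$ still have ``only one surviving cutoff'' is false: once $\cmap_{1,2}(\covx)<\bar\pparam$, the higher cutoffs $\cmap_{1,3}(\covx),\dots$ generally also lie in $(0,\bar\pparam)$, so the derivative of $\Pr(d=d_1|\covx)$ contains several terms with unknown densities. The paper resolves this with an iterative sweep: it defines nested sets on which every competing cutoff sits either above $\bar\pparam$ or in a region where $f$ and $\alpha$ are already identified, and proves (via a separate finite-termination lemma exploiting the no-three-way-ties condition) that the sweep reaches $0$ in finitely many steps. You correctly flag this as the delicate part, but it is not a routine verification---it needs the termination lemma and requires all base rates $\aparam_j$ as inputs before it can start, which is why the paper identifies the constants first and the functions $\alpha(\cdot),f(\cdot)$ last. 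Finally, the identity of $d^*$ must itself be recovered from the data (the paper reads it off from which $\Pr(d=d_j|\covx^{\bar\pparam})$ vanish), and the configurations $d^*\in\{1,2,D\}$ each require a separate argument; your proposal is silent on both.
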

%The second condition in the theorem -- $U_{\pparam}(d_D,x)>U_{\pparam}(d_j,x) \Rightarrow U_{\pparam}(d_D,x)>U_{\pparam}(d_{j-1},x)$ -- is only needed in the case that the default alternative is $D$ (i.e., the most expansive alternative).  For all other cases, this assumption can be omitted.
The conditions of the theorem are stronger than in Theorem \ref{T:ARC_independent}, as they impose relative order of the cutoffs not only for alternative $d_1$ but also for $d_D$. In many cases, the relative order of $\cmap_{1,j}(x)$'s alone is sufficient, for example when $d_1$ is always considered.

%Proportionally shifting consideration  may arise when there is a cost to evaluate each alternative. In such a case, the DMs may consider alternatives that they ex-ante deem more We do so through a \emph{proportionally shifting} consideration mechanism (formally defined below).  It may arise when there is a cost to evaluate each alternative.
\subsubsection{Identification with Alternative-specific Excluded Regressors}\label{Extreme Variation}
By leveraging features of the ARC model, the identification results in Section \ref{subsec:altspec} can be extended to the case where the consideration of $d_j$ is a function both of $\covx_j$ and preferences.  This differs from Corollary \ref{Broad-alternative specific_corollary}, which restricted the consideration of alternative $d_j$ to depend  only on the default alternative's excluded regressor.  We continue to assume that the choice set is $\{d_1,d_2,d_3\}$ and that $d_2$ is always considered. Let the consideration of $d_j$ be a measurable function of $\covx_j$ and $\nu$,  continuous in its first argument:  $\aparam_{j} =\aparam_j(\covx_j,\pparam)$.  Similar to the example in Section \ref{subsec:altspec}, we construct two vectors, $\covx^i$ and $\covx^{ii}$, such that: \emph{(i)}  $\pparam = \cmap_{1,2}(\covx^i)<\cmap_{1,3}(\covx^i)<\cmap_{2,3}(\covx^i)$; and \emph{(ii)} $\cmap_{2,3}(\covx^{ii})<\cmap_{1,3}(\covx^{ii})<\cmap_{1,2}(\covx^{ii}) = \pparam$.
 The derivative of the choice frequency of $d_1$ with respect to $\covx_2$ for these cases are,  respectively:
\begin{align}\label{eq:richvar}
	\textcolor{white}{i} (i) \quad \frac{\partial \Pr(d=d_1|x)}{\partial x_2}
	&=\textcolor{white}{(1-\aparam_3(\covx_3,\pparam))}\aparam_1(x_{1},\pparam) f( \pparam) \frac{\partial \cmap_{1,2}(x) }{\partial x_2};\\
	(ii) \quad \frac{\partial \Pr(d=d_1|x)}{\partial x_2} &=(1-\aparam_3(\covx_3,\pparam))\aparam_1(\covx_1,\pparam)f(\pparam ) \frac{\partial \cmap_{1,2}(x) }{\partial x_2}. \nonumber
\end{align}
The ratio of the expressions in Equation \eqref{eq:richvar} identifies $\aparam_3(\covx_3,\pparam)$. Using a similar logic, we can identify $\aparam_1(\covx_1,\pparam)$. Plugging these consideration probabilities into Equation \eqref{eq:richvar} identifies $f(\pparam)$. In sum, alternative-specific variation yields identification without large support and without the independence Assumptions \ref{IA2} and \ref{IA3}. It is also possible to allow consideration of $d_1$ (and $d_3$) to depend on $x_1$, $\pparam$, as well as $x_3$. The key exclusion restriction in this case is that the consideration of $d_2$ is independent of all components of $\covx$. Our last identification result generalizes this example.

\begin{assumptionSP}{ARC.AS}\label{TA3.AS}
	The consideration mechanism follows the $\modA$ model. The consideration probability of each alternative $d_j$ is a measurable function of $\covx_j$ and preferences: $\aparam_{j}= \aparam_{j}(\covx_j,\pparam)$, continuous in the first argument. Default alternative $d^*$ is s.t. $\aparam_{d^*}(\covx_{d^*},\pparam)=1$ for all $\covx_{d^*} \in\mathcal{S}$ and for all $\pparam \in [0,\bar \pparam]$.
\end{assumptionSP}

\begin{theorem} \label{Broad-alternative specific_ARC}
	Suppose Assumptions \ref{IA0}, \ref{TA0}-\ref{TA2}, \ref{TA3.AS} hold. Suppose there is alternative-specific variation and the choice sets contain at least three alternatives. Suppose for a given value of $\pparam$ there exists an $\covx=(\covx_1,\covx_2,\dots,\covx_D)$, and an open neighborhood around it in $\mathcal{S}$, s.t.
\[
		U_{\pparam}(d_1,\covx) = U_{\pparam}(d_2,\covx) =\cdots= U_{\pparam}(d_D,\covx).
\]
	Then $f(\pparam)$ and $\{\aparam_{j}(\covx_j,\pparam)\}_{j=1}^{D}$ are identified.
\end{theorem}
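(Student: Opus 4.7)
The plan is to extend the three-alternative sketch immediately preceding the theorem to arbitrary $D$ by taking derivatives in the default's direction and exploiting the ARC product structure. Fix $\pparam\in[0,\bar\pparam]$ and, by the hypothesis, pick a base point $\covx^0$ inside an open neighborhood $\mathcal{N}\subset\mathcal{S}$ on which $U_\pparam(d_1,\covx^0)=\cdots=U_\pparam(d_D,\covx^0)$; in particular every pairwise cutoff $\cmap_{j,k}(\covx^0)$ equals $\pparam$. Because alternative-specific variation makes $U_\pparam(d_j,\covx)$ and $\aparam_j(\cdot,\pparam)$ depend on $\covx$ only through $\covx_j$, the coordinates can be perturbed independently within $\mathcal{N}$. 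For each non-default $j$ and each sign assignment $\sigma=(\sigma_l)_{l\notin\{j,d^*\}}\in\{-,+\}^{D-2}$, I produce $\covx^\sigma\in\mathcal{N}$ by leaving $\covx_j^\sigma=\covx_j^0$ and $\covx_{d^*}^\sigma=\covx_{d^*}^0$ (hence $\cmap_{j,d^*}(\covx^\sigma)=\pparam$) and perturbing each $\covx_l^\sigma$, $l\notin\{j,d^*\}$, so that $\mathrm{sign}\bigl(U_\pparam(d_l,\covx^\sigma)-U_\pparam(d_j,\covx^\sigma)\bigr)=\sigma_l$. Fact \ref{A:CO} then fixes all pairwise cutoff orderings around $\pparam$ consistently and ensures that the only cutoff remaining at $\pparam$ is $\cmap_{j,d^*}$, while every other $\cmap_{k,d^*}$, $k\neq j$, has moved a finite distance away.

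Next I differentiate $\Pr(d=d_j|\covx)$ with respect to $\covx_{d^*}$ at $\covx^\sigma$. Choosing the default's direction is essential: because Assumption \ref{TA3.AS} fixes $\aparam_{d^*}\equiv 1$ and every other $\aparam_k$ depends on $\covx$ only through $\covx_k\neq\covx_{d^*}$, varying $\covx_{d^*}$ touches the cutoffs $\cmap_{k,d^*}$ alone and leaves consideration probabilities unchanged. Only $\cmap_{j,d^*}=\pparam$ can contribute an $f(\pparam)$-term; at any other cutoff $\cmap_{k,d^*}$, $k\neq j$, the marginal swap is between $d_k$ and $d^*$, and since $d_k\sim d^*$ there with no triple tie involving $d_j$ (by Fact \ref{A:CO}), $\Pr(d=d_j)$ is unaffected. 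The marginal DM at $\pparam$ picks $d_j$ precisely when $d_j$ is in her consideration set and every $d_l$ with $\sigma_l=+$ (i.e., strictly better than $d_j$ at $\pparam$) is not. Independence in the ARC model then gives
\begin{align*}
\frac{\partial\Pr(d=d_j|\covx^\sigma)}{\partial\covx_{d^*}}=\aparam_j(\covx_j,\pparam)\,f(\pparam)\!\!\prod_{l:\,\sigma_l=+}\!\!\bigl(1-\aparam_l(\covx_l^\sigma,\pparam)\bigr)\,\frac{\partial\cmap_{j,d^*}(\covx^\sigma)}{\partial\covx_{d^*}},
\end{align*}
up to a sign determined by the utility specification.

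Identification then follows by taking ratios. At $\sigma=(-,\ldots,-)$ the product over $l$ collapses to one, yielding $\aparam_j(\covx_j,\pparam)f(\pparam)$. Flipping one coordinate $\sigma_k$ from $-$ to $+$ (by perturbing only $\covx_k$ inside $\mathcal{N}$) and dividing the two derivatives isolates $1-\aparam_k(\covx_k^\sigma,\pparam)$, hence $\aparam_k(\covx_k^\sigma,\pparam)$ at the chosen value of $\covx_k^\sigma$; sweeping that value through the neighborhood, and relocating the base point $\covx^0$ as needed, identifies the function $\aparam_k(\cdot,\pparam)$. Because $D\geq 3$, for each non-default $j$ I can repeat the construction with $j$ and some other non-default $j'$ interchanged, which pins down $\aparam_j(\covx_j,\pparam)$ itself; dividing the identified product $\aparam_j(\covx_j,\pparam)f(\pparam)$ by $\aparam_j(\covx_j,\pparam)$ recovers $f(\pparam)$. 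With $\aparam_{d^*}\equiv 1$ by assumption, every $\aparam_j(\cdot,\pparam)$ is identified, and iterating over $\pparam$ delivers the claimed pointwise identification of $f$ and of the entire consideration mechanism.

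The principal obstacle is the geometric bookkeeping of the first two paragraphs: verifying that after the perturbation $\cmap_{j,d^*}$ is the only cutoff at $\pparam$ moved by $\partial/\partial\covx_{d^*}$, and that the corresponding marginal-choice probability indeed takes the product form above for every sign pattern. Both checks reduce to combining alternative-specific variation (which decouples cutoffs along the $\covx_{d^*}$ direction) with Fact \ref{A:CO} (which rules out the triple-indifference configurations that would otherwise entangle the cutoffs and complicate the ranking of alternatives near $\pparam$).
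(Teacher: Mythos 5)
Your proposal is correct and follows essentially the same route as the paper's proof: perturb the non-default regressors within the neighborhood of the common-indifference point to control which alternatives beat $d_j$ at $\pparam$, differentiate $\Pr(d=d_j|\covx)$ with respect to $\covx_{d^*}$ (which moves only the cutoffs and not the consideration probabilities, since $\aparam_{d^*}\equiv 1$ and each $\aparam_k$ depends only on $\covx_k$), and take ratios of the resulting expressions to isolate $1-\aparam_k(\cdot,\pparam)$, then recover $\aparam_j$ and $f(\pparam)$ by swapping roles using $D\ge 3$. The paper uses only the two sign configurations you ultimately rely on (all-worse, and exactly one better), so your full family of sign patterns is harmless extra generality rather than a different argument.
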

%Theorem \ref{Broad-alternative specific_ARC} states identification for a general version of the $\modA$ model where the alternative-specific consideration probability can depend on both its own regressor and DMs' preferences. In contrast, existing literature \citep{Goeree2008,Abaluck2019,kawaguchi2019designing} only allows for consideration dependence on its own regressor, but not preferences.

Existing identification results that rely on alternate-specific variation \citep{Goeree2008,Abaluck2019,kawaguchi2019designing} allow for consideration dependence on its own regressor, but not preferences. Theorem \ref{Broad-alternative specific_ARC} states identification for a general version of the ARC model where the alternative-specific consideration probability can depend on both its own regressor and DMs' preferences.

\section{Likelihood and Tractability}\label{S:LikeTract}
We now turn to the computational aspects of limited consideration models under the SCP and, in particular, of their likelihood function. Consider a generic consideration mechanism.

A computationally appealing way to write the likelihood function is to determine the probability that a DM with preference parameter $\pparam$ chooses alternative $\dor_j$ conditional on $\covx$. Alternative $\dor_{ j}$ is chosen if and only if $d_j$ is in the consideration set and every alternative that is preferred to $d_j$ is not. Denote the set of alternatives that are preferred to $\dor_{j}$ by
\begin{align*}
	\Bc_{\pparam}(\dor_j,x)\equiv\{k :\,\util_\pparam(\dor_k,\covx) > \util_\pparam(\dor_j,\covx)\}.
\end{align*}
Then,
\begin{align}\label{basic_p}
	\Pr(\dor_j | \covx) &= \int \Pr(\dor_j | \covx, \pparam)dF=  \int \mathcal{O}_{\pparam}^{\covx}(d_j;\Bc_{\pparam}(\dor_j,x))dF,
\end{align}

The object on the RHS does not require evaluating the utility of each alternative within each possible consideration set. In fact,  $U_{\pparam}(d_j,x)$ needs to be computed only once for each $\pparam$, $d_j$, and $\covx$ to create $\Bc_{\pparam}(\dor_j,x)$, which does not vary with the consideration set. Hence the computational complexity lies in the mapping from $\mathcal{O}_{\pparam}^{\covx}(\cdot)$ to the parameters governing the consideration mechanism. This, however, may not even require enumerating all possible consideration sets. To demonstrate this with a concrete example, we proceed with the basic ARC model. In this case, the RHS of Equation \eqref{basic_p} is:
\begin{align}\label{ARClike}
	I(\dor_j|\covx)&\equiv \aparam_j\int \prod_{k \in \Bc_{\pparam}(\dor_j,x)}(1-\aparam_k)dF.
\end{align}
Given $\{\aparam_j\}_{j=1}^D$, the integrand $\prod_{k\in \Bc_{\pparam}(\dor_j,x)}(1-\aparam_k)$ is piecewise constant in $\pparam$ with at most $D-1$ breakpoints, corresponding to indifference points between alternatives $j$ and $k$, i.e., $\cmap_{j,k}(x)$, that are computed only once for each observed $\covx$. There are at least two methods to compute this integral. First, for every $d_j$ and $\covx$, we can directly compute the breakpoints and hence write $I(\dor_j|\covx)$ as a weighted sum:
\begin{align*}
	I(\dor_j|\covx)&=\aparam_j\sum_{h=0}^{D-1} \left( (F(\pparam_{h+1})-F(\pparam_{h}))\prod_{k \in \Bc_{\pparam_h}(\dor_j,x)}(1-\aparam_k) \right),
\end{align*}
where $\pparam_{h}$'s are the sequentially ordered breakpoints augmented by the integration endpoints: $\pparam_{0}=0$ and $\pparam_{D}=\bar{\pparam}$.   This expression is trivial to evaluate given $F(\cdot)$ and breakpoints $\{\pparam_{h}\}_{h=0}^{\Cn}$. More importantly, since the breakpoints are invariant with respect to the consideration probabilities, they are computed only once for each $\covx$. This simplifies the likelihood maximization routine by orders of magnitude, as each evaluation of the objective function involves a summation over products with at most $\Cn$ terms. A second approach is to compute $I(\dor_j|\covx)$ using Riemann approximation:
\begin{align*}
	I(\dor_j|\covx) &\approx \aparam_j \frac{\bar{\pparam}}{M} \sum_{m=1}^{M}\left(f({\pparam_m})\prod_{k \in \Bc_{\pparam_m}(d_j,x)}(1-\aparam_k)\right),
\end{align*}
where $M$ is the number of intervals in the approximating sum, $\frac{\bar{\pparam}}{M}$ is the intervals' length, $\pparam_{m}$'s are the intervals' midpoints, and $f(\cdot)$ is the density of $F(\cdot)$. Again, one does not need to evaluate the utility from different alternatives in the likelihood maximization. Instead, one \emph{a priori} computes the utility rankings for each $\pparam_m$, $m=1,\dots,M$. These rankings determine $\Bc_{\pparam_m}(d_j,x)$.  The likelihood maximization is now a standard search routine over $\{\aparam_j\}_{j=1}^{D}$ and $f(\cdot)$. Our theory restricts $f(\cdot)$ to the class of continuous and strictly positive functions.  In practice, the search is over a class of non-parametric estimators for $f(\cdot)$.\footnote{One could use a mixture of Beta distributions \citep{ghosal2001}, as we do in Section \ref{sec:appplication}.} If the density is parameterized, i.e., $f(\pparam_m)\equiv f(\pparam_m;\theta^f)$, then the maximization is over $\{\aparam_j\}_{j=1}^{D}$ and $\theta^f$. Finally, the interval midpoints are the same across all DMs as they do not depend on $\covx$, further reducing computational burden.\footnote{Depending on the class of $f(\cdot)$, it may be more accurate to compute $I(\dor_j|\covx)$ by substituting $\frac{\bar{\pparam}}{M}f({\pparam_m})$ with $F(\overline{\pparam}_m)-F(\underline{\pparam}_m)$, where $\overline{\pparam}_m$ and $\underline{\pparam}_m$ are the endpoints of the corresponding interval.}

Allowing consideration to depend on preferences (or on $\covx$) introduces only minimal adjustments to the likelihood function. For example, let each consideration function be parameterized by $\theta_j$: $\aparam_{j}(\pparam)\equiv\aparam_{j}(\pparam;\theta_{j})$. Then, at each $\pparam$, we can substitute $\aparam_{j}$ with the corresponding $\aparam_{j}(\pparam;\theta_{j})$, and the likelihood maximization is now over $\{\theta_j\}_{j=1}^{D}$ and $\theta^f$. Given the desired level of parameterization -- i.e., the dimensionality of the parameter vectors $\theta_{j}$ and $\theta^{f}$ -- the computational complexity of the problem grows polynomially in $D$.

As a final remark, if alternative $d_j$ is never chosen, then one can conduct estimation as if $d_j$ were not in the choice set. Indeed, per Equation \eqref{ARClike}, $\aparam_j$ contributes positively to the likelihood if and only if alternative $d_j$ is chosen. When it is never chosen, it may only enter via the term $(1-\aparam_j)$; hence, the likelihood will be maximized by setting $\aparam_j=0$. Therefore, setting $\aparam_{j}=0$ for all zero-share alternatives, regardless of why they were not chosen, has no impact on estimation. This too may speed up estimation.
\section{Limited Consideration and RUM: A Comparison}\label{sec:modprop}

We focus on a \emph{standard} application of the RUM with full consideration in the context of our example in Section \ref{sec:mainexample}.  The final evaluation of the utility that the DM derives from alternative $j$ now includes a separately additive error term:
\begin{align}
	V_{\pparam}(L_j(x)) = U_{\pparam}(L_j(x))+\varepsilon_j,\label{eq:RUM}
\end{align}
where, as before, $\pparam$ captures unobserved heterogeneity in preferences, and $\varepsilon_j$ is assumed independent of the random coefficients (in this application, $\pparam$).

Typical implementations of this model further specify that $\varepsilon_j$ is i.i.d. across alternatives (and DMs) with a Type 1 Extreme Value distribution, following the seminal work of \cite{Mcfadden1974}. This yields a Mixed Logit that is distinct from the commonly used one in \cite{McFaddenTrain2000}.  In their model, random coefficient(s) enter the utility function linearly, while in the context of expected utility they enter nonlinearly.  We now discuss two properties of the Mixed Logit that hinder its applicability in our context.

\subsection{Monotonicity}
Coupling utility functions in the hyperbolic absolute risk aversion (HARA) family, for example CARA or CRRA, with a Type 1 Extreme Value distributed additive error yields:
\begin{proposition}\citep[Non-monotonicity in RUM,][]{ape:bal18,Wilcox08}
	In Model \eqref{eq:RUM} with HARA preferences and $\varepsilon_j$ i.i.d. Type 1 Extreme Value, as the DM's risk aversion increases, the probability that she chooses a riskier alternative declines at first but eventually starts to increase.
\end{proposition}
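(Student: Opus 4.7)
Under Model \eqref{eq:RUM} with i.i.d.\ Type 1 Extreme Value errors, the conditional choice probability given $\pparam$ takes the logit form
\begin{equation*}
\Pr(L_j \mid \pparam, x) = \frac{\exp\!\bigl(U_\pparam(L_j(x))\bigr)}{\sum_{k=1}^{D} \exp\!\bigl(U_\pparam(L_k(x))\bigr)}.
\end{equation*}
Enumerate the alternatives via Fact \ref{A:NOA} so that $d_1$ is the riskiest lottery. The plan is to show that $\pparam\mapsto \Pr(L_1\mid\pparam,x)$ is non-monotone by combining (i) an intermediate regime in which this probability drops strictly below $1/D$ with (ii) the limit $\lim_{\pparam\to\infty}\Pr(L_1\mid\pparam,x) = 1/D$; continuity in $\pparam$ then forces an eventual increase.

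\noindent\textbf{Step (i) -- initial decline and sub-$1/D$ regime.} By Fact \ref{A:RSw}, for $\pparam$ large enough that $\pparam>\cmap_{1,k}(x)$ for every $k>1$, one has $U_\pparam(L_1(x))<U_\pparam(L_k(x))$ strictly. Then each $\exp(U_\pparam(L_k)-U_\pparam(L_1))>1$, so $\Pr(L_1\mid\pparam,x)=1/\sum_k\exp(U_k-U_1)<1/D$. That $\Pr(L_1\mid\pparam,x)$ is in fact decreasing on a right-neighbourhood of $\pparam=0$ can be read off the logit score identity
\begin{equation*}
\frac{\partial \log \Pr(L_1\mid\pparam, x)}{\partial \pparam} = \frac{\partial U_\pparam(L_1)}{\partial \pparam} - \sum_{k=1}^{D} \Pr(L_k\mid\pparam,x)\,\frac{\partial U_\pparam(L_k)}{\partial \pparam},
\end{equation*}
which is strictly negative whenever the risk-aversion-induced drop in $U_\pparam(L_1)$ exceeds that in the safer $U_\pparam(L_k)$; under HARA in the standard insurance wealth range, this follows from the variance-sensitivity of EU dominating the mean-wealth difference.

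\noindent\textbf{Step (ii) -- the limit as $\pparam\to\infty$.} The technical crux is to show that, under HARA with standard Bernoulli normalisations and strictly positive realised wealth $W_j$, each $U_\pparam(L_j(x)) = E[u_\pparam(W_j)]$ converges to the same limit -- namely $0$ -- as $\pparam\to\infty$. For CARA with $u_\pparam(w)=-\exp(-\pparam w)$ and $w>0$ in every state, $u_\pparam(w)\to 0$ pointwise, so $U_\pparam(L_j)\to 0$; for CRRA with $u_\pparam(w)=w^{1-\pparam}/(1-\pparam)$ and $w>1$, the same conclusion follows. Consequently $\exp(U_\pparam(L_j)-U_\pparam(L_k))\to 1$ uniformly in $(j,k)$, and hence $\Pr(L_j\mid\pparam,x)\to 1/D$ for every $j$.

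\noindent\textbf{Conclusion and main obstacle.} Combining the two steps, $\Pr(L_1\mid\pparam^{\star},x)<1/D = \lim_{\pparam\to\infty}\Pr(L_1\mid\pparam,x)$ for some finite $\pparam^{\star}$; by continuity of the logit in $\pparam$, the mapping must increase on some interval past $\pparam^{\star}$, which, together with the initial decline, establishes the claimed non-monotonicity. The hard part is Step (ii): the ``utilities vanish in the limit'' claim depends on the chosen normalisation of the Bernoulli function and on the wealth support sitting in the region where that normalisation converges to zero (e.g.\ $w>1$ for standard CRRA, $w>0$ for standard CARA). A fully HARA-wide statement therefore needs either a unified rescaling of the utility representation or a case-by-case verification across HARA sub-families; throughout, the economic intuition -- that the deterministic expected-utility gap is eventually swamped by the fixed-variance additive noise -- is preserved.
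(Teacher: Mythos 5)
Your proposal is correct and follows essentially the same route as the paper's own (informal) argument: an initial decline driven by the preference reversal at the cutoff, followed by an eventual increase because under HARA the expected utilities of all finite-stake lotteries converge to a common value (zero, under the standard normalization), so the logit choice probabilities all converge to $1/D$. Your explicit caveat about the Bernoulli normalization and the wealth support is a fair sharpening of the paper's footnote, but it does not change the substance of the argument.
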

To see why, consider two non-dominated alternatives $\dor_j$ and $\dor_k$ such that $\dor_j$ is riskier than $\dor_k$. A risk neutral DM prefers $\dor_j$ to $\dor_k$, and hence will choose the former with higher probability. As risk aversion increases, the DM eventually becomes indifferent between $\dor_j$ and $\dor_k$ and chooses either of these alternatives with equal probability. As risk aversion increases further, she prefers $\dor_k$ to $\dor_j$ and chooses the latter with lower probability. However, as risk aversion gets even larger, the expected utility under HARA of any lottery with finite stakes converges to zero. Consequently, the choice probabilities of all alternatives, regardless of their riskiness, converge to a common value.\footnote{Recall that in the Mixed Logit the magnitude of the utility differences is tied to differences in (log) choice probabilities, $
U_{\pparam}(L_k(x))-U_{\pparam}(L_j(x))=\log(\Pr(d=d_k|x,\pparam))-\log(\Pr(d=d_j|x,\pparam))$, so that as $\nu\to\infty$ the choice probabilities are predicted to be all equal.} Hence, at some point the probability of choosing $\dor_j$ is increasing in risk aversion.

To the contrary, our model with a limited consideration mechanism that is independent of preferences yields choice probabilities that are monotone in the preference parameter.
\begin{property}[Generalized Preference Monotonicity]\label{prop:gendom}
	A model satisfies generalized preference monotonicity if for any $\pparam_1 < \pparam_2$ and $J  \in \{1,2,\dots,\Cn\}$:
	\begin{align*}
		\Pr\left(\bigcup_{j=1}^J \dor_j \bigg|\covx,\pparam_1\right) \geq  \Pr\left(\bigcup_{j=1}^J \dor_j  \bigg|\covx,\pparam_2\right).
	\end{align*}
\end{property}
In the context of risk preferences, Property \ref{prop:gendom} states that the probability of choosing one of the $J$ riskiest alternatives declines as $\pparam$ increases. Since Property \ref{prop:gendom} is satisfied for any choice set under the SCP and full consideration, it is also satisfied under limited consideration:
\begin{proposition}\label{prop:monotonicity}
	A model that satisfies the SCP (i.e., Assumption \ref{TA2}) and Assumption \ref{IA2} satisfies Generalized Preference Monotonicity.
\end{proposition}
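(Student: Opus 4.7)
The plan is to show that, for every fixed consideration set $\mathcal{K}$ and regressor $x$, the index of the alternative chosen from $\mathcal{K}$ is non-decreasing in the preference parameter $\pparam$; then use independence of consideration from preferences (Assumption \ref{IA2}) to aggregate.

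\textbf{Step 1 (pointwise monotonicity of the choice).} Fix $x$ and a consideration set $\mathcal{K}\subset\Dc$. For $\pparam$ outside a measure-zero set of cutoff values, let $j^*(\pparam;\mathcal{K},x)=\argmax_{k\in\mathcal{K}}\util_\pparam(\dor_k,x)$. I will show $j^*$ is weakly increasing in $\pparam$ on $\mathcal{K}$. Suppose $\pparam_1<\pparam_2$ with $a:=j^*(\pparam_1;\mathcal{K},x)$ and $b:=j^*(\pparam_2;\mathcal{K},x)$, and for contradiction suppose $a>b$. Then $\util_{\pparam_1}(\dor_a,x)\geq \util_{\pparam_1}(\dor_b,x)$. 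By Assumption \ref{TA2} applied to the pair $(\dor_b,\dor_a)$ with $L=b$, $R=a$, this inequality implies $\pparam_1\geq \cmap_{b,a}(x)$, and hence $\pparam_2>\cmap_{b,a}(x)$, which yields $\util_{\pparam_2}(\dor_a,x)>\util_{\pparam_2}(\dor_b,x)$, contradicting the maximality of $b$ at $\pparam_2$. So $j^*(\pparam_1;\mathcal{K},x)\leq j^*(\pparam_2;\mathcal{K},x)$.

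\textbf{Step 2 (monotonicity of the event indicator).} Under the natural enumeration (Fact \ref{A:NOA}), the event $\bigcup_{j=1}^{J}\dor_j$ given consideration set $\mathcal{K}$ is precisely $\{j^*(\pparam;\mathcal{K},x)\leq J\}$. By Step 1, the indicator $\mathbf{1}\{j^*(\pparam;\mathcal{K},x)\leq J\}$ is non-increasing in $\pparam$ for every fixed $\mathcal{K}$ and $x$. (The case $\mathcal{K}\cap\{\dor_1,\ldots,\dor_J\}=\emptyset$ is trivially consistent, since the indicator is then identically zero.)

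\textbf{Step 3 (integrate out consideration).} By Assumption \ref{IA2}, $\mathcal{Q}^{x}_\pparam(\mathcal{K})=\mathcal{Q}^{x}(\mathcal{K})$ does not depend on $\pparam$. Therefore
\begin{equation*}
\Pr\!\left(\bigcup_{j=1}^{J}\dor_j\,\bigg|\,x,\pparam\right)=\sum_{\mathcal{K}\subset\Dc}\mathcal{Q}^{x}(\mathcal{K})\,\mathbf{1}\{j^*(\pparam;\mathcal{K},x)\leq J\}.
\end{equation*}
Each summand is non-increasing in $\pparam$ by Step 2, and the weights $\mathcal{Q}^{x}(\mathcal{K})$ are non-negative and $\pparam$-free, so the sum is non-increasing in $\pparam$. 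This gives Property \ref{prop:gendom}.

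The one subtlety is measure-zero ties among alternatives inside $\mathcal{K}$ (the points where $j^*$ is not uniquely defined); these occur on a finite set of $\pparam$'s by Assumption \ref{TA2} and, aggregated against any tie-breaking rule consistent across $\pparam$, do not affect the weak monotonicity of the probability. This ambiguity is the only delicate point; everything else is direct from SCP plus the independence of the consideration distribution from $\pparam$.
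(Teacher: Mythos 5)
Your proposal is correct and follows essentially the same route as the paper's proof: the paper also argues that the identity of the best alternative within any fixed consideration set is an increasing step function of $\pparam$ (which you derive explicitly from the SCP), so that the indicator of choosing among the first $J$ alternatives is decreasing, and then aggregates with the non-negative, $\pparam$-free consideration weights guaranteed by Assumption \ref{IA2}. Your version merely spells out the contradiction argument and the tie-handling that the paper leaves implicit.
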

\subsection{Generalized Dominance}
Next, we establish the relation between utility differences across two alternatives and their respective choice probabilities. Because our random expected utility model features unobserved preference heterogeneity, we work with an analog of the rank order property in \cite{Manski75} that is conditional on $\pparam$:
\begin{definition} (Conditional Rank Order of Choice Probabilities)
	The model yields conditional rank order of the choice probabilities if for given $\nu$ and alternatives $j,k\in \cal{D}$,	
	\begin{align*}
		U_{\pparam}(L_j(x))>U_{\pparam}(L_k(x)) &\Rightarrow \Pr(d=d_j|x,\pparam)>\Pr(d=d_k|x,\pparam).
	\end{align*}
\end{definition}
We show that the conditional rank order property implies the following upper bound on the probability that suboptimal alternatives are chosen.
\begin{property} (Generalized Dominance)\label{property:GD} A model satisfies Generalized Dominance if for any $\covx$, $d_j$, and set $\mathcal{K} \subset {\cal{D}}\setminus \{d_j\}$ s.t. alternative $d_j$ is never-the-first-best in $\mathcal{K} \cup \{d_j\}$
	\[\Pr(d=d_j|x)<\sum_{k \in \mathcal{K}}\Pr(d=d_{k}|x).
	\]
\end{property}
Generalized Dominance holds in the Mixed Logit model and, more broadly, in models that satisfy the conditional rank order property. However, it may not hold in some limited consideration models. For example, Generalized Dominance is violated if $d_j$ is never-the-first best among $\{d_j,d_k,d_l\}$, is almost always considered,  and alternatives $d_k$ and $d_l$ are rarely considered.
%\begin{proposition}\label{prop:scaleinvar}
% A model that satisfies the SCP (i.e., Assumption \ref{TA2}) exhibits the following scale invariance: multiplication of $U_{\pparam}(\cdot)$ by any positive function of $\pparam$ leaves the model's predictions unchanged.	
%\end{proposition}

\subsection{Limited Consideration as Ordinal RUM}
In the Mixed Logit, the cardinality of the differences in the (random) expected utility of alternatives plays a crucial role in the determination of choice probabilities, as it interacts with the realization of the additive error. In contrast, in models that satisfy the SCP, the DMs' choices are determined by the ordinal expected utility ranking of the alternatives. Hence, limited consideration models can be recast as Ordinal Random Utility models (ORUM), where the key departure from standard RUMs is the distribution of the additive error term.

\begin{proposition}(Limited Consideration as ORUM)\label{fact:modGeneric}
	A Limited Consideration Model is equivalent to an additive error random utility model with unobserved preference heterogeneity where all alternatives are considered, the DM's utility of each alternative $d_j \in \mathcal{D}$ is given by\[
	V_{\pparam}(d_j,x)=U_{\pparam}(d_j,x)+\varepsilon_{j}(\pparam,x),
	\]
	and $\vec{\varepsilon}(\pparam,x)=\{\varepsilon_{j}(\pparam,x)\}_{j=1}^D$ is distributed on $\{-\infty,0\}^D$ according to $\Pr\left(\vec{\varepsilon}(\pparam,x)\right)= \mathcal{Q}_{\pparam}^{\covx}(\cal{K})$ for $\cal{K}$ s.t. $d_j \in \mathcal{K}$  if  $\varepsilon_j(\pparam,x) = 0$ and  $d_k \in \mathcal{D} \setminus \mathcal{K}$ if $\varepsilon_j(\pparam,x) = -\infty$.
\end{proposition}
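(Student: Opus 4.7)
The plan is to establish the claimed equivalence by constructing an explicit coupling between the random consideration set in the limited consideration model and the random additive error vector in the ORUM representation. For each realization $\vec{\varepsilon} \in \{-\infty,0\}^D$ I associate the set $\mathcal{K}(\vec{\varepsilon}) \equiv \{j : \varepsilon_{j}(\pparam,\covx) = 0\}$. This map is a bijection between $\{-\infty,0\}^D$ and the power set $2^{\mathcal{D}}$, and by the very definition of the error distribution in the statement, $\Pr(\vec{\varepsilon}) = \mathcal{Q}_{\pparam}^{\covx}(\mathcal{K}(\vec{\varepsilon}))$. Hence the pushforward of the error distribution under $\mathcal{K}(\cdot)$ coincides with the consideration-set distribution $\mathcal{Q}_{\pparam}^{\covx}(\cdot)$.

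The key step is to verify that, conditional on $(\pparam,\covx)$ and on a fixed realization of the randomness, the two models deliver the same selected alternative. Under the ORUM the DM picks $\argmax_{j \in \mathcal{D}} V_{\pparam}(d_j,\covx) = \argmax_{j\in\mathcal{D}} \bigl[U_{\pparam}(d_j,\covx) + \varepsilon_{j}(\pparam,\covx)\bigr]$. For any $j \notin \mathcal{K}(\vec{\varepsilon})$ we have $\varepsilon_{j}(\pparam,\covx) = -\infty$, so $V_{\pparam}(d_j,\covx) = -\infty$ and such an alternative cannot be the argmax whenever $\mathcal{K}(\vec{\varepsilon})\neq\emptyset$. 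For $j \in \mathcal{K}(\vec{\varepsilon})$ we have $\varepsilon_{j}(\pparam,\covx)=0$ and thus $V_{\pparam}(d_j,\covx)=U_{\pparam}(d_j,\covx)$. Consequently the ORUM choice reduces to $\argmax_{j \in \mathcal{K}(\vec{\varepsilon})} U_{\pparam}(d_j,\covx)$, which is exactly the limited consideration model's choice when the drawn consideration set is $\mathcal{K}(\vec{\varepsilon})$.

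Integrating over $\vec{\varepsilon}$ (equivalently, over $\mathcal{K}$ under $\mathcal{Q}_{\pparam}^{\covx}$) and then over $\pparam \sim F$ in both representations yields identical conditional choice probabilities $\Pr(d=d_j\mid \covx)$ for every $d_j$ and $\covx$. Since a random-utility discrete-choice model is observationally characterized by these conditional choice probabilities, this establishes the claimed equivalence. The coupling argument also makes the direction of the claim transparent: starting from any ORUM with the stated discrete error distribution, reading off $\mathcal{K}(\vec{\varepsilon})$ recovers a limited consideration primitive with the same $\mathcal{Q}_{\pparam}^{\covx}$, and conversely.

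The main obstacle is handling edge cases rather than the core argument. First, ties among alternatives with $\varepsilon_j=0$ (i.e., multiple maximizers of $U_{\pparam}(\cdot,\covx)$ on $\mathcal{K}(\vec{\varepsilon})$) require the same deterministic tie-breaking rule on both sides; under Assumption~\ref{TA2} and the no-three-way-ties condition invoked elsewhere in the paper, such ties occur on a $\pparam$-null set and so do not affect probabilities, but alignment of conventions must still be stated. Second, the event $\vec{\varepsilon}=(-\infty,\dots,-\infty)$ corresponds to $\mathcal{K}=\emptyset$; the completion rule adopted for the limited consideration primitive (for instance, the Preferred Option rule underlying Assumption~\ref{TA3}) can be transplanted verbatim to the ORUM by redefining the argmax on this measure-zero (or otherwise null) event. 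Once these conventions are aligned, the proof reduces to the bookkeeping outlined above.
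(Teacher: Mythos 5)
Your proof is correct and follows essentially the same route as the paper's: both construct the bijection between error vectors in $\{-\infty,0\}^D$ and consideration sets, observe that the argmax of $V_\pparam$ coincides with the argmax of $U_\pparam$ restricted to $\mathcal{K}(\vec{\varepsilon})$, and conclude that the two models induce the same choice distribution because the error vector and the consideration set are coupled by construction. Your additional remarks on tie-breaking and the empty consideration set are sensible bookkeeping that the paper's terser proof leaves implicit.
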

Casting our limited consideration model as an ORUM clearly demonstrates its flexibility.  In particular, our results show how to obtain identification when the errors are correlated with the excluded regressors, the preference parameter, and across alternatives.

\begin{table}\centering
\scriptsize
\begin{threeparttable}
\caption{Model Comparisons}
\label{T:modelprops}
%\begin{adjustwidth}{0cm}{}
	\begin{tabular}{l c c c c c}
		
		\hline \hline \\[-0.5em]
		Error Structure                           &  Mixed Logit           & Basic ARC     &  Binary Types    & Prop. Shifting      &  Generic Consideration  \\[1 ex]
		\hline \\[-0.5em]		
		Support		                                 &$\mathbbm{R}$& $\{-\infty,0\}$& $\{-\infty,0\}$ &$\{-\infty,0\}$&$\{-\infty,0\}$\\[2ex]
		Independent of $x$	                         &Yes		 & Yes	 & Yes$^1$	 &    Yes        & No	  \\[1.5ex]
		Independent of $\pparam$	                 &Yes		 & Yes   & No$^1$	 &    No        & No	  \\[1.5ex]
		Independent across alternatives		         &Yes        & Yes   & Yes   &    Yes       & No     \\[1.5ex]
		Identical   across alternatives		         &Yes        & No    & No    &    No        & No     \\[0.5em]
		\hline \hline
%		\textbf{Properties}		         &        &   &\\[1.5ex]	
%		Monotonicity  			                     &No 		 & Yes	 &    	&         \\[1.5ex]
%		Conditional Rank Order Property 			 &Yes        & No    &      &         \\[1.5ex]
%		Generalized Dominance			 			 &Yes        & No    &      &         \\[1.5ex]
%		\hline
	\end{tabular}
\caption*{\footnotesize  \begin{flushleft} Note 1: Binary Types can also be dependent on $\covx$ but would require independence with $\pparam$ in that case.  \\
\end{flushleft}}
\end{threeparttable}
\end{table}

We conclude this section with Table \ref{T:modelprops}, listing the differences across the Mixed Logit and limited consideration models. The first two columns summarize the differences between the basic $\modA$ model and the Mixed Logit. The third column and fourth column remind the reader our two models with consideration depending on preferences.  Finally, the last column highlights the fact that with alternative-specific variation we may also have dependence of the error term on the excluded regressor(s) as well as on the preference parameter.

\section{Application}\label{sec:appplication}
We offer an empirical analysis of households' decisions under risk. This analysis aims to illustrate how our method works and its ability to fit the data.
\subsection{\label{Data}Data}
We study households' deductible choices across three lines of property insurance: auto collision, auto comprehensive, and home all perils. The data come from a U.S. insurance company. Our analysis uses a sample of 7,736 households who purchased their auto and home policies for the first time between 2003 and 2007 and within six months of each other.\footnote{The dataset is an updated version of the one used in \cite{Barseghyan2013}. It contains information for an additional year of data and puts stricter restrictions on the timing of purchases across different lines. These restrictions are meant to minimize potential biases stemming from non-active choices, such as policy renewals, and temporal changes in socioeconomic conditions.} Table \ref{T:descripive} provides descriptive statistics for households' observable characteristics, which we use later to estimate households' preferences.\footnote{These are the same variables that are used in \cite{Barseghyan2013} to control for households' characteristics. See discussion there for additional details.} We observe the exact menu of alternatives available at the time of the purchase for each household and each line of coverage. The deductible alternatives vary across lines of coverage but not across households. Table \ref{T:deductible} presents the frequency of chosen deductibles in our data.

\begin{table}\centering
\scriptsize
\begin{threeparttable}
\caption{Premium Quantiles for the $\$ 500$ Deductible}
\label{T:premium500}
%\begin{adjustwidth}{0cm}{}
	\begin{tabular}{l c c c c c c c }
		\hline \hline \\[-1.5ex]
		Quantiles      & 0.01   & 0.05  & 0.25 & 0.50 & 0.75 & 0.95 & 0.99\\\\[-1.5ex]
		\hline \\[-1.5ex]
		Collision      &  53   & 74   & 117 & 162 & 227 & 383 & 565 \\\\[-1.5ex]
		Comprehensive  &  29   & 41   & 69  & 99  & 141 & 242 & 427\\\\[-1.5ex]
		Home 		   &  211  &305   & 420 & 540 & 743 & 1,449& 2,524\\\\[-1.5ex] 		
		\hline
	\end{tabular}
%\end{adjustwidth}
\end{threeparttable}
\end{table}

Premiums are set coverage-by-coverage as in the example from Section \ref{sec:mainexample}. Table \ref{T:premium} reports the average premium by context and deductible, and Table \ref{T:premium500} summarizes the premium distributions for the $\$500$ deductible. Premiums vary dramatically.  The 99th percentile of the $\$500$ deductible is more than ten times the corresponding 1st percentile in each line of coverage.

Claim probabilities stem from \citep{barseghyan2018different}, who derived them using coverage-by-coverage Poisson-Gamma Bayesian credibility models applied to a large auxiliary panel. Predicted claim probabilities (summarized in Table \ref{T:claim}) exhibit extreme variation: The 99th percentile claim probability in collision (comprehensive and home) is 4.3 (12 and 7.6) times higher than the corresponding 1st percentile. Finally, the correlation between claim probabilities and premiums for the $\$500$ deductible is 0.38 for collision, 0.15 for comprehensive, and 0.11 for home all perils. Hence, there is independent variation in both.
\begin{normalsize}
\begin{table}\centering
\scriptsize
\begin{threeparttable}
\caption{Claim Probabilities Across Contexts}
\label{T:claim}
%\begin{adjustwidth}{0cm}{}
	\begin{tabular}{lc c c c c c c}
        \hline \hline \\[-1.5ex]
		Quantiles      & 0.01   & 0.05 & 0.25   & 0.50   & 0.75 & 0.95   & 0.99  \\\\[-1.5ex]
		\hline \\[-1.5ex]
		Collision      & 0.036 & 0.045 & 0.062 & 0.077 &  0.096 & 0.128 & 0.156 \\\\[-1.5ex]
		Comprehensive  & 0.005 & 0.008 & 0.014 & 0.021 &  0.030 & 0.045 & 0.062 \\\\[-1.5ex]
		Home 		   & 0.024 & 0.032 & 0.048 &0.064 &  0.084 &0.130 &0.183 \\\\[-1.5ex] 		
		\hline
	\end{tabular}
%\end{adjustwidth}
\end{threeparttable}
\end{table}
\end{normalsize} 
\subsection{Estimation Results}
\subsubsection{The basic $\modA$ Model: Collision}\label{subsec:collision}
We start by presenting estimation results in a simple setting where the only choice is the collision deductible and observable demographics do not affect preferences. To execute our estimation procedure we set $\bar{\pparam}=0.02$, which is conservative \citep[see][]{Barseghyan2016}. We \emph{ex post} verify that this does not affect our estimation by checking that the density of the estimated distribution is close to zero at the upper bound. We approximate $F(\cdot)$ non-parametrically through a mixture of Beta distributions. In practice, however, both AIC/BIC criteria indicate that a single component is sufficient for our analysis, resulting in a total of seven parameters to be estimated.  We let the data speak to the identity of the always-considered alternative.\footnote{In fact, the estimation is run under the Coin Toss completion rule that nests the possibility that any alternative can be always considered. The data chooses $\aparam_{1000} = 1$.}

The estimated distribution and consideration parameters are reported in Table \ref{T:mleARCcoll}. As the first panel in Figure \ref{F:ARCagg} shows, the model closely matches the aggregate moments observed in the data. The second panel in Figure \ref{F:ARCagg} illustrates side-by-side the frequency of predicted choices, consideration probabilities, and the distribution of households' first-best alternatives (i.e., the distribution of optimal choices under full consideration). Predicted choices are determined jointly by the preference induced ranking of deductibles and by the consideration probabilities: Limited consideration forces households' decision towards less desirable outcomes by stochastically eliminating better alternatives. The two highest deductibles ($\$1000$ and $\$500$) are considered at much higher frequency (1.00 and 0.92, respectively) than the other alternatives, suggesting that households have a tendency to regularly pay attention to the cheaper items in the choice set. Yet, the most frequent model-implied optimal choice under full consideration is the $\$250$ deductible, which is considered with low probability.
\begin{figure}\centering
	\caption{The $\modA$ Model}
	\label{F:ARCagg}
	\includegraphics[trim={5cm 0 0 0.5cm},scale=0.3]{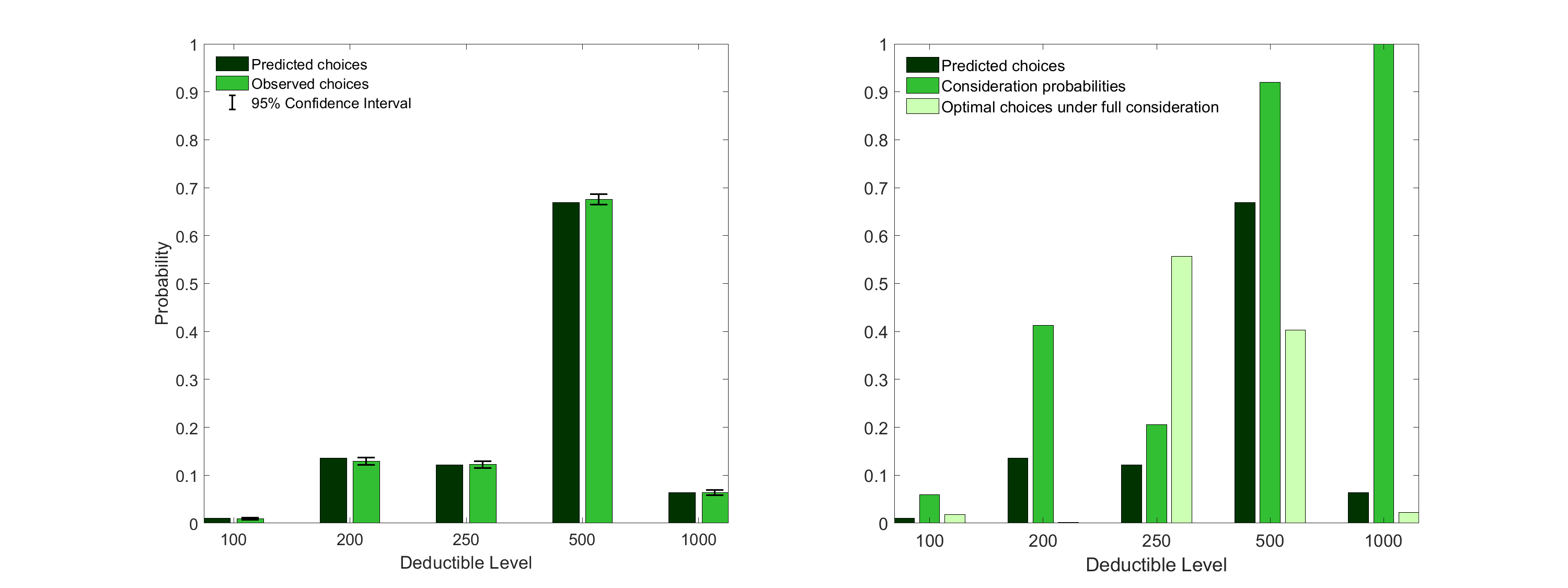}
	\captionsetup{font={footnotesize,sf},justification=raggedright}
	\caption*{The first panel reports the distribution of predicted and observed choices.  The second panel displays consideration probabilities and the distribution of optimal choices under full consideration.}
\end{figure}
In this application, assuming full consideration leads to a significant downward bias in the estimation of the underlying risk preferences. To see why, consider increasing the consideration probabilities for the lower deductibles to the same levels as the $\$500$ deductible. Holding risk preferences fixed, the likelihood that the lower deductibles are chosen increases and therefore the higher deductibles are chosen with lower probability. Average risk aversion must decline to compensate for this shift. This is exactly the pattern we find when we estimate a near-full consideration model. In particular, we find that average risk aversion decreases by about 32\% from 0.0037 to 0.0025 when all consideration parameters equal 0.9999.\footnote{We cannot assume that all consideration probabilities are equal to one, since the $\$200$ deductible is never the first best under full consideration and is chosen with positive probability.} To put these numbers into context, a DM with risk aversion equal to 0.0037 is willing to pay $\$431$ to avoid a $\$1000$ loss with probability 0.1, while a DM with risk aversion equal to 0.0025 is only willing to pay $\$300$ to avoid the loss.

The basic $\modA$ model's ability to match the data extends also to conditional moments. The first two panels of Figure \ref{F:ARCbygroup} show observed and predicted choices for the fraction of households facing low and high premiums, respectively, and the next two panels are for households facing low and high claim probabilities.\footnote{Low and high groups here are defined as households whose claim rate (or baseline price) are in the first and third terciles, respectively.} Finally, the last two panels display households who face both low claim probabilities and high prices and \emph{vice versa}. It is transparent from Figure \ref{F:ARCbygroup} that the model matches closely the observed frequency of choices across different subgroups of households facing a variety of prices and claim probabilities, even though some of these frequencies are quite different from the aggregate ones.
\begin{figure}\centering
    \caption{The $\modA$ Model: Conditional Distributions}
    \label{F:ARCbygroup}
    \includegraphics[trim={5cm 0 0 0cm},scale=0.4]{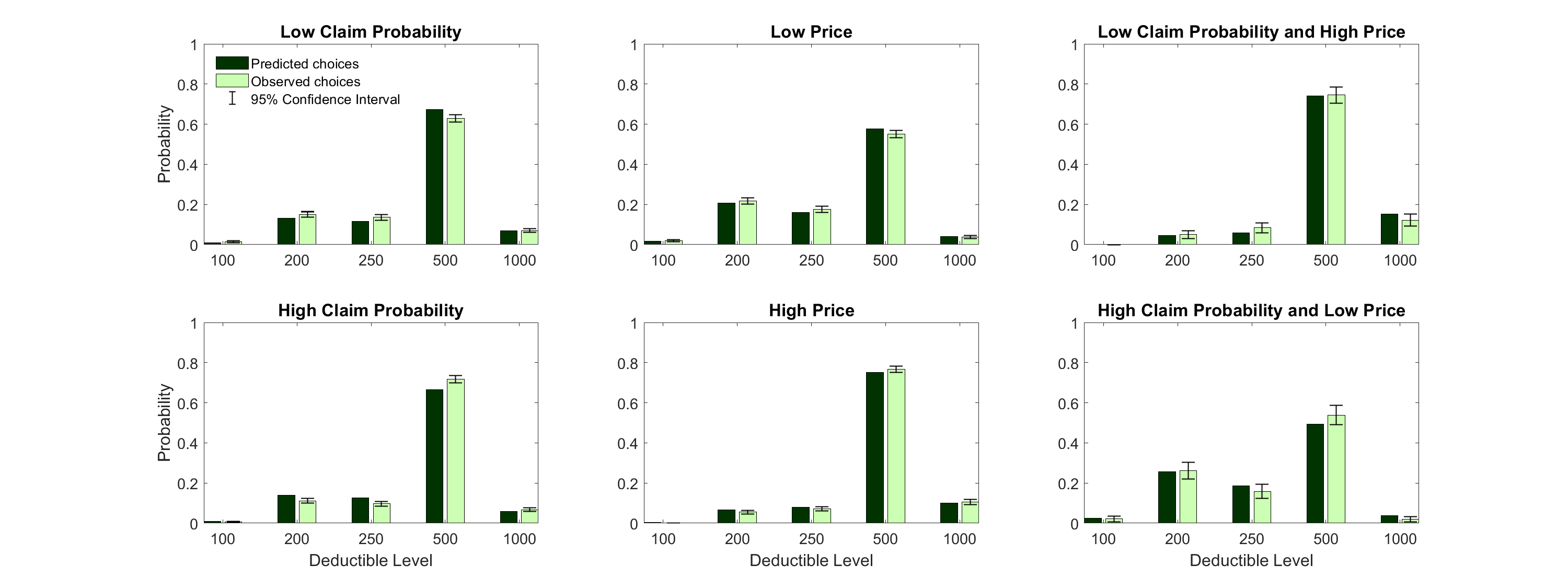}
\end{figure}

The $\modA$ model's ability to violate Generalized Dominance is key in matching the data. In our dataset, because of the pricing schedule in collision, the $\$200$ is never-the-first best among $\{\$100,\$200,\$250\}$ for $99.84\%$ of all households and $100\%$ of households who have chosen the $\$200$ deductible. It costs the same to get an additional $\$50$ of coverage by lowering the deductible from $\$250$ to $\$200$ as it does to get an additional $\$100$ of coverage by lowering the deductible from $\$200$ to $\$100$. If a household's risk aversion is sufficiently small, then it prefers the $\$250$ deductible to the $\$200$ deductible. If, on the other hand, the household's level of risk aversion is such that it would prefer the $\$200$ deductible to the $\$250$ deductible, then it would also prefer getting twice the coverage for the same increase in the premium. That is, for any level of risk aversion, the $\$200$ deductible is dominated either by the $\$100$ deductible or by the $\$250$ deductible.\footnote{This pattern is at odds not only with EUT but also many non-EU models \citep{Barseghyan2016}.} Yet, overall the $\$200$ deductible is chosen roughly as often as the $\$100$ and  $\$250$ deductibles combined. More so, for certain  sub-groups the $\$200$ deductible is chosen \emph{much more often} than the $\$100$ and $\$250$ deductible combined. It follows that a model satisfying Generalized Dominance cannot rationalize these choices.

Next we relax the assumption that demographic variables, $\mathbf{Z}$, do not influence risk preferences. In particular, conditional on demographics, preferences are distributed Beta($\beta_{1}(\mathbf{Z}),\beta_{2}$), where $\log\frac{\beta_{1}(\mathbf{Z})}{\beta_{2}}= \mathbf{Z}\gamma$, yielding a conditional mean preference value $ E(\pparam|\mathbf{Z})=\frac{e^{\mathbf{Z}\gamma}}{1+e^{\mathbf{Z} \gamma}}\bar{\pparam}
$. The details of this step and the results are reported in Appendix \ref{app:tables}. Both consideration and preference estimates remain close to those reported above.

\subsubsection{Proportionally Shifting Consideration}
We estimate the model with proportionally shifting consideration where the preference distribution and function $\alpha(\pparam)$ may depend on demographic variables (Table \ref{T:mleARCcollprop}). Motivated by the findings of the previous section, we assume that the cheapest/riskiest alternative is always considered. The consideration probability of the remaining alternatives is equal to $\aparam_j(\covx,\pparam)=  \aparam_j(1-\alpha(\pparam|\mathbf{Z}))$, where $\alpha(\pparam|\mathbf{Z}) = \xi_1(\mathbf{Z})\left(1-\frac{\pparam}{\bar{\pparam}}\right)^{\xi_2}$,  $\xi_1(\mathbf{Z})=\frac{e^{\mathbf{Z}\rho}}{1+e^{\mathbf{Z}\rho}}$, and $\xi_2$ is positive. We continue to assume that preferences are distributed Beta($\beta_{1}(\mathbf{Z}),\beta_{2}$).

The estimated average value of $\xi_1(\mathbf{Z})$ is $0.15$, with the $95\%$ CI of $[0.02, 0.21]$.  When $\xi_1(\mathbf{Z}) = 0.15$, a risk-neutral DM considers each of the safer alternatives $\{\$100, \$200, \$250, \$500\}$ $15\%$ less often than does an extremely risk averse DM. %The estimated values of $\{\aparam_{100},\aparam_{200}, \aparam_{250}, \aparam_{500}\}$ are such that while the risk-neutral DM considers only the cheapest alternative with probability 0.092, the extremely risk averse DM does so with probability 0.017.
The estimated value of $\xi_2$ is $7.14$. This implies that as the risk aversion parameter increases from 0 to its estimated average median value of $0.0034$, consideration probability of the safer alternatives increases by $11\%$, and it is essentially flat after that rising by an additional $4\%$ as risk aversion reaches its upper bound.

\subsubsection{The Mixed Logit Random Utility Model}
As in the case of the $\modA$ model, we assume that $\pparam$ is Beta distributed on $[0,\bar{\pparam}]$, where $\bar{\pparam}=0.02$. The Mixed Logit satisfies the Generalized Dominance and smoothly spreads households' choices around their respective first bests. Consequently, it cannot match the observed distribution and, in particular, is unable to explain the relatively high observed share of the $\$200$ deductible.
Table \ref{T:mleRCLcoll} reports the estimation results and Figure \ref{F:RCLagg} compares the observed distribution of choices to the predicted choices. The predicted distribution is a much poorer fit relative to the $\modA$ model. In fact, the \cite{vuong1989likelihood} test soundly rejects (at 1$\%$ level) the Mixed Logit in favor of the $\modA$ model.

\subsubsection{The $\modA$ Model: All Coverages}\label{S:estmodAall}
We now proceed with estimation of the full model. %We consider two cases. In the first case, households' risk preferences are invariant across lines of coverage, but consideration sets form independently within each line of coverage. There are three sets of consideration parameters $\{\aparam^{coll}$, $\aparam^{comp}$, $\aparam^{home}\}$ and the probability that alternative $k$ is considered in one line of coverage (e.g. collision) is independent of the probability that alternative $j$ is considered in another line of coverage (comprehensive or home). Hence, within each coverage, the households' problem is identical to that from the previous section. Just as in the case of collision coverage only, the model matches well the choice distributions within each line of coverage. However, the independence of consideration sets across lines of coverage implies that the model does not have the ability to match the joint distribution of choices. For example, the model predicts  zero rank correlation across the lines of coverage and that 12$\%$ of households choose an alternative with a larger comprehensive deductible than collision deductible. The rank correlation ranges from 0.35 to 0.61 and only $0.2\%$ of households choose a larger comprehensive deductible.
We assume that households' consideration sets are formed over the entire deductible portfolio. There are 120 possible alternative triplets $(d^{coll},d^{comp},d^{home})$, each having its own probability of being considered. This model is flexible as it nests many rule of thumb assumptions such as only considering contracts with the same deductible level across the three contexts or only considering contracts with a larger collision deductible than comprehensive deductible. Figure \ref{F:ARCwide} and Table \ref{T:mleARCwide} present estimation results. The first panel of the figure shows the predicted distribution of choices across triplets, ranked in descending order by observed frequencies. The second panel plots the differences between predicted and observed choice distributions. Clearly, the predicted distribution is close to the observed distribution.

\begin{figure}[!t]\centering
	\caption{The $\modA$ Model, Three Coverages}
	\label{F:ARCwide}
	\includegraphics[trim={0cm 0 0 0cm},scale=0.3]{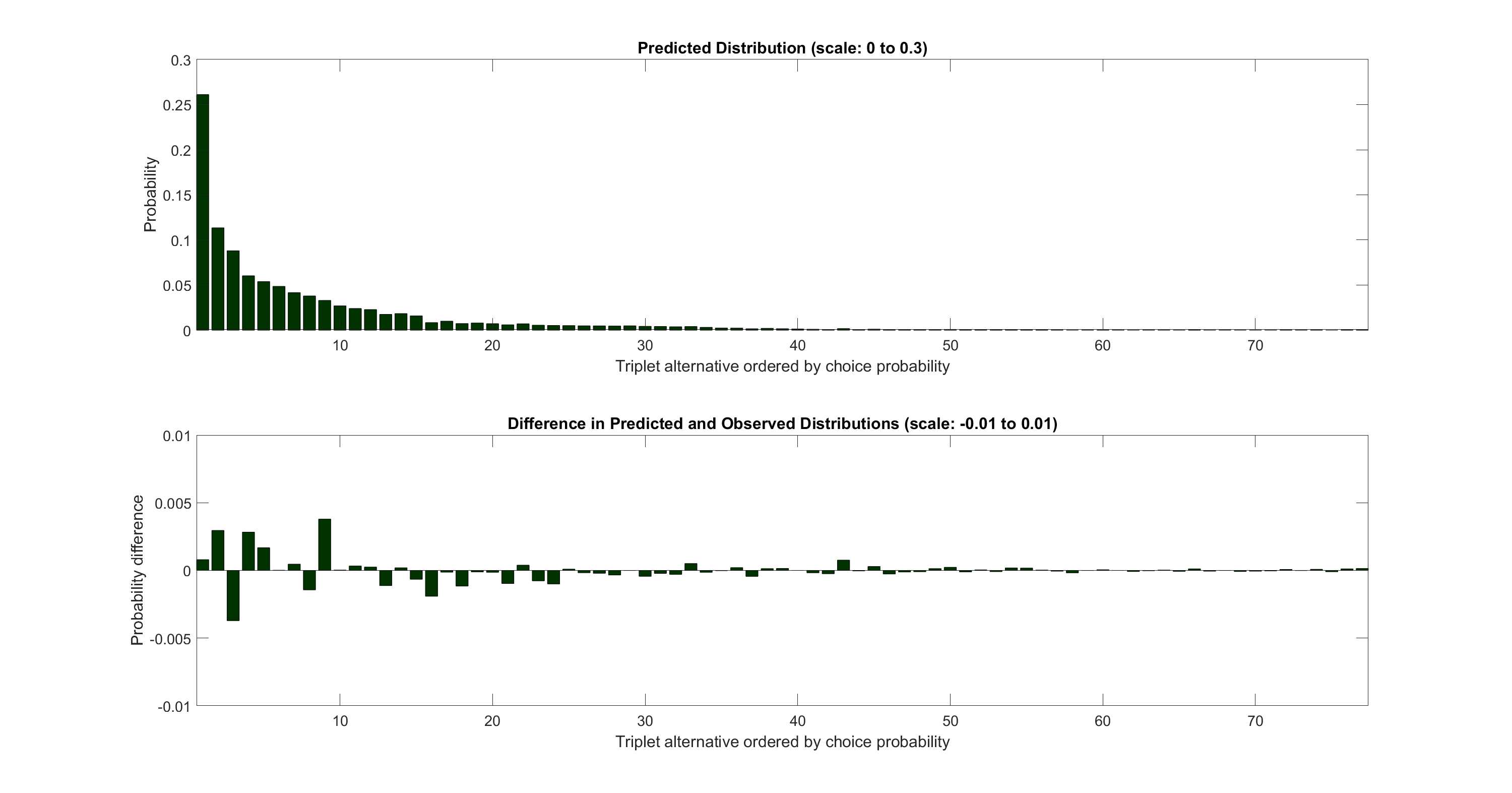}
	\captionsetup{font={footnotesize,sf},justification=raggedright}
	\caption*{\footnotesize Triplets are sorted by observed frequency at which they are chosen.  The first panel reports the predicted choice frequency and the second panel reports the difference in predicted and observed choice frequencies. The scale of the second panel is magnified by a factor of 30 to make the probability difference visible.}
\end{figure}
The largest difference between the predicted and observed shares equals $0.38$ percentage points, which is for the ($\$250,\$250,\$500$) triplet that is chosen by $2.9\%$ of the households. The integrated absolute error across all triplets is $3.46\%$. In our data, 43 out of 120 triplets are never chosen (these are omitted from  Figure \ref{F:ARCwide}). As discussed in Section \ref{S:LikeTract}, the likelihood maximization implies that the consideration probabilities for these triplets must be zero, so their predicted shares are zero.  Hence, the likelihood maximization routine is faster and more reliable as we do not need to search for $\aparam_j$ for these alternatives.

Another virtue of the $\modA$ model is that it effortlessly reconciles two sides of the debate on stability of risk preferences \citep{Barseghyan2011, Einav2012, Barseghyan2016}. On the one hand, households' risk aversion relative to their peers is correlated across lines of coverage, implying that households preferences have a stable component. On the other hand, analyses based on revealed preference reject the standard models: under full consideration, for the vast majority of households one cannot find a level of (household-specific) risk aversion that justifies their choices simultaneously across all contexts. Limited consideration allows the model to match the observed joint distribution of choices, and hence their rank correlations.

The estimated risk preferences are similar to those estimated with collision only data, although the variance is slightly smaller. The triplet considered most frequently is the cheapest one: ($\$1000,\$1000,\$1000$). Its consideration probability is 0.76, while the next two most considered triplets are (\$500, \$500, \$1000)  and (\$500, \$500, \$500). These are considered with probability 0.47 and 0.43, respectively. Overall, there is a strong positive correlation $(0.53)$ between the consideration probability and the sum of the deductibles in a given alternative.

We summarize once more the computational advantages of our procedure. First, estimation of our model remains feasible for a large choice set.\footnote{In our setting, it is feasible to estimate an additive error RUM assuming the DMs consider each deductible triplet as a separate alternative (Figure \ref{F:RUMwide} and Table \ref{T:mleRUMbroad}). As the figure shows, the failure to match the data is evident. The Vuong test formally rejects it in favor of the ARC model.} Second, the model's parameters grow linearly with the size of the choice set -- one parameter per an additional alternative. Third, enlarging the choice set does not call for new independent sources of data variation. For example, in our model whether there are five deductible alternatives or one hundred twenty does not make any difference either from an identification or an estimation stand point: with sufficient variation in $\bar{p}$ and/or $\mu$, the model is identified and can be estimated. As a final remark, once the model is estimated, one can compute the average monetary cost of limited consideration. In our data it is $\$50$ (see Appendix \ref{sec:monetarycost}).
\section{Discussion}\label{sec:discussion}
The literature concerned with the formulation, identification, and estimation of discrete choice models with limited consideration is vast. However, to our knowledge, there is no previous work applying such models to the study of decision making under risk, except for the contemporaneous work of \cite{Barseghyan2019}. In particular, this paper is the first to exploit the SCP for identification purposes. As a result, several fundamental differences emerge between our work and existing papers. First, we achieve identification in the most challenging case where there is a single excluded regressor that affects the utility of all alternatives.\footnote{This setting is common in insurance markets, see, e.g.,\cite{Cohen2007, Einav2012,  Sydnor2010, Barseghyan2011, Barseghyan2013, handel2013adverse, Bhargava2017}.} Second, we allow for consideration to depend on preferences. Third, with alternative-specific excluded regressors, this dependence can be essentially unrestricted and can be combined with dependence of consideration on (some of) the excluded regressors. Fourth, we scrutinize the large support assumption, show why it may be necessary, and when and how it is possible to make progress when it is not satisfied. Fifth, our approach comes with an easy to implement and computationally fast estimation strategy. Finally, we make a contribution specific to the study of decision making under risk by proposing a model that is immune from \cite{ape:bal18} criticism and features two sources of unobserved heterogeneity -- risk aversion and limited consideration -- whose distributions are identified. More generally, the paper establishes that, as long as the DMs' preferences satisfy the SCP, allowing for limited consideration does not hinder the model's identifiability or applicability. Hence, we view our framework as a stepping stone for studies of consumer behavior in markets where limited consideration may be present \citep[one example is][who builds on our framework to study consumer choice in Medicare Part D markets]{coughlin2019}.

Papers that allow for limited consideration or more broadly for choice set heterogeneity can be classified in four groups. The first relies on auxiliary information about the composition or distribution of DMs' choice sets, such as brand awareness \citep[e.g.,][]{Draganska2011, honka2017simultaneous} or search activity \citep[e.g.,][]{honka2017simultaneous, DelosSantos2012, Kim2010, HonkaRAND2017}.\footnote{For canonical cites see, e.g., \citet{Roberts1991} and \citet{Ben-Akiva1995}.} We do not require such information.

The second group attains identification via two-way exclusion restrictions, i.e., by assuming that some variables impact consideration but not utility and \emph{vice versa}. A well-known example of this approach is \citet{Goeree2008}, who posits that advertising intensity affects the likelihood of considering a computer, but does not impact consumer preferences, while computer attributes such as CPU speed affect preferences but not consideration (see also \citet{vanNierop2010} and \citet{Gaynor2016}). \citet{Hortacsu2017} create an exclusion restriction by exploiting the dynamic aspect of consumer choice.{\footnote{Time variation is used also in \citet{Crawford2019}, who show that with panel data and preferences in the logit family, point identification of preferences is possible, without any exclusion restrictions, under the assumption that choice sets and preferences are independent conditional on observables and with restrictions on how choice sets evolve over time. These restrictions enable the construction of proper subsets of DMs' true choice sets (`sufficient sets') that can be utilized to estimate the preference model.} The consumer's decision to consider alternatives to her current service provider is a function of (her experiences with) the last period provider but not her next period provider (see also \citet{heiss2016inattention}). In contrast, we achieve identification with as little as one common excluded regressor and a single cross section.

The third group relies on restricting the consideration mechanism to a specific class of models. \citet{Abaluck2019} consider two such models (and their hybrid): a variant of the ARC and a ``default specific'' model  \citep[as in, e.g.,][]{ho2017impact,heiss2016inattention} in which each DM's consideration set comprises either a single default alternative or the entire feasible set. They assume that consideration and preferences are independent, and that each alternative has a characteristic with large support that is additively separable in utility and may only affect its own consideration but not the consideration of other alternatives.\footnote{The exception is the ``default'' alternative, whose characteristic may trigger the consideration of the entire choice set.} They exploit violations of symmetry in the Slutsky matrix (i.e., in  cross-alternative demand responses to prices) to detect limited consideration. \citet{kawaguchi2019designing} study beverage purchases from vending machines, allowing advertisement to be a driver of consideration, but also to affect utility. Their approach is close to that of \citet{Goeree2008}, though they provide a formal argument for identification with large support and exclusion restrictions even when there is no choice set variation.  A key assumption is that all beverages are considered with probability equal to one as the advertising intensity of each beverage becomes very large.

The methods we propose relate to the papers in the third group in two aspects. First, we too \emph{sometimes} require large support as a ``fail safe'' assumption, but only in the most challenging case of a single common excluded regressor.  Second, we too rely on exclusion restrictions. The reliance on these assumptions is inescapable given the econometrics literature on point identification of discrete choice models. Our approach elucidates the identifying power of a single excluded regressor in models that satisfy the SCP and, in particular, the relative ranking of alternatives encapsulated in Facts \ref{Fact:simpleorder} and \ref{A:CO} \citep[see][for related results for average treatment effects in ordered discrete choice models]{lewbel2016identifying}. We further exploit this structure to establish identification in models with substantially richer levels of unobserved heterogeneity, by allowing for dependence between consideration and preferences.

The fourth group of papers has a different goal than what we pursue here, as it provides partial rather than point identification results. \citet{Cattaneo2019} propose a random attention model with homogeneous preferences, and they require that the probability of each consideration set is monotone in the number of alternatives in the choice problem. Their analysis yields testable implications and partial identification for preference orderings. \cite{Barseghyan2019} study discrete choice models, where consideration may arbitrarily depend on preferences as well as on all observed characteristics. They show that such unrestricted forms of heterogeneity generally yield partial, but not point, identification of the preference distribution and obtain bounds on the distribution of consideration sets' size. Finally, \citet{Dardanoni2018} consider a stochastic choice model with homogeneous preferences and heterogeneous cognitive types. They show how one can learn the moments of the distribution of cognitive types from a single cross section of aggregate choice shares.

\clearpage
\newpage

\linespread{1.22}
\bibliographystyle{apacite}
\vspace{-2em}
\begin{flushleft}
\bibliography{BMT_BCMT}
\end{flushleft}

\clearpage
\newpage

\linespread{1.3}
\numberwithin{equation}{section}
\begin{appendices}

\section{Proofs}\label{App:Identification}

\begin{proof}[Proof of Fact \ref{A:CO}]  If $\cmap_{1,2}(x)$ is less than $\cmap_{2,3}(x)$, then, for any DM with preference $\pparam$ s.t.  $\cmap_{1,2}(x) < \pparam < \cmap_{2,3}(x)$, $d_2$ is preferred to both $d_1$ and $d_3$, i.e. we are in Case (i). If $\cmap_{1,2}(x)>\cmap_{2,3}(x)$, then $d_2$ is either dominated by either $d_1$ or $d_3$. The relative location of $\cmap_{1,3}(x)$ is established as follows. First, suppose $\cmap_{1,3}(x)<\cmap_{1,2}(x)<\cmap_{2,3}(x)$. For any $\pparam \in (\cmap_{1,3}(x),\cmap_{1,2}(x))$ we have $ \util_\pparam(d_3,\covx) > \util_\pparam(d_1,\covx) > \util_\pparam(d_2,\covx) > \util_\pparam(d_3,\covx)$, which is an obvious contradiction. Second, suppose $\cmap_{2,3}(x)<\cmap_{1,2}(x)<\cmap_{1,3}(x)$. Then, for any $\pparam \in (\cmap_{1,2}(x),\cmap_{1,3}(x))$  we have $ \util_\pparam(d_1,\covx) > \util_\pparam(d_3,\covx) > \util_\pparam(d_2,\covx) > \util_\pparam(d_1,\covx)$, which is an obvious contradiction. The remaining two possibilities are excluded following the same logic.
\end{proof}

We maintain that $\covx$ has strictly positive density on $\cal{S}$ (Assumption \ref{TA0}), its density is continuous (Assumption \ref{TA1}), and that preferences are continuous and strictly monotone in $\covx$.  Therefore, if $\covx$ is a scalar and $c_{1,2}(\covx)$ covers  $[\pparam^l,\pparam^u]$, it is sufficient to consider  an interval $[x^l,x^u] \subset \cal{S}$ such that $[\pparam^l,\pparam^u]=\{c_{1,2}(\covx) : \covx \in [x^l,x^u]\}$.

The following lemma is useful for establishing Theorem \ref{T:Benchmark}.

\begin{lemmaA}\label{lem0}
	Suppose Assumptions \ref{TA0}-\ref{TA2} and \ref{IA1} hold. Suppose $\cmap_{1,2}(x)<\cmap_{1,j}(x)$, $\forall \covx \in \mathcal{X}$. Let $\{\covx^t\}_{t=1}^{\infty}$ be s.t. $\cmap_{1,2}(\covx^t)=\cmap_{1,j}(x^{t+1})$, $\covx^t \in \mathcal{X}$. Then $\exists T<\infty$ s.t. $c_{1,2}(x^{T})<0$.
\end{lemmaA}
\begin{proof}
	The cutoff $\{c_{1,2}(x^{t})\}_{t=1}^{\infty}$ is a strictly declining sequence. Suppose all its  elements are non-negative. Then it converges to some $\pparam^{\infty}\geq 0$ such that $\pparam^{\infty}=\cmap_{1,2}(x^{\infty})=\cmap_{1,j}(\covx^{\infty})$ for some $x^{\infty} \in \mathcal{X}$, a contradiction.
\end{proof}

\begin{proof}[Proof of Theorem \ref{T:Benchmark}]
The second condition in the theorem implies that the cutoffs are ordered: $c_{1,j}(\covx) < c_{1,j+1}(\covx)$ for all $\covx \in \mathcal{X}$. Hence
\begin{align*}
\Pr(d=d_1|\covx)
	& = \sum_{j=2}^D \sum_{\substack{\Kc \subset \Dc: \\1,j \in \Kc,\\2,\dots,j-1 \not \in \Kc}}Q(\Kc)  F(\cmap_{1,j}(x))
   + Q(\{d_1\}) \\
   & =  \sum_{j=2}^D\mathcal{O}(\{d_1,d_j\};\{d_2,\dots,d_{j-1}\})  F(\cmap_{1,j}(x))     + \mathcal{O}(d_1;\emptyset)\\
    & \equiv \sum_{j=2}^D \Lambda_j F(\cmap_{1,j}(x)) + \Lambda_{1},
\end{align*}
so that
\begin{align*}
\frac{d\Pr(d=d_1|\covx)}{dx}
& =\sum_{j=2}^D \Lambda_j f(\cmap_{1,j}(x))\frac{d\cmap_{1,j}(x)}{dx}.
\end{align*}

By Assumption \ref{IA1}, we can set $\covx^u=\covx^{\bar{\pparam}}$ s.t. $\cmap_{1,2}(\covx^{\bar{\pparam}}) =\bar{\pparam} $ and similarly $\covx^l=\covx^0$ s.t. $\cmap_{1,2}(\covx^0)= 0$. It may be the case that $\cmap_{1,\hat{D}}(x^0)<\bar{\pparam}$ and $\cmap_{1,\hat{D}+1}(x^0)>\bar{\pparam}$ for some $\hat D \geq 2$. Then,  $\forall j>\hat{D}$,  $\Lambda_j$ does not enter the expression for the derivative of $\Pr(d=d_1|\covx)$, $\forall \covx \in [x^0,x^{\bar{\pparam}}]$, because $f(\cmap_{1,j}(x))=0$. Henceforth, we  only consider the relevant alternatives for the derivative of $\Pr(d=d_1|\covx)$, namely $j\leq \hat{D}$.

Next, consider the derivative of $\Pr(d=d_j|\covx)$. By Fact \ref{Fact:simpleorder}, the term $\Lambda_j$ is the leading coefficient on $f(\cdot)$ for this derivative. There exists $\covx^j \in \mathcal{X}$ such that  $\cmap_{1,j}(x^j)  = \bar\pparam$. Thus,

\begin{align*}
\lim\limits_{\covx\nearrow\covx^{j}}\frac{d  \Pr(d=d_j|\covx)}{d\covx}
	= -\Lambda_j f(\bar \pparam)\frac{d\cmap_{1,j}(x^{j})}{dx}, \quad \forall j : 2 \leq j \leq \hat D.
\end{align*}
The ratio of $\lim\limits_{\covx\nearrow\covx^{j}}d\Pr(d=d_j|\covx)/d\covx$ and $\lim\limits_{\covx\nearrow\covx^{2}}d\Pr(d=d_2|\covx)/d\covx$ identifies $\Omega_j \equiv \frac{\Lambda_j}{\Lambda_2}$, where $\Lambda_2 \neq 0$ by the first assumption in the theorem. Rewrite the derivative of $\Pr(d=d_1|x)$ as follows:
\begin{align*}
\frac{d\Pr(d=d_1|\covx)}{dx}
	& = \sum_{j=2}^{\hat{D}} \Lambda_j f( \cmap_{1,j}(\covx))\frac{d\cmap_{1,j}(\covx)}{dx} \\
    & = \sum_{j=2}^{\hat{D}} \frac{\Lambda_j}{\Lambda_2}\Lambda_2 f(\cmap_{1,j}(\covx))\frac{d\cmap_{1,j}(\covx)}{dx} \\
    & =  \sum_{j=2}^{\hat{D}} \Omega_j [\Lambda_2 f( \cmap_{1,j}(\covx))]\frac{d\cmap_{1,j}(\covx)}{dx}  \\
    & = \sum_{j=2}^{\hat{D}} \Omega_j\hat f( \cmap_{1,j}(\covx))\frac{d\cmap_{1,j}(\covx)}{dx},
\end{align*}
where $\hat f(\pparam)\equiv \Lambda_2 f( \cmap_{1,j}(\covx))$. Equipped with $\Omega_j$, we can recover $\hat f(\pparam)$ sequentially. Note that $\forall \covx$ s.t. $c_{1,2}(\covx)\leq \bar{\pparam}$ and $c_{1,3}(\covx)> \bar{\pparam}$, the up-to-scale density $\hat f(\cmap_{1,2}(x))$ is identified. Indeed, it is the only unknown in the expression above. We proceed as follows.

First, let $\covx^1$ be such that $\cmap_{1,3}(\covx^1)=\bar{\pparam}$. Then, $\hat f(\cdot)$ is identified on $[\cmap_{1,2}(x^1),\bar{\pparam}]$.

Second, let $\covx^2$ be such that $\cmap_{1,2}(\covx^2)=\cmap_{1,3}(\covx^1)$. Now $\hat f(\pparam)$ is identified on $[\cmap_{1,2}(x^2),\bar{\pparam}]$ because in the expression for the derivative of $\Pr(d=d_1|x)$ all cutoffs $\cmap_{1,j}(\covx)$, $j>2$, lie on the part of the support where the up-to-scale density is known.

Identification of $\hat f(\pparam)$ on $[0,\bar \pparam]$ attains by repeating the above step. Indeed, by Lemma \ref{lem0} $\cmap_{1,2}(x^t)$ reaches the lower end of the support  in a finite number of steps. Finally, the scale is recovered by integrating $\hat f( \pparam)$ over its support:
\[
\Lambda_2 = \Lambda_2 \int_{0}^{\bar{\pparam}} f(\pparam)d\pparam  = \int_{0}^{\bar{\pparam}} \hat{f}(\pparam)d\pparam.
\]
Therefore $f(\cdot)$ is identified, as required. The term $\mathcal{O}(d_1;\emptyset)= \Lambda_1 = \Pr(d = d_1|\covx^{\bar{\pparam}})$ is also identified, and so are $\mathcal{O}(d_1,d_2;\emptyset) = \Lambda_2$ and $\mathcal{O}(\{d_1,d_j\};\{d_2,\dots,d_{j-1}\})  = \Lambda_j$.

\end{proof}

\begin{proof}[Proof of Theorem \ref{T:Binary}]
The second condition of  Theorem \ref{T:Benchmark} implies that the cutoffs are ordered: $c_{1,j}(\covx) < c_{1,j+1}(\covx)$ for all $\covx \in \mathcal{X}$. Hence,
\begin{align*}
	\frac{d\Pr(d=d_1|\covx)}{dx}
	= \sum_{j=2}^D \Lambda_j(\cmap_{1,j}(\covx))f(\cmap_{1,j}(\covx))\frac{d\cmap_{1,j}(\covx)}{d\covx},
\end{align*}
where
	\[
	\Lambda_j(\pparam) =
	\begin{cases}
	\underline{\Lambda}_j\equiv\sum_{\substack{\Kc \subset \Dc: \\1,j \in \Kc,\\2,\dots,j-1 \not \in \Kc}}\mathcal{\underline{Q}(K)} & \text{if } \pparam < \pparam^*  \\
	\overline{\Lambda}_j\equiv \sum_{\substack{\Kc \subset \Dc: \\1,j \in \Kc,\\2,\dots,j-1 \not \in \Kc}}\mathcal{\overline{Q}(K)}   & \text{if }      \pparam \geq \pparam^*  \\
	\end{cases}.		
	\]
Similar to the proof of Theorem \ref{T:Benchmark}, we only consider the relevant alternatives for the derivative of $\Pr(d=d_1|\covx)$, namely $j\leq \hat{D}$.	
%By Assumption \ref{IA1}, we can set $\covx^u=\covx^{\bar{\pparam}}$ s.t. $\cmap_{1,2}(\covx^{\bar{\pparam}}) =\bar{\pparam} $ and similarly $\covx^l=\covx^0$ s.t. $\cmap_{1,2}(\covx^0)= 0$. It may be the case that $\cmap_{1,\hat{D}}(x^0)<\bar{\pparam}$ and $\cmap_{1,\hat{D}+1}(x^0)>\bar{\pparam}$ for some $\hat D \geq 2$. Then,  $\forall j>\hat{D}$,  $\Lambda_j$ does not enter the expression for the derivative of $\Pr(d=d_1|\covx)$ $\forall \covx \in [x^0,x^{\bar{\pparam}}]$, because $f(\cmap_{1,j}(x))=0$. Henceforth, we  only consider the relevant alternatives for the derivative of $\Pr(d=d_1|\covx)$, namely $j\leq \hat{D}$.
	
We start at $\covx^{\bar{\pparam}}$ and hence $\cmap_{1,2}(\covx^{\bar{\pparam}})=\bar{\pparam}$. As we lower $\covx$ we check whether $\frac{d\Pr(d=d_1|\covx)}{dx}$ jumps. If it does not, identification of $f(\cdot)$ attains by the proof of Theorem \ref{T:Benchmark}.
	
Suppose there is a point of discontinuity. It arises when a cutoff $c_{1,j}(\covx)$ crosses the breakpoint $\pparam^*$. The identity of the cutoff and hence $\pparam^*=c_{1,j}(\covx)$ is identified by the fact there is a unique $\frac{d\Pr(d=d_j|\covx)}{dx}$ that also jumps. Equipped with the identity of $\pparam^*$ the proof proceeds similarly to that of Theorem \ref{T:Benchmark}. Indeed, all $\overline{\Omega}_j\equiv \overline{\Lambda}_j/\overline{\Lambda}_2$ are identified and so is $\hat{f}(\pparam)\equiv\overline{\Lambda}_2 f(\pparam)$ for all $\pparam>\pparam^*$.
	
The additional step is how to identify  $\underline{\Lambda}_j$ and $\underline{\Omega}_j\equiv \underline{\Lambda}_j/\underline{\Lambda}_2$. Start with $\underline{\Lambda}_2$. Consider an $\covx^*$ s.t. $\cmap_{1,2}(\covx^*)=\pparam^*$. The derivatives $\frac{d\Pr(d=d_1|\covx)}{dx}$ from the left and from the right of $\covx^*$ identify $\underline{\Lambda}_2 f(\pparam^*)$ and $\overline{\Lambda}_2 f(\pparam^*)$. Hence, the ratio $\underline{\Lambda}_2/\overline{\Lambda}_2$ is identified. Exactly the same logic applies to all other $\underline{\Lambda}_j$'s whenever $\cmap_{1,j}(\covx)$ crosses $\pparam^*$. We can then rewrite
	\begin{align*}
	\frac{d\Pr(d=d_1|\covx)}{dx}
	= \sum_{j=2}^{\hat{D}} \left(\overline{\Omega}_j\right)^{\mathbbm{1}( c_{1,j}(x) \geq \nu^*)}\left(\underline{\Omega}_j\frac{\underline{\Lambda}_2}{\overline{\Lambda}_2}\right)^{\mathbbm{1}( c_{1,j}(x) < \nu^*)} \hat{f}(\cmap_{1,j}(\covx))\frac{d\cmap_{1,j}(\covx)}{d\covx},
	\end{align*}
Now all coefficients of $\hat{f}(\cmap_{1,j}(\covx))$ are identified, and identification of $\hat{f}(\cdot)$ proceeds to the left of $\pparam^*$. Once it is identified, we integrate it over the support to recover $\overline{\Lambda}_2$. Hence $\underline{\Lambda}_2$ and $f(\cdot)$ are identified.

When the breakpoint occurs at $x^*$ rather than $\nu^*$, the same proof strategy can be applied.

\end{proof}

\begin{proof}[Proof of Proposition \ref{P:Basic}]
The second condition in the theorem implies that $\cmap_{1,2}(\covx)<c_{1,j}(\covx)$ for  any $j>2$. The cutoffs $c_{1,j}(\covx)$'s, $j>2$, are irrelevant for evaluating $\Pr(d=d_1|x)$  by the first condition of the theorem.
Therefore,
\begin{align*}
	\frac{d  \Pr(d=d_1|\covx)}{dx}
& = \left(\sum_{\Kc \subset \Dc: 1,2  \in \Kc} Q(\Kc)\right)f(\cmap_{1,2}(\covx))\frac{d \cmap_{1,2}(\covx)}{dx} \equiv \alpha f(\cmap_{1,2}(\covx))\frac{d \cmap_{1,2}(x)}{dx}.
\end{align*}
Consequently, the product $\alpha f(\cmap_{1,2}(\covx))$ can be written in terms of data:
\[
\alpha f(\cmap_{1,2}(\covx)) = \frac{\frac{d  \Pr(d=d_1|\covx)}{dx}}{\frac{d \cmap_{1,2}(x)}{dx}},
\]
and hence variation in $\covx$ guarantees that $f(\pparam)$ is identified up-to-scale on $[\cmap_{1,2}(x^l),\cmap_{1,2}(x^u)] = [\pparam^l,\pparam^u]$.  It follows that $F(\pparam|\pparam \in[\pparam^l,\pparam^u])$ is identified.
\end{proof}

\begin{proof}[Proof of Theorem \ref{T:Basic-G}]
Under Condition \emph{1} of the Theorem there can only be three types of consideration sets. The first type are all possible subsets of $\{d_1,\dots,d_j\}$; the second type are all possible subsets of $\{d_{j+1},\dots,d_D\}$. The third type necessarily contains both $d_j$ and $d_{j+1}$. The probability of choosing an alternative in $\{d_1,\dots,d_j\}$ is one for the first type and zero for the second type. Hence,
\begin{align*}
	\frac{d  \Pr(d \in \{d_1,\dots,d_j\}|\covx)}{dx}
& = \sum_{\Kc \subset \Dc: j,j+1  \in \Kc} Q(\Kc)f(\cmap_{j,j+1}(\covx))\frac{d \cmap_{j,j+1}(\covx)}{dx}.
\end{align*}
The rest of the proof follows the same steps as in the proof of Proposition \ref{P:Basic}, except we now track $\cmap_{j,j+1}(\covx)$.
\end{proof}

\begin{proof}[Proof of Theorem \ref{Broad-alternative specific}]
Let $\pparam$, $\tilde \covx$,  $\mathcal{N}_{ \epsilon}(\tilde \covx) \equiv \{ \covx : \| \covx - \tilde \covx\| <  \epsilon\}$ satisfy Condition \emph{2} in the theorem. Then, $\pparam = c_{j,k}(\tilde \covx)$ for all $j,k$. Consider any pair of alternatives $(d_j,d_k)$.    Since utility is strictly monotone in $x_j$ and continuous, for each $\mathcal{L} \subseteq \mathcal{D}\setminus\{j,k\}$ we can find $x \in \mathcal{N}_{ \epsilon}(\tilde \covx)$ such that
		\begin{align}
        & U_{ \pparam}(d_j, \covx)= U_{ \pparam}(d_k,   \covx); \label{thm4proof:eq1}\\
		& U_{ \pparam}(d_l, \covx)> U_{ \pparam}(d_j,  \covx) \quad  \forall l \in \mathcal{L}; \label{thm4proof:eq2} \\
		&U_{ \pparam}(d_j, \covx) > U_{ \pparam}(d_l,  \covx) \quad  \forall l \in \mathcal{D}\setminus\{\mathcal{L}\cup\{j,k\}\}; \label{thm4proof:eq3}
        \end{align}
The remainder of the proof proceeds in two steps.

\emph{Step 1: Identification of $f(\pparam)\mathcal{Q}_{\pparam}(\mathcal{K})$}: The singleton sets occur with zero probability by Condition \emph{1} in the theorem, so it remains to show identification  for consideration sets larger than one.  Consider any two alternatives $(d_j,d_k)$. We claim that the following statement holds for $n=0,\dots,D$:

\begin{quote}
  $P(n)$: For all $\mathcal{K} \subset \Dc \setminus\{j,k\}$  satisfying $|\mathcal{K}| \leq n$, the quantity
   $f(\pparam)\mathcal{Q}_{\pparam}(\{j,k\} \cup \mathcal{K})$
   \newline \hspace*{2.6em} is identified.
\end{quote}

To show this for $P(0)$, set $\mathcal{L}  = \mathcal{D}\setminus\{j,k\}$. In this case $\mathcal{K} = \emptyset$.  Let $ \covx$ satisfy Equations \eqref{thm4proof:eq1}-\eqref{thm4proof:eq3}. Then, all alternatives $d_{l},~ l\neq j,k$, are preferred to $d_j$ and $d_{k}$ at $\pparam$ and $c_{j,k}(\covx)=\pparam$. Hence:
\begin{align*}
\frac{\frac{\partial \Pr(d=d_{j}|x)}{\partial x_{k}}}{\frac{\partial c_{j,k}(\covx)}{\partial x_{k}}}
& =f(\pparam) \mathcal{Q}_{\pparam}( \{ j,k\}).
\end{align*}
It follows that $f(\pparam)\mathcal{Q}_{\pparam}( \{ j,k\})$ is identified.

Next, suppose $P(n-1)$ is true. Consider any $\mathcal{K}\subset \Dc \setminus\{j,k\}$  such that $|\mathcal{K}|=n$.  Let $\mathcal{L}  = \mathcal{D}\setminus(\mathcal{K} \cup \{j,k\})$.  Let $\covx$ satisfy Equations \eqref{thm4proof:eq1}-\eqref{thm4proof:eq3}. Then,
\begin{align*}
\frac{\frac{\partial \Pr(d=d_{j}|x)}{\partial x_{k}}}{\frac{\partial c_{j,k}(x)}{\partial x_{k}}}
& = f(\pparam) \sum_{\mathcal{C} \subset \mathcal{K}} \mathcal{Q}_{\pparam} (\{j,k\} \cup \mathcal{C}) \\
& = f(\pparam)\mathcal{Q}_{\pparam}(\{j,k\} \cup \mathcal{K})
+
 \sum_{\mathcal{C} \subset  \mathcal{K}:|\mathcal{C}| < n} f(\pparam) \mathcal{Q}_{\pparam} (\{ j,k \} \cup \mathcal{C}).
\end{align*}
The LHS of this expression is known, and the second term on the RHS is identified by the induction step. Therefore $P(n)$ holds.

Since $d_j$ and $d_k$ were chosen arbitrarily, it follows that $f(\pparam)\mathcal{Q}_{\pparam}(\mathcal{K})$ is identified for all $\mathcal{K} \subset \Dc$.

\emph{Step 2: Identification of $f(\pparam)$ and $\mathcal{Q}_{\pparam}(\mathcal{K})$}. Since
\[
\sum_{\mathcal{K} \subset \Dc} f(\pparam)\mathcal{Q}_{\pparam}(\mathcal{K})
=
f(\pparam) \sum_{\mathcal{K} \subset \Dc} \mathcal{Q}_{\pparam}(\mathcal{K})
=
f(\pparam),
\]
$f(\pparam)$ is identified. Identification of $\mathcal{Q}_{\pparam}(\mathcal{K})$ follows from Step 1.
\end{proof}

\begin{proof}[Proof of Corollary \ref{Broad-alternative specific_corollary}]
 The proof follows the same steps as the proof of Theorem \ref{Broad-alternative specific}, but with the following two modifications:

First modification: In Step 1 in the  proof of Theorem \ref{Broad-alternative specific},  we start with $d_j = d_1$ and loop over $d_k \in \{d_2,\dots,d_D\}$.  This ensures that we only take derivatives with respect to $x_k$, $k>1$. Hence,  $f(\pparam)\mathcal{Q}^{\covx_1}_{\pparam}(\mathcal{K})$ is identified for all sets $\mathcal{K} \subset \Dc : |\mathcal{K}|>1$.

Second modification: In Step 2 we obtain
\[
 f(\pparam)\mathcal{Q}^{\covx_1}(d_1)
 +
 \sum_{\mathcal{K} \subset \Dc : |\mathcal{K}|>1} f(\pparam)\mathcal{Q}^{\covx_1}_{\pparam}(\mathcal{K})
=
f(\pparam) \sum_{\mathcal{K} \subset \Dc} \mathcal{Q}^{\covx_1}_{\pparam}(\mathcal{K})
=
f(\pparam).
\]
Since the second term on the LHS is known, $f(\pparam)(1 - \mathcal{Q}^{\covx_1}(d_1))$ is identified for all $\pparam \in [0,\bar \pparam]$.  The scale is identified, because
\[
(1 - \mathcal{Q}^{\covx_1}(d_1)) = \int_{0}^{\bar \pparam} f(\pparam)(1 - \mathcal{Q}^{\covx_1}(d_1)) d \pparam.
\]
Once the scale is identified, $f(\pparam)$ is identified and so are $\mathcal{Q}^{x_1}_{\pparam}(\mathcal{K}), ~\forall \Kc \subset \Dc$.
\end{proof}

\begin{proof}[Proof of Proposition \ref{T:Test}]
For the purpose of obtaining a contradiction, suppose that there is full consideration.  Then
\begin{align*}
\Pr(d\in \Lc|x) & = \Pr\left(\argmax_{j \in \Dc} \util_\pparam(d_j,\covx) \in  \Lc\bigg|\covx\right) \\
& = \Pr(\pparam  \in [0, \pparam^*)) \\
& = F(\pparam^*) \\
& =  \Pr\left(\argmax_{j \in \Dc} \util_\pparam(d_j,\covx') \in  \Lc'\bigg|\covx'\right)\\
& = \Pr(d\in \Lc'|x').
\end{align*}
This is a contradiction.  Therefore there is limited consideration.

\end{proof}

The following two Lemmas are used in the proof of Theorem \ref{T:ARC_independent}. The proofs of these Lemmas rest on the following claims.
\begin{enumerate}
	\item The probability of alternative $d_j$ being chosen can only increase in its consideration probability.
	\item The probability of alternative $d_j$ being chosen can only decline in consideration probability of any other alternative $d_k$.
	\item The probability of alternative $d_j$ or $d_k$ being chosen can only increase in the consideration probability of $d_j$ as the positive effect of this change on $Pr(d=d_j|x)$ dominates the negative effect on $Pr(d=d_k|x)$.
	\item The probability of an alternative in $\Kc$ being chosen can only decline in consideration probability of any alternative that does not belong to $\Kc$.
\end{enumerate}

\begin{lemmaA}\label{lem1}
	Consider the Basic \ref{TA3} model. For any  $\Kc \subset \Dc$, $\sum_{j \in\Kc} \Pr(d=d_j|x)$ is increasing in $\aparam_{k}$, $\forall k \in\Kc$, and decreasing in $\aparam_{k}$, $\forall k \notin\Kc$.
\end{lemmaA}
\begin{proof}
Fix $\Kc$ and consider any $j \in \Kc$.   For each $\nu$ and $l \in \Kc$, $l \neq j$, either $j \in \Bc_{\pparam}(\dor_l,x)$ or not.  If  $j \not \in  \Bc_{\pparam}(\dor_l,x)$, then $\Pr(d=\dor_l| \covx,\pparam)$ does not depend on $\aparam_j$.  Hence,
\begin{align*}
\sum_{l \in\Kc} \Pr(d= d_l|\covx, \pparam)
& = A + \Pr(d=d_j|\covx, \pparam)  + \sum_{l \in\Kc: j  \in  \Bc_{\pparam}(\dor_l,x) } \Pr(d=d_l|\covx, \pparam) ,
\end{align*}
where $A$ is a constant that collects terms that do not depend on $\aparam_j$.  Continuing,
\begin{align*}
\sum_{l \in\Kc} \Pr(d=d_l|\covx, \pparam)  & =
A+ \aparam_j \prod_{k \in \Bc_{\pparam}(\dor_j,x)} (1-\aparam_k)
+  \sum_{l \in\Kc: j  \in  \Bc_{\pparam}(\dor_l,x) } \aparam_l \prod_{k \in \Bc_{\pparam}(\dor_l,x)} (1-\aparam_k) \\
& =  A + \aparam_j \prod_{k \in \Bc_{\pparam}(\dor_j,x)} (1-\aparam_k)
+   \sum_{l \in\Kc: j  \in  \Bc_{\pparam}(\dor_l,x) } \aparam_l(1-\aparam_j) \prod_{k \in \Bc_{\pparam}(\dor_l,x) \setminus \{j\}} (1-\aparam_k) \\
& = A +  \sum_{l \in\Kc: j  \in  \Bc_{\pparam}(\dor_l,x)}  \left(\aparam_l \prod_{k \in \Bc_{\pparam}(\dor_l,x) \setminus \{j\}} (1-\aparam_k)\right)
\\
& \quad \quad +  \left( \prod_{k \in \Bc_{\pparam}(\dor_j,x)} (1-\aparam_k) - \sum_{l \in\Kc: j  \in  \Bc_{\pparam}(\dor_l,x)} \aparam_l \prod_{k \in \Bc_{\pparam}(\dor_l,x) \setminus \{j\}} (1-\aparam_k) \right) \aparam_j \\
& \equiv \tilde A + B \aparam_j.
\end{align*}
Since $\Bc_{\pparam}(j,x) \subset \Bc_{\pparam}(\dor_l,x)$ whenever $j \in  \Bc_{\pparam}(\dor_l,x)$,
\[
B =\left(\prod_{k \in \Bc_{\pparam}(\dor_j,x)} (1-\aparam_k)  \right)\left(1 - \sum_{l \in\Kc: j  \in  \Bc_{\pparam}(\dor_l,x)} \aparam_l \prod_{k \in   \Bc_{\pparam}(\dor_l,x) \setminus \{\Bc_{\pparam}(\dor_j,x),j\}} (1-\aparam_k) \right)  \geq 0.
\]
Therefore, $\sum_{j \in\Kc} \Pr(d=d_j|x) = \int \sum_{j \in\Kc} \Pr(d=d_j|\covx, \pparam)dF$ is increasing in $\aparam_j$.

Finally, for any $k\not \in \Kc$, $\aparam_k$ may only appear on the RHS as $(1-\aparam_k)$. Hence $\sum_{j \in\Kc} \Pr(d=d_j|x)$ is decreasing in $\aparam_k$.
\end{proof}

\begin{lemmaA}\label{lem2}
	Consider the Basic \ref{TA3} model. For any  $\Kc \subset \Dc$, $\sum_{j \in\Kc} \Pr(d=d_j|x)$ is strictly increasing in $\aparam_{j}$, $j \in \Kc$,  whenever there is an open interval of $\pparam$'s at which alternative $d_j$ is preferred to all of the always-considered alternatives.  It is strictly decreasing in $\aparam_{k}$, $j \not \in \Kc$, whenever there is an open interval of $\pparam$'s and $l \in \Kc$ such that at these $\pparam$'s alternative $d_k$ is preferred to $d_l$ and $d_l$ is preferred to all of the always-considered alternatives.
\end{lemmaA}

\begin{proof}
To show the first claim, notice that $B =0$ in the proof of Lemma \ref{lem1} if and only if $\aparam_k = 1$ for some $k \in \Bc_{\pparam}(\dor_j,x)$.

To show the second claim, consider any $j \in \Kc$.  Then,
\[
\Pr(d=d_j|\covx, \pparam)  =
\aparam_j \prod_{k \in \Bc_{\pparam}(\dor_j,x)} (1-\aparam_k).
\]
For this to be strictly decreasing in $\aparam_k$, it must be the case that $k \in \Bc_{\pparam}(\dor_j,x)$ and $\aparam_l < 1$ for all $l \in  \Bc_{\pparam}(\dor_j,x) \setminus \{k\}$.
\end{proof}
	
\begin{proof}[Proof of Theorem \ref{T:ARC_independent}]
By the proof of Theorem \ref{T:Benchmark}, $f(\cdot)$ and $\Lambda_2 = \aparam_1 \aparam_2$ are identified.  The consideration parameter $\aparam_1$ is identified by $\Pr(d=d_1|\covx^{\bar{\pparam}}) = \aparam_1$, where $\covx^{\bar{\pparam}}$ is s.t. $c_{1,2}(\covx^{\bar{\pparam}})=\bar \pparam$.  Since $\Lambda_2$ is known, $\aparam_2$ is also identified.  The rest of the proof is about identification of the remaining consideration parameters.

To identify $\aparam_{j}$ take an $\covx$ such that $\Pr(d=d_j|x)\neq 0$. Denote $\mathcal{E} = \{k : \Pr(d=d_k|x)\neq 0\}$. We claim that $\Pr(d=d_k|x), \forall k \in \mathcal{E}$, does not depend on $\aparam_{l}$ for $l \notin \mathcal{E}$. Suppose otherwise. That is, suppose there exists $d_l$ such that $\Pr(d=d_l|x) = 0$ and $\Pr(d=d_k|x)$ depends on $\aparam_l$ for some $k \in \mathcal{E}$. Then, for each $\pparam$ there is an always-considered alternative that is preferred to $d_l$. Since $\Pr(d=d_k|x)$ depends on $\aparam_{l}$, there exists $\pparam \in [0,\bar \pparam]$ such that $d_l$ is preferred to $d_k$. However, the always-considered alternative that is preferred to $d_l$ at $\pparam$ is also preferred to $d_k$  by transitivity. This leads to a contradiction, because a DM with such preferences will never choose $d_k$ in the first place. Therefore, $\Pr(d=d_k|x)$ does not on $\aparam_{l}$ for any $l \not \in \mathcal{E}$.

Since $F(\cdot)$ is already identified, $\{\Pr(d=d_k| \covx)\}_{k \in \mathcal{E}} $ defines a system of $|\mathcal{E}|$ non-linear equations, where the only unknowns are $\aparam_k$, $k \in \mathcal{E}$.  This system has a unique solution. Suppose to the contrary that two sets of consideration parameters $\{\aparam_k\}_{k \in \mathcal{E}}$ and $\{\aparam_k'\}_{k \in \mathcal{E}}$ solve this system and they are distinct. Denote $\mathcal{E_{+}} = \{k :\aparam_k>\aparam_k'\}$. By Lemma \ref{lem2}, $\sum_{k \in \mathcal{E_{+}}} \Pr (d=d_k| \covx)$ is strictly larger at $\{\aparam_k\}_{k \in \mathcal{E}}$ than at $\{\aparam_k'\}_{k \in \mathcal{E}}$.  Hence, only one of these sets could satisfy data. Therefore there is a unique set of $\{\aparam_k\}_{k \in \mathcal{E}}$ that solves this system of equations, and $\aparam_j$ is identified as claimed.
\end{proof}

\begin{proof}[Proof of Theorem \ref{T:ARC_dependent_1}]
Step 1. Suppose $D=3$. Then $d^*=2$ and
\begin{align*}
	\Pr(d=d_1|x) = \int_{0}^{\cmap_{1,2}(x)}\aparam_1(\pparam) dF \quad \text{and} \quad
	\Pr(d=d_3|x) = \int_{\cmap_{2,3}(x)}^{\bar{\pparam}}\aparam_3(\pparam) dF.
\end{align*}
The ratio of the derivatives of these two moments yields $\frac{\aparam_1(\pparam)}{\aparam_3(\pparam)}$, where
\begin{align*}
	\aparam_1(\pparam)&=\aparam_1(1-\alpha(\pparam))\\
	\aparam_3(\pparam)&=\aparam_3(1+\alpha(\pparam)),
\end{align*}
First, $\frac{\aparam_1}{\aparam_3}$ is identified when $\covx$ and $\covx'$ are chosen such that $\cmap_{1,2}(\covx) = \cmap_{2,3}(\covx') =\bar{\pparam}$. Once $\frac{\aparam_1}{\aparam_3}$ is identified, $\frac{1-\alpha(\pparam)}{1+\alpha(\pparam)}$ is known for all $\pparam$; hence, $\alpha(\pparam)$ can be solved for. Identification of $f(\pparam)$ follows from substituting $\alpha(\pparam)$ into the expression for 	$\frac{\Pr(d=d_1|x)}{dx}$.

Step 2. Let $D>3$. We identify $d^*$ at $\covx^{\bar{\pparam}}$. The smallest $j$ such that $\Pr(d=d_j|\covx^{\bar{\pparam}})=0$ yields $d^*=j-1$ (or $d^*=D$ if no such $j$ exists).  We return to the case that $d^* = 1$, $d^* = 2$, or $d^* = D$ at the end of this proof.

Step 3. Using large support we establish that $\aparam_{D}$ is a decreasing function of $\aparam_{1}$. We have
\begin{equation*}
	\Pr(d=d_1|\covx^{\bar{\pparam}}) = \aparam_{1}\int_{0}^{\bar{\pparam}} (1-\alpha(\pparam))dF(\pparam) = \aparam_{1}(1-E \alpha(\pparam)).
\end{equation*}
Similarly,
\begin{equation*}
	\Pr(d=d_D|x^0) = \aparam_{D}\int_{0}^{\bar{\pparam}} (1+\alpha(\pparam))dF(\pparam) = \aparam_{D}(1+E \alpha(\pparam)).
\end{equation*}
Hence
\[
\aparam_1 = \frac{\Pr(d=d_1|x^1)}{2 - \frac{\Pr(d=d_D|x^0)}{ \aparam_D}}.
\]

Step 4. This is an intermediate step, which we use later in the proof. By Fact \ref{Fact:simpleorder},
\[
\cmap_{1,j}(x)<c_j^*(\covx) \equiv \min_{k}\{\{\cmap_{k,j}(\covx)\}_{1<k<j}, \{\cmap_{j,k}(\covx)\}_{j<k\leq D}\}, ~\forall j .
\]
Moreover any sequence $\{x^s\}_{s=1}^{\infty}$ such that $c_j^*(\covx^{s}) = \cmap_{1,j}(\covx^{s+1})$  will reach the lower bound of the support in finite number of steps. Otherwise, by the argument in the proof of Lemma \ref{lem0}, $c_j^*(\covx^{s})$ and $\cmap_{1,j}(\covx^{s})$ converge to the same point in the interior of the support, which contradicts the assumptions of the theorem.

Step 5. Identification of $\{\aparam_j\}_{1<j<d^*}$. For each $\pparam$ in a sufficiently small neighborhood near $\bar \pparam$, say $(\bar \pparam - \eps,\bar \pparam)$, and for each $j$, there is an $\covx^{j}$ such that $\cmap_{1,j}(\covx^{j})=\pparam$ and $\cmap_{k,j}(\covx^{j}) > \bar \pparam$ for all $k \neq 1,j$. It follows by Step 4 that the following equations hold:
\begin{align*}
\frac{d\Pr(d=d_2|\covx^2)}{d\covx}/\frac{d\cmap_{1,2}(\covx^2)}{dx}=& \aparam_{1}(\pparam)\aparam_{2}(\pparam)f(\pparam)  \nonumber \\ \nonumber
\frac{d\Pr(d=d_3|\covx^3)}{dx}/\frac{d\cmap_{1,3}(\covx^3)}{dx}=& \aparam_{1}(\pparam)(1-\aparam_{2}(\pparam))\aparam_{3}(\pparam)f(\pparam)\\
\frac{d\Pr(d=d_4|\covx^5)}{dx}/\frac{d\cmap_{1,4}(\covx^4)}{dx}=& \aparam_{1}(\pparam)(1-\aparam_{2}(\pparam))(1-\aparam_{3}(\pparam))\aparam_{4}(\pparam)f(\pparam)\\ \nonumber
\vdots \\ \nonumber
\frac{d\Pr(d=d_{d^*}|\covx^{d^*})}{dx}/\frac{d\cmap_{1,d^*}(\covx^{d^*})}{dx}=& \aparam_{1}(\pparam)(1-\aparam_{2}(\pparam))(1-\aparam_{3}(\pparam))\dots (1-\aparam_{d^*-1}(\pparam))f(\pparam) \nonumber
\end{align*}
The summation of these expressions recovers the quantity $\aparam_{1}(\pparam)f(\pparam)$. Next, substitute $\aparam_{1}(\pparam)f(\pparam)$ into the expressions above to sequentially recover $\{\aparam_{j}(\pparam)\}_{2 \leq j \leq d^*}$. Since $ \pparam$ can be made arbitrarily close to $\bar \pparam$ by selecting a smaller value of $\varepsilon$ and since $\alpha(\cdot)$ is continuous, $\lim_{\pparam \to \bar \pparam} \aparam_j(\pparam) = \aparam_{j}(\bar{\pparam}) = \aparam_j(1 - \alpha(\bar \pparam)) = \aparam_j$ is also identified.  Hence, $\{\aparam_j\}_{j=2}^{d^*}$ are identified.

Step 6: Identification of $\aparam_1$ and $\{\aparam_j\}_{d^*< j \leq D}$. The cutoffs are monotone in $x^t$ and  all cutoffs are on the right of $\bar{\pparam}$ at $\covx^{\bar{\pparam}}$. Consequently, $\Pr(d=d_j|x^{\bar{\pparam}})=0$ for all $j>d^*$. Continuously decrease $t$ until $\Pr(d=d_{j_1}|x^t)>0$ for some $j_1 \in \mathcal{J} \equiv \{ d^* + 1 ,\dots,D\}$ and $\Pr(d=d_{k}|x^t)=0$ for all $k \in \mathcal{J}\setminus\{j_1\}$. This will happen when $\cmap_{\dcon,j_1}(\covx^t)$ crosses $\bar{\pparam}$, yielding
\begin{equation*}
	\frac{d\Pr(d=d_{j_1}|\covx^t)}{dx}/\frac{d\cmap_{d*,j_1}(\covx^t)}{dx}= -\aparam_{j_1}(\pparam)f(\pparam) M_1(\pparam),
\end{equation*}
where $\pparam$ is in a small neighborhood near $\bar \pparam$ s.t. $\cmap_{k,j_1}(\covx^t)>\bar \pparam$ for all $k>d^*, k \neq j_1$ and
\[
\quad \quad
M_1(\pparam) \equiv \prod_{k \in \{2,\cdots,d^*-1\}  : c_{k,j_1}(x^t)>\bar{\pparam}} (1-\aparam_k(\pparam)).
\]
Near the end of this proof-step we will take the limit as $\pparam \to \bar \pparam$ after dividing out $f(\pparam)$.  Importantly, $M_1 \equiv \lim_{\pparam \to \bar \pparam}M_1(\pparam)$ is known, since all relevant $\aparam_{k}$'s are known, and $M_1$ does not depend on $\aparam_1$, since $c_{1,j_1}(x^t)< c_{d^*,j_1}(x^t) < \bar{\pparam}$.

Next, continuously decrease $t$ further until  $\Pr(d=d_{j_2}|x^t)>0$ for some $j_2 \in \mathcal{J}\setminus \{j_1\}$ and $\Pr(d=d_k|x^t)=0$ for all $k \in \mathcal{J}\setminus\{j_1,j_2\}$.  Again, this will happen when $\cmap_{\dcon,j_2}(\covx)$ crosses $\bar{\pparam}$. Hence,
\begin{align*}
&\frac{d\Pr(d=d_{j_2}|\covx^t)}{dx}/\frac{d\cmap_{d*,j_2}(\covx^t)}{dx}= -\aparam_{j_2}(\pparam)f(\pparam) M_2(\pparam),	\\
&M_2(\pparam) \equiv \prod_{k \in \{2,\dots,d^*-1,j_1\}: c_{k,j_2}(x^t)>\bar{\pparam}} (1-\aparam_k(\pparam)).
\end{align*}
	
The term $M_2 \equiv \lim_{\pparam \to \bar \pparam}M_2(\pparam)$ is known, except possibly for the term $(1-\aparam_{j_1})$, since all other relevant $\aparam_{k}$'s are known. The expression above defines $\aparam_{j_2}(\pparam)$ as a strictly increasing function of $\aparam_{j_1}(\pparam)$ regardless of whether $M_2(\pparam)$ depends on $\aparam_{j_1}(\pparam)$ or not. Indeed, for the case where $j_1<j_2$ we have
\begin{align*}
  \aparam_{j_2}(\pparam)&\propto
     \begin{cases}
    \aparam_{j_1}(\pparam) & \text{if }  c_{j_1,j_2}(x^t) < \bar \pparam \\
    \frac{\aparam_{j_1}(\pparam)}{1-\aparam_{j_1}(\pparam)}   & \text{if } c_{j_1,j_2}(x^t)  \geq  \bar \pparam
    \end{cases},
\end{align*}
where the coefficients of proportionality are known in the limit. A similar expression holds when $j_1>j_2$. This argument immediately extends to all $j \in \mathcal{J}$. In particular, for the case where $j_2<j_3$
\begin{align*}
\aparam_{j_3}&\propto
\begin{cases}
\aparam_{j_2(\pparam)}  & \text{if }  c_{j_2,j_3}(x^t) < \bar \pparam \\
\frac{\aparam_{j_2}(\pparam)}{1-\aparam_{j_2}(\pparam)}  & \text{if } c_{j_2,j_3}(x^t)  \geq  \bar \pparam
\end{cases}.
\end{align*}
Since $Pr(d=d_D|x^0) \neq 0$, the above sequential argument yields that $\aparam_{D}(\pparam)$ is an increasing function of $\aparam_{j_1}(\pparam)$. In turn, recall that $\aparam_1(\pparam) f(\pparam)$ is known for $\pparam$ arbitrary close to $\bar \pparam$. The limit of the ratio between $\aparam_1 (\pparam)f(\pparam)$ and  $\aparam_{j_1}(\pparam)f(\pparam)$, which is also known, yields $\aparam_D$ as an increasing functions of $\aparam_{1}$. Hence, taken with the result in Step 3, the quantity $\aparam_{1}$ is uniquely pinned down. Identification of all other $\aparam_{j}$'s immediately follow.

Step 7: Identification of $\alpha(\pparam)$ and $f(\pparam)$. The identification argument is iterative.  For each alternative $j$, define
\[
\Gamma^0_j \equiv \{  \pparam \in [0, \bar \pparam] : \exists \covx \in \mathcal{X} \text{ s.t. } \pparam = c_{1,j}(\covx)  \text{ and } c^*_j(\covx) \geq \bar \pparam\}.
\]
The set $\Gamma^0_j$ includes all preference parameters $\pparam$ covered by the cutoff $\cmap_{1,j}(\cdot)$  before any other relevant cutoffs for $d_j$ enter the support. Let $\Gamma^0 \equiv \bigcap_{j=1}^{D} \Gamma^0_j$.   By Step 4,  $\Gamma^0$ is a non-trivial interval and $\bar{\pparam} \in \Gamma^0$.  For each $\pparam \in \Gamma^0$ and each $d_j$, there is an $\covx^{j} \in \mathcal{X}$ such that $\cmap_{1,j}(\covx^{j})=\pparam$.  As a result, the following system of equations hold for each $\pparam \in \Gamma^0$:

\begin{align}{\label{right_corner}}
-\frac{d\Pr(d=d_2|x^2)}{dx}/\frac{d\cmap_{1,2}(x^2)}{dx}=& \aparam_{1}(\pparam)\aparam_{2}(\pparam)f(\pparam)  \nonumber \\ \nonumber
-\frac{d\Pr(d=d_3|x^3)}{dx}/\frac{d\cmap_{1,3}(x^3)}{dx}=& \aparam_{1}(\pparam)(1-\aparam_{2}(\pparam))\aparam_{3}(\pparam)f(\pparam)\\
-\frac{d\Pr(d=d_4|x^4)}{dx}/\frac{d\cmap_{1,4}(x^4)}{dx}=& \aparam_{1}(\pparam)(1-\aparam_{2}(\pparam))(1-\aparam_{3}(\pparam))\aparam_{4}(\pparam)f(\pparam)\\ \nonumber
\vdots \\ \nonumber
-\frac{d\Pr(d=d_{\dcon}|x^{d^*})}{dx}/\frac{d\cmap_{1,\dcon}(x^{d^*})}{dx}=& \aparam_{1}(\pparam)(1-\aparam_{2}(\pparam))(1-\aparam_{3}(\pparam))\dots (1-\aparam_{\dcon-1}(\pparam))f(\pparam) \nonumber,
\end{align}
The summation of these expressions recovers the quantity $\aparam_{1}(\pparam)f(\pparam)$.  Substitute this into the first equation to obtain $\aparam_2(\pparam)  = \aparam_2(1-\alpha(\pparam))$.  But $\aparam_2$ is already known, so $\alpha(\pparam)$ is identified on $\Gamma^0$.  Finally, since $\aparam_1(\pparam) = \aparam_1(1-\alpha(\pparam))$ is now identified, so is $f(\pparam)$ on $\Gamma^0$.

In the next step of the iteration, let $\bar \pparam^1 = \min_{\pparam \in \Gamma^0} \Gamma^0$ be the smallest value of $\pparam$ where $\alpha(\pparam)$ and $f(\pparam)$ are identified.  Define
\[
\Gamma^1_j \equiv \{  \pparam \in [0, \bar \pparam] : \exists \covx \in \mathcal{X} \text{ s.t. } \pparam = c_{1,j}(\covx)  \text{ and } c^*_j(\covx) \geq \bar \pparam^1\} \quad \text{and} \quad  \Gamma^1 \equiv \bigcap_{j=1}^{D} \Gamma^1_j.
\]
Then,  a similar system to \eqref{right_corner} holds $\forall \pparam \in \Gamma^1$, but may include additional terms. These terms are known, because they are functions of $f(\cdot)$ and $\alpha(\cdot)$ evaluated at $\pparam \in \Gamma^0$ (and also of $\{\aparam_j\}_{j=1}^D$).  We can therefore repeat the argument from the base case to establish that $\alpha(\pparam)$ and $f(\pparam)$ are identified on $\Gamma^1$.  We repeat this iterative procedure.  After a finite number of steps $T$, we obtain $\Gamma^T = [0,\bar \pparam]$ by Step 4; hence, $f(\cdot)$ and $\alpha(\cdot)$ are identified.

Edge Case I:
Suppose $d^* = d_1$ (the identity is known from Step 2). The following expressions hold for $\pparam$ close to $\bar{\pparam}$:
\begin{align}\label{eq:edge}
	-\frac{d\Pr(d=d_2|x^2)}{dx}/\frac{d\cmap_{1,2}(x^2)}{dx}=& \aparam_{2}(\pparam)f(\pparam)  \nonumber \\ \nonumber
	-\frac{d\Pr(d=d_3|x^3)}{dx}/\frac{d\cmap_{1,3}(x^3)}{dx}=& (1-\aparam_{2}(\pparam))\aparam_{3}(\pparam)f(\pparam)\\
	-\frac{d\Pr(d=d_4|x^4)}{dx}/\frac{d\cmap_{1,4}(x^4)}{dx}=& (1-\aparam_{2}(\pparam))(1-\aparam_{3}(\pparam))\aparam_{4}(\pparam)f(\pparam)\\ \nonumber
	\vdots \\ \nonumber
	-\frac{d\Pr(d=d_D|x^{D})}{dx}/\frac{d\cmap_{1,D}(x^{D})}{dx}=& (1-\aparam_{2}(\pparam))(1-\aparam_{3}(\pparam))\dots (1-\aparam_{D-1}(\pparam))\aparam_{D}(\pparam)f(\pparam),
\end{align}
The ratio of the first and second equations in System \eqref{eq:edge} yields:
\begin{align*}
	\frac{\aparam_{2}(\pparam)}{\aparam_{3}(\pparam)}&= (1-\aparam_{2}(\pparam))\frac{1}{A_3(\bs{\covx}_3(\pparam))},
\end{align*}
where $\bs{\covx}(\pparam) = (x^2(\pparam),x^3(\pparam),\dots,x^D(\pparam))$ is a known implicit function of $\pparam$ satisfying $\pparam = \cmap_{1,j}(x^j(\pparam))$ for $j=2,\dots,D$, and $A_3(\bs{\covx}(\pparam))$ is data:
\[
A_3(\bs{\covx}(\pparam)) = \frac{\frac{d\Pr(d=d_3|x^3(\pparam))}{dx}/\frac{d\cmap_{1,3}(x^3(\pparam))}{dx}}{\frac{d\Pr(d=d_2|x^2(\pparam))}{dx}/\frac{d\cmap_{1,2}(x^2(\pparam))}{dx}}.
\]
From this, we obtain
\begin{align*}
	\frac{\aparam_{2}}{\aparam_{3}}&=(1-\aparam_{2}(1-\alpha(\pparam))\frac{1}{A_3(\bs{\covx}(\pparam))}\\
	\frac{A_3(\bs{\covx}(\pparam))}{\aparam_{3}}&= \frac{1}{\aparam_{2}}-(1-\alpha(\pparam))\\
	\alpha(\pparam)&=1-\frac{1}{\aparam_{2}}+\frac{A_3(\bs{\covx}(\pparam))}{\aparam_{3}}\\
\alpha'(\pparam)&=\left[\frac{\partial A_3(\bs{\covx}(\pparam))}{\partial \covx^2}\frac{d\covx^2(\pparam)}{ d \pparam}+\frac{\partial A_3(\bs{\covx}(\pparam))}{\partial \covx^3}\frac{d\covx^3(\pparam)}{ d \pparam}\right]\frac{1}{\aparam_{3}}\equiv B_3(\bs{\covx}(\pparam))\frac{1}{\aparam_{3}},
\end{align*}
where $B_3(\bs{\covx}(\pparam))$ is a  known function of data.  A similar idea yields
\begin{align*}
	\alpha'(\pparam)&=B_4(\bs{\covx}(\pparam))\frac{1}{\aparam_{4}}\\
	\alpha'(\pparam)&=B_5(\bs{\covx}(\pparam))\frac{1}{\aparam_{5}}\\
	\vdots \\
	\alpha'(\pparam)&=B_D(\bs{\covx}(\pparam))\frac{1}{\aparam_{D}}.
\end{align*}
Hence, the ratios $\frac{\aparam_{3}}{\aparam_{4}}$, $\frac{\aparam_{4}}{\aparam_{5}}$, $\frac{\aparam_{5}}{\aparam_{6}}\dots$ are identified.  The ratio and the limit at the far end of the support of the third and fourth equations in System \eqref{eq:edge} yields
\begin{align*}
0 = 1-\frac{1}{\aparam_{3}} +  A_4(\bs{\covx}(\bar{\pparam}))\frac{1}{\aparam_{4}}\\
\aparam_3 = 1 - A_4(\bs{\covx}(\bar{\pparam}))\frac{\aparam_3}{\aparam_{4}},
\end{align*}
so $\aparam_{3}$ is identified and so are $\aparam_{4},\dots,\aparam_{D}$.  The ratio of the first and second equation in System \eqref{eq:edge}  identifies $\aparam_{2}$.  The proof continues on with Step 7. As such it does not require the second condition of the theorem.

Edge Case II: Suppose that $d^* = D$.  From Step 5 we obtain $\aparam_j$ for $j:1<j<D$.  Next, we show how to identify $\aparam_1$.  We can find $x^j$ such that $\pparam = c_{j,D}(\covx^j)$ is arbitrary close to zero satisfying $c_{j,k}(\covx^j)<0$ for all $k \neq j,D$, and so the following system of equations holds:
	\begin{align*}
\frac{d\Pr(d=d_{D-1}|\covx^{D-1})}{dx}/\frac{d\cmap_{D-1,D}(\covx^{D-1})}{dx}=& \aparam_{D-1}(\pparam)f(\pparam)  \nonumber \\ \nonumber
\frac{d\Pr(d=d_{D-2}|\covx^{D-2})}{dx}/\frac{d\cmap_{D-2,D}(\covx^{D-2})}{dx}=& (1-\aparam_{D-1}(\pparam))\aparam_{D-2}(\pparam)f(\pparam)\\
\frac{d\Pr(d=d_{D-3}|\covx^{D-3})}{dx}/\frac{d\cmap_{D-3,D}(\covx^{D-3})}{dx}=& (1-\aparam_{D-1}(\pparam))(1-\aparam_{D-2}(\pparam))\aparam_{D-3}(\pparam)f(\pparam)\\ \nonumber
\vdots \\ \nonumber
\frac{d\Pr(d=d_{1}|\covx^{1})}{dx}/\frac{d\cmap_{1,D}(\covx^{1})}{dx}=& (1-\aparam_{D-1}(\pparam))(1-\aparam_{D-2}(\pparam))\dots (1-\aparam_{2}(\pparam))\aparam_{1}(\pparam)f(\pparam). \nonumber
\end{align*}
The ratio of the first two equations yield
\[
A = \frac{ \aparam_{D-2}}{  \aparam_{D-1}} \cdot (1 - \aparam_{D-1}(1 + \alpha(\pparam)),
\]
where $A$, $\aparam_{D-2}$, and $\aparam_{D-1}$ are known terms; hence, $\alpha(0) = \lim_{\pparam \to 0} \alpha(\pparam)$ is identified.   Once $\alpha(0)$ is identified, the term $\aparam_1$ is identified from the ratio of the  first and last equations in the above system.  Finally, $f(\pparam)$ and $\alpha(\pparam)$  are identified by Step 7.

Edge Case III: Suppose $d^*=2$. By Steps 3, 5 and 6 all $\aparam_{j}$'s are identified. A modified version of Steps 4 and 7 can be applied.  We begin by starting at the lower end of the distribution with $d_D$ taking the role of $d_1$; $d_{D-1}$ taking the role of $d_2$, etc. Step 4 can be restated for
\[
\cmap_{j,D}(x)>c_j^{**}(\covx) \equiv \max_{k}\{\{\cmap_{k,j}(\covx)\}_{1\leq k<j}, \{\cmap_{j,k}(\covx)\}_{j<k< D}\}, ~\forall j .
\]
Finally Step 7 can be repeated starting at the lower end of the support and building on the following equations
	\begin{align*}
	\frac{d\Pr(d=d_{D-1}|\covx^{D-1})}{dx}/\frac{d\cmap_{D-1,D}(\covx^{D-1})}{dx}=& \aparam_{D}(\pparam) \aparam_{D-1}(\pparam)f(\pparam)  \nonumber \\ \nonumber
	\frac{d\Pr(d=d_{D-2}|\covx^{D-2})}{dx}/\frac{d\cmap_{D-2,D}(\covx^{D-2})}{dx}=& \aparam_{D}(\pparam)(1-\aparam_{D-1}(\pparam))\aparam_{D-2}(\pparam)f(\pparam)\\
	\frac{d\Pr(d=d_{D-3}|\covx^{D-3})}{dx}/\frac{d\cmap_{D-3,D}(\covx^{D-3})}{dx}=& \aparam_{D}(\pparam)(1-\aparam_{D-1}(\pparam))(1-\aparam_{D-2}(\pparam))\aparam_{D-3}(\pparam)f(\pparam)\\ \nonumber
	\vdots \\ \nonumber
	\frac{d\Pr(d=d_{3}|\covx^{3})}{dx}/\frac{d\cmap_{1,3}(\covx^{3})}{dx}=& \aparam_{D}(\pparam) (1-\aparam_{D-1}(\pparam))(1-\aparam_{D-2}(\pparam))\dots (1-\aparam_{3}(\pparam))f(\pparam). \nonumber
\end{align*}

\end{proof}

\begin{proof}[Proof of Theorem \ref{Broad-alternative specific_ARC}]
	
Let $\pparam$, $\covx$, $\mathcal{N}_{ \epsilon}(\covx) = \{ \covx' : \| \covx' -  \covx\| <  \epsilon\}$ satisfy the conditions in the theorem. Then $\pparam = c_{j,k}(\covx)$ for all $j,k$. For any pair of alternatives $(d_j,d_k)$ we can perturb $\covx_k$, $ k \notin \{j,d^*\}$, and $\covx_l$, $\forall l \notin \{j,d^*,k\}$, so that the resulting $x' \in \mathcal{N}_{ \epsilon}(\covx)$ is such that
\begin{align*}
& U_{ \pparam}(d_k, \covx'_k)> U_{ \pparam}(d_j,  \covx_j)  \\
&U_{ \pparam}(d_j, \covx_j) > U_{ \pparam}(d_l,  \covx'_l), \quad  \forall l \in \mathcal{D}\setminus\{j,k,d^*\}.
\end{align*}
And we can do another perturbation of $\covx_l$, $\forall l \notin \{j,d^*\}$, so that the resulting $x'' \in \mathcal{N}_{ \epsilon}(\tilde \covx)$ is
such that
\begin{align*}
&U_{ \pparam}(d_j, \covx_j) > U_{ \pparam}(d_l,\covx''_l), \quad  \forall l \in \mathcal{D}\setminus\{j,d^*\}.
\end{align*}
Then
\begin{align*}
 \frac{\partial \Pr(d=d_{j}| \covx')}{\partial \covx_{d^*}}
& = \aparam_j( \covx_j,\pparam) (1-\aparam_k( \covx'_k,\pparam)) f(\pparam)\frac{\partial c_{j,d^*}( \covx)}{\partial x_{d^*}}\\
\frac{\partial \Pr(d=d_{j}| \covx'')}{\partial \covx_{d^*}}
& = \aparam_j( \covx_j,\pparam) f(\pparam)\frac{\partial c_{j,d^*}( \covx)}{\partial x_{d^*}}.
\end{align*}
Taking the ratio of the expressions above identifies $(1-\aparam_k( \covx'_k,\pparam))$. By continuity we identify $\aparam_k( \covx_k,\pparam)$. Identical steps identify $\aparam_j( \covx_j,\pparam),~\forall j\neq d^*$, and hence $f(\pparam)$.

\end{proof}

\begin{proof}[Proof of Proposition \ref{prop:monotonicity}]

	Take any non-empty consideration set $\Ks$. For a given preference coefficient $\pparam$, let $j_{\Ks}(\covx,\pparam)$ denote the identity of the best alternative in this consideration set. By the natural ordering, $j_{\Ks}(\covx,\pparam)$ is an increasing step function in $\pparam$. Hence, $I(j_{\Ks}(\covx,\pparam)\leq J)$ is a decreasing step function.  The term $\Pr\left(\bigcup_{j=1}^J \dor_j \bigg|\covx,\pparam\right)$ is a non-negatively weighted sum of $I(j_{\Ks}(\covx,\pparam)\leq J)$. Hence it is decreasing in $\pparam$.
\end{proof}

\begin{proof}[Proof of Proposition \ref{fact:modGeneric}]
Consider a limited consideration model with preferences $U_{\pparam}(d_j,\covx)$ and consideration probability $ \mathcal{Q}_{\pparam}^{\covx}(\cal{K})$, $\cal{K} \subset \cal{D}$.  The optimal choice from $\Dc$ conditional on the DM facing the consideration set $\mathcal{K} \neq \emptyset$ is the alternative with the largest value of $U_{\pparam}(d_j,\covx)$ subject to $j \in \mathcal{K}$.   This is the same solution as the one that maximizes $V_{\pparam}(d_j,\covx,\epsilon_j)$ where $\epsilon_j = 0$ for all $j \in \mathcal{K}$ and $\epsilon_j = -\infty $ for all $j \in \Dc \setminus \mathcal{K}$.  Finally, since conditional on $\covx$ the consideration set $\Kc$ has the same distribution as the set of alternatives with $\epsilon_j=0$ (this is by construction), the  limited consideration model and this ORUM model yield the same model predictions, and hence they are equivalent.
\end{proof}

\section{Application: Verifying Cutoff Order}\label{app:AppAssumption}

We start by recalling that CARA and CRRA utility functions satisfy the following basic property \citep[see, e.g.,][]{Pratt1964,Barseghyan2018}.\footnote{%
	This property is equivalent to condition (e) in \citet[Theorem 1]{Pratt1964}. As shown there, it is equivalent to assuming that an increase in $\pparam $ corresponds to an increase in the coefficient of absolute risk aversion.}
\begin{lemmaB}
	For any $y_{0}>y_{1}>y_{2}>0$, the ratio $R(y_{0},y_{1},y_{2})\equiv \frac{u_{\pparam}(y_{1})-u_{\pparam}(y_{2})}{u_{\pparam}(y_{0})-u_{\pparam}(y_{1})} $ is strictly
	increasing in $\pparam $.
\end{lemmaB}
It follows that CARA and CRRA utility functions also satisfy a slightly extended version of the property above:
\begin{lemmaB}\label{propCARA}
	For any $y_{0}>y_{1}>y_{2}>y_{3}>0$, the ratio $M_{\pparam}(y_{0},y_{1},y_{2},y_{3})\equiv \frac{u_{\pparam}(y_{2})-u_{\pparam}(y_{3})}{u_{\pparam}(y_{0})-u_{\pparam}(y_{1})} $ is strictly
	increasing in $\pparam $.
\end{lemmaB}
\begin{proof}
	\begin{align*}
	M_{\pparam}(y_{0},y_{1},y_{2},y_{3})&=\frac{u_{\pparam}(y_{2})-u_{\pparam}(y_{3})}{u_{\pparam}(y_{0})-u_{\pparam}(y_{1})}
    =\frac{u_{\pparam}(y_{2})-u_{\pparam}(y_{3})}{u_{\pparam}(y_{1})-u_{\pparam}(y_{2})}\times
	 \frac{u_{\pparam}(y_{1})-u_{\pparam}(y_{2})}{u_{\pparam}(y_{0})-u_{\pparam}(y_{1})}\\
	 &=R_{\pparam}(y_{1},y_{2},y_{3})R_{\pparam}(y_{0},y_{1},y_{2})
	\end{align*}
\end{proof}
For our application, we show that $\cmap_{1,j}(\bar{p},\mu)<\cmap_{1,j+1}(\bar{p},\mu)$ for any $j \geq 2$ under both CARA and CRRA preferences.
\begin{propositionB}
Suppose deductibles and prices are such that
\[
\frac{p_{1}-p_{j}}{p_{1}-p_{j+1}}<\frac{d_{1}-d_{j}}{d_{1}-d_{j+1}}
\]
and $d_k + p_k < w$ for all $k$.
Under either CARA or CRRA expected utility preferences, the cutoff mapping is unique and satisfies $\cmap_{1,j}(\bar{p},\mu)<\cmap_{1,j+1}(\bar{p},\mu)$ for all $j>1$.
\end{propositionB}
Note that in a perfectly competitive markets where additional coverage is simply proportional to its price both ratios will be equal. In practice, however, one might expect that with some market power the prices increase faster than then coverage, and hence
\[
\frac{p_{1}-p_{j}}{p_{1}-p_{j+1}}<\frac{d_{1}-d_{j}}{d_{1}-d_{j+1}}
\]
This is exactly what we find in our data (as well as for a larger number of firms appearing in \cite{Barseghyan2011}).

\begin{proof}
We start with CARA preferences. The existence and the uniqueness of $c_{j,k}(\covx)$ for all $j<k$ follows directly from the Lemma \ref{propCARA}. Indeed note that $p_j<p_k<p_k+d_k<p_j+d_j$.$\footnote{If $p_k+d_k>p_j+d_j$, then alterantive $j$ first order stochastically dominates $k$ and hence the cuttoff is $+\infty$.}$ At the cutoff the DM is indifferent between lotteries $j$ and $k$. Equating two expected utilities and rearranging we have that
\begin{align}\label{Eq:prop5}
\frac{e^{-\pparam(w-p_k-d_k)}-e^{-\pparam(w-p_j-d_j)}}{e^{-\pparam (w-p_j)}-e^{-\pparam(w-p_k)}}&= \frac{1-\mu}{\mu},
\end{align}
where $w$ is the DM's initial wealth. By Lemma \ref{propCARA}, the L.H.S. of Equation \ref{Eq:prop5} is strictly monotone in $\pparam$, and it tends to $+\infty$ when $\pparam$ goes to $+\infty$ and to zero when $\pparam$ goes to $-\infty$. It follows that there exists a unique $\pparam$, i.e the cutoff $c_{j,k}(\covx)$, that solves the Equation \ref{Eq:prop5}. Moreover, since the L.H.S. is strictly monotone in $\pparam$ it follows from the Implicit Function Theorem that $c_{j,k}(\covx)$ is continuous in $\mu$ and $\bar{p}$.

The next step is to establish  $c_{1,j}(\bar{p},\mu)<c_{1,j+1}(\bar{p},\mu)$, $j>1$. For the purpose of obtaining a contradiction, suppose that there exists $(\bar{p},\mu)$ and an $j$ such that $c_{1,j}(\bar{p},\mu)\geq c_{1,j+1}(\bar{p},\mu)$. Since the expected utility of lottery $k$ is proportional to
\begin{align*}	
EU_{\pparam}(L_{k}) &\propto-e^{\pparam p_{k}}\left(1-\mu  +\mu e^{\pparam d_k}\right),
\end{align*}
there exists  $\nu = c_{1,j}(\bar{p},\mu) \geq c_{1,j+1}(\bar{p},\mu)$ such that
\begin{align*}	
\frac{1-\mu  +\mu e^{\pparam d_1}}{1-\mu  +\mu e^{\pparam d_{j}}} e^{\pparam(g_1-g_{j})\bar{p}} &=1\leq \frac{1-\mu  +\mu e^{\pparam d_1}}{1-\mu  +\mu e^{\pparam d_{j+1}}} e^{\pparam(g_1-g_{j+1})\bar{p}}
\end{align*}
Taking logs yields
\begin{align*}	
\log\left(\frac{1-\mu  +\mu e^{\pparam d_1}}{1-\mu  +\mu e^{\pparam d_{j}}}\right)&=- \pparam(g_1-g_{j})\bar{p} \\
\log\left(\frac{1-\mu  +\mu e^{\pparam d_1}}{1-\mu  +\mu e^{\pparam d_{j+1}}}\right)&\geq- \pparam(g_1-g_{j+1})\bar{p}.
\end{align*}
Dividing through and using the fact that $- \pparam(g_1-g_{j+1})\bar{p} \geq - \pparam(g_1-g_{j})\bar{p} \geq 0$ yields
\begin{align*}	
\frac{\log\left(\frac{1-\mu  +\mu e^{\pparam d_1}}{1-\mu  +\mu e^{\pparam d_{j}}}\right)}
{\log\left(\frac{1-\mu  +\mu e^{\pparam d_1}}{1-\mu  +\mu e^{\pparam d_{j+1}}}\right)} &\leq\frac{g_1-g_{j}}{g_1-g_{j+1}}.
\end{align*}
The R.H.S. is less than one. We claim that the L.H.S. is monotonically decreasing in $\mu<1$. To show this, denote $\hat{\mu}=\frac{1-\mu}{\mu}$, $\Delta_1=e^{\pparam d_1}$, $\Delta_j=e^{\pparam d_j}$, and $\Delta_{j+1}=e^{\pparam d_{j+1}}$ to rewrite the L.H.S. as follows
\begin{align*}	
\text{L.H.S} = f\left(\frac{1-\mu}{\mu}\right) = f(\hat \mu) \equiv \frac{\log(\Delta_1+\hat{\mu})-\log(\Delta_j+\hat{\mu})}{\log(\Delta_1+\hat{\mu})-\log(\Delta_{j+1}+\hat{\mu})}.
\end{align*}
First, we show that the above expression is monotonically increasing in $\hat{\mu}$. Observe that
\footnotesize
\begin{align*}	
\frac{f'(\hat \mu)}{f(\hat\mu)} &= \left(\frac{1}{\Delta_1+\hat{\mu}}-\frac{1}{\Delta_j+\hat{\mu}}\right)\frac{1}{\log(\Delta_1+\hat{\mu})-\log(\Delta_j+\hat{\mu})}-
\left(\frac{1}{\Delta_1+\hat{\mu}}-\frac{1}{\Delta_{j+1}+\hat{\mu}}\right)\frac{1}{\log(\Delta_1+\hat{\mu})-\log(\Delta_{j+1}+\hat{\mu})}
\end{align*}
\normalsize
After relabeling $\Lambda_1=-\log(\Delta_1+\hat{\mu})$, $\Lambda_j=-\log(\Delta_j+\hat{\mu})$ and $\Lambda_{j+1}=-\log(\Delta_{j+1}+\hat{\mu})$ we obtain
\begin{align*}	
\frac{f'(\hat \mu)}{f(\hat\mu)} = &\frac{e^{\Lambda_1}-e^{\Lambda_{j+1}}}{\Lambda_1-\Lambda_{j+1}}-\frac{e^{\Lambda_1}-e^{\Lambda_j}}{\Lambda_1-\Lambda_j}.
\end{align*}
Since $\Lambda_1<\Lambda_j<\Lambda_{j+1}$ and exponential function is convex,  the expression above is positive. Thus, the derivative of $f\left( \frac{1-\mu}{\mu} \right)$ W.R.T. $\mu$ is negative as claimed.  That is, $f\left( \frac{1-\mu}{\mu} \right)$ achieves its lowest value at $\mu=1$ and is equal to $\frac{d_1-d_j}{d_1-d_{j+1}}$. Finally, a contradiction is obtained since
\begin{align*}	
\text{L.H.S}  \geq \min_{\mu}f\left( \frac{1-\mu}{\mu} \right) =  \frac{d_1-d_j}{d_1-d_{j+1}} > \frac{g_1-g_{j}}{g_1-g_{j+1}} = \text{R.H.S},
\end{align*}
where the strict inequality is by assumption.  Therefore, $c_{1,j}(\bar{p},\mu)< c_{1,j+1}(\bar{p},\mu)$ under CARA as claimed.

Under CRRA, $c_{j,k}(\bar{p},\mu)$ exist and are continuous exactly for the same reasons as under CARA. It remains to establish that $c_{1,j}(\bar{p},\mu)< c_{1,j+1}(\bar{p},\mu)$.  For the purpose of obtaining a contradiction, suppose $c_{1,j}(\bar p,\mu) \geq c_{1,j+1}(\bar p,\mu)$ for some $(\bar p,\mu)$. Consider the following Taylor expansion for the CRRA Bernoulli utility function $u_{\pparam}(w) \equiv  \frac{ w^{1-\pparam}}{1-\pparam} $:
\begin{align*}
\frac{ (w-p_k)^{1-\pparam}}{1-\pparam}
& =  \frac{ w^{1-\pparam}}{1-\pparam}
      +  \frac{w^{-\pparam}}{1!}(-p_k)
        -\pparam \frac{w^{-\pparam-1}}{2!} (-p_k)^2
      +  \pparam(\pparam +1) \frac{w^{-\pparam-2}}{3!} (-p_k)^3
      + \dots
\end{align*}
Or, equivalently,
\begin{align*}
(1-\pparam)w^{\pparam-1} [u_{\pparam}(w-p_k) - u_{\pparam}(w) ]
& =   (\pparam-1)\frac{1}{1!}w^{-1}p_k
        +(\pparam-1)\pparam \frac{w^{-2}}{2!}p_k^2
      +  (\pparam-1)\pparam(\pparam +1) \frac{w^{-3}}{3!} p_k^3
      + \dots
\end{align*}
Hence,
\begin{align*}	
EU_{\pparam}(L_{k}) \propto \left( 1-\mu \right)\sum_{t=1}^{\infty}\omega_t(\pparam) p_{k}^t +\mu \sum_{t=1}^{\infty}\omega_t(\pparam) \left(p_{k}+d_{k}\right)^t.
\end{align*}
where $\omega_t(\pparam) \equiv (t!w^{t})^{-1}\prod_{t'=0}^{t-1}(\pparam-1+t')<0$ when $\pparam \in (0,1)$.  When $\pparam>1$, $\omega_t(\pparam)>0$ but the factor premultiplying  $u_{\pparam}(w-p_k)$ above is negative, so we would still come to the same conclusion that $EU_{\pparam}(L_{k})$ is proportional to a power series with coefficients $\tau_t(\pparam) = - \omega_t(\pparam) <0$. The power series are absolutely convergence provided that $p_k+d_k<w$, so the difference in the power series for $EU_{\pparam}(L_{j})$ and $EU_{\pparam}(L_{k})$ is equal to the sum of the difference:
\begin{align*}	
EU_{\pparam}(L_{j})-EU_{\pparam}(L_{k}) & \propto \left( 1-\mu \right)\sum_{t=1}^{\infty}\omega_t(\pparam) \left(p_{j}^t-p_{k}^t\right) +\mu \sum_{t=1}^{\infty}\omega_t(\pparam) \left(\left(p_{j}+d_{j}\right)^t-\left(p_{k}+d_{k}\right)^t \right)\\
&= \left(p_{j}-p_{k}\right) \left( 1-\mu \right)\sum_{t=1}^{\infty}\omega_t(\pparam) \sum_{h=0}^{t-1}p_{j}^h p_{k}^{t-h}+\\ &+\left(\left(p_{j}-p_{k}\right)+\left(d_{j}-d_{k}\right) \right) \mu \sum_{t=1}^{\infty}\omega_t(\pparam) \sum_{h=0}^{t-1} \left(p_{j}+d_{j}\right)^{h}\left(p_{k}+d_{k}\right)^{t-h}.
\end{align*}
The condition $\pparam=c_{1,j}(\bar{p},\mu)\geq c_{1,j+1}(\bar{p},\mu)$ implies
\begin{align}\label{eq:crraineq}
\frac{p_{1}-p_{j}}{p_{1}-p_{j+1}}\geq &\frac{p_{1}-p_{j}+d_{1}-d_{j}}{p_{1}-p_{j+1}+d_{1}-d_{j+1}} \delta(\pparam),
\end{align}
where
\begin{align*}
\delta(\pparam) \equiv & \frac{\sum_{t=1}^{\infty}\omega_t(\pparam) \sum_{h=0}^{t-1} \left(p_{1}+d_{1}\right)^{h}\left(p_{j}+d_{j}\right)^{t-h}}{\sum_{t=1}^{\infty}\omega_t(\pparam) \sum_{h=0}^{t-1} \left(p_{1}+d_{1}\right)^{h}\left(p_{j+1}+d_{j+1}\right)^{t-h}} \frac{\sum_{t=1}^{\infty}\omega_t(\pparam) \sum_{h=0}^{t-1}p_{1}^h p_{j+1}^{t-h}}{\sum_{t=1}^{\infty}\omega_t(\pparam) \sum_{h=0}^{t-1}p_{1}^h p_{j}^{t-h}}.
\end{align*}
Under the assumption $\pparam=c_{1,j}(\bar{p},\mu)\geq c_{1,j+1}(\bar{p},\mu)$ it is also the case that $\pparam=c_{1,j}(\bar{p},\mu)\geq c_{1,j+1}(\bar{p},\mu)\geq c_{j,j+1}(\bar{p},\mu)$ by Fact \ref{A:CO}.  Hence, $p_{j}+d_{j}>p_{j+1}+d_{j+1}$. Indeed otherwise $p_{j+1}-p_{j}>d_{j}-d_{j+1}$ is a violation of the first order stochastic dominance.  Taken with  $p_{j+1}>p_{j}$, it follows that $\delta(\pparam)>1$.  Finally, a contradiction will be obtained if
\begin{align*}	
\frac{p_{j}-p_{1}}{p_{j+1}-p_{1}}&\leq \frac{p_{j+1}-p_{1}+d_{1}-d_{j}}{p_{1}-p_{j+1}+d_{1}-d_{j+1}},
\end{align*}
since then Equation \eqref{eq:crraineq} will not hold.  Re-arranging this expression we obtain:
\begin{align*}	
\frac{p_{1}-p_{j+1}+d_{1}-d_{j+1}}{p_{j+1}-p_{1}}&\leq\frac{p_{1}-p_{j}+d_{1}-d_{j}}{p_{j}-p_{1}}\\
\frac{d_{1}-d_{j+1}}{p_{j+1}-p_{1}}&\leq\frac{d_{1}-d_{j}}{p_{j}-p_{1}}\\
\frac{p_{1}-p_{j}}{p_{1}-p_{j+1}}&\leq\frac{d_{1}-d_{j}}{d_{1}-d_{j+1}}.
\end{align*}
The latter inequality holds by assumption. It follows that  $c_{1,j}(\bar{p},\mu)<c_{1,j+1}(\bar{p},\mu)$, $j>1$.

\end{proof}
\section{Monetary Cost of Limited Consideration}\label{sec:monetarycost}
We view limited consideration as a mechanism that constrains households from achieving their first-best alternative either because the market setting forces some alternatives to become more salient than others (e.g. agent effects) or because of time or psychological costs that prevent the household from evaluating all alternatives in the choice set. Regardless of the underlying mechanism(s) of limited consideration, we can quantify its \emph{monetary} cost within our framework. We ask, \emph{ceteris paribus}, how much money the households ``leave on the table" when choosing deductibles in property insurance under limited consideration rather than under full consideration. This is likely to be a lower bound on actual monetary losses arising from limited consideration, because insurance companies might be exploiting sub-optimality of households choices when setting prices or choosing menus.

We measure the monetary costs of limited consideration as follows. For each household we compute (the expected value of) the certainty equivalent of the lottery associated with the households' optimal choice, as well as of the one associated with their choice under limited consideration.\footnote{Certainty equivalent of the lottery is defined as the minimum amount they are willing to accept in lieu of the lottery. In our case, for alternative $j$, it is simply $ce_j \equiv \frac{1}{\pparam} \ln [ (1-\mu) \exp({\pparam} p_j ) + \mu \exp({\pparam}(p_j + d_j))]$.} We then take the difference between these certainty equivalent values and average them across all households in the sample. On average, we find that households lose $\$50$ dollars across the three deductibles because of limited consideration. See Table \ref{F:WelfareWide} for variation conditional on demographic characteristics and insurance score.
We also find wide dispersion in loss across households (see Figure \ref{F:WelfareWide}).  In particular, the $10^{th}$ percentile of losses is $\$31$ and the $90^{th}$ is $\$73$.
\clearpage
\section{Data}
\renewcommand\thetable{\thesection.\arabic{table}}
\setcounter{table}{0}
\begin{normalsize}
\begin{table}[!ht]\centering
\begin{threeparttable}
\caption{Descriptive Statistics}
\label{T:descripive}
%\begin{adjustwidth}{0cm}{}
	\begin{tabular}{l c c c c}
		\hline \hline\\[-1.5ex]
		Variable       & Mean & Std. Dev.  & 1st $\%$  & 99th $\%$ \\\\[-1.5ex]
		\hline \\[-1.5ex]
		Age            & 53.3 & 15.7 & 25.4 & 84.3  \\\\[-1.5ex]
		Female         &0.40  &      &      &       \\\\[-1.5ex]
		Single		   &0.22  &      &      &       \\\\[-1.5ex]		
		Married 	   &0.55  &      &      &       \\\\[-1.5ex] 		
		Second Driver  &0.43  &      &      &       \\\\[-1.5ex] 		
		Insurance Score& 767  &112   & 532  & 985  \\\\[-1.5ex] 		
		\hline
	\end{tabular}
%\end{adjustwidth}
\end{threeparttable}
\end{table}
\end{normalsize}

\begin{normalsize}
\begin{table}[!ht]\centering
\begin{threeparttable}
\caption{Frequency of Deductible Choices Across Contexts}
\label{T:deductible}
%\begin{adjustwidth}{0cm}{}
	\begin{tabular}{l c c c c c c}
		\hline \hline \\[-1.5ex]
		Deductible     &1000  & 500  & 250  & 200  &100  &  50  \\\\[-1.5ex]
		\hline \\[-1.5ex]
		Collision      &0.064  &0.676  & 0.122 & 0.129 &0.009 &      \\\\[-1.5ex]
		Comprehensive  &0.037  &0.430  & 0.121 & 0.329 &0.039 & 0.044 \\\\[-1.5ex]
		Home 		   &0.176  &0.559  & 0.262 &      &0.002 &      \\\\[-1.5ex] 		
		\hline
	\end{tabular}
%\end{adjustwidth}
\end{threeparttable}
\end{table}
\end{normalsize}

\begin{normalsize}
\begin{table}[!ht]\centering
\begin{threeparttable}
\caption{Deductible Rank Correlations Across Contexts}
\label{T:correlation}
%\begin{adjustwidth}{0cm}{}
	\begin{tabular}{l c c c }
		\hline \hline\\[-1.5ex]
		     &Collision  & Comprehensive   & Home  \\
		\cmidrule(lr){2-4}
         \\[-1.5ex]
		Collision      &1     &      &         \\\\[-1.5ex]
		Comprehensive  &0.61  &1     &         \\\\[-1.5ex]
		Home 		   &0.37  &0.35  & 1       \\\\[-1.5ex] 		
		\hline
	\end{tabular}
%\end{adjustwidth}
\end{threeparttable}
\end{table}
\end{normalsize}

\begin{normalsize}
\begin{table}[!ht]\centering
\begin{threeparttable}
\caption{Joint Distribution of Auto Deductibles}
\label{T:autodist}
%\begin{adjustwidth}{0cm}{}
\begin{tabular}{l c c c c c c}
\hline \hline  \\[-1.5ex]
    & \multicolumn{6}{c}{\underline{Comprehensive}}  \\\\[-1.5ex]
Collision	&	1000	&	500	&	250	&	200	&	100	&	50	 \\
\cmidrule(lr){1-1}
\cmidrule(lr){2-7}
\\[-1.5ex]
1000&	3.71	&	1.93	&	0.18	&	0.44	&	0.05	&	0.04	\\\\[-1.5ex]
500	&	0	&	40.99	&	6.46	&	17.84	&	1.27	&	1.00	\\\\[-1.5ex]
250	&	0	&	0.04	&	5.42	&	4.55	&	1.28	&	0.94	\\\\[-1.5ex]
200	&	0.01	&	0.05	&	0.03	&	9.99	&	1.07	&	1.78	\\\\[-1.5ex]
100	&	0	&	0	&	0	&	0.04	&	0.23	&	0.66	\\\\[-1.5ex]
		\hline		
	\end{tabular}
\caption*{The distribution is reported in percent.}
%\end{adjustwidth}
\end{threeparttable}
\end{table}
\end{normalsize}

\begin{normalsize}
\begin{table}\centering
\begin{threeparttable}
\caption{Average Premiums Across Coverages}
\label{T:premium}
%\begin{adjustwidth}{0cm}{}
	\begin{tabular}{l c c c c c c}
		\hline \hline \\[-1.5ex]
		 Deductible    &1,000  & 500 & 250 & 200 &100 &  50\\\\[-1.5ex]
		\hline  \\[-1.5ex]
		Collision      & 145  & 187 & 243 & 285 &321 &    \\\\[-1.5ex]
		Comprehensive  & 94   & 117 & 147 & 155 &178 & 224\\\\[-1.5ex]
		Home 		   & 594  & 666 & 720 &     &885 &    \\\\[-1.5ex] 		
		\hline
	\end{tabular}
%\end{adjustwidth}
\end{threeparttable}
\end{table}
\end{normalsize} 
\newpage
\clearpage
\section{Empirical Results: Figures and Tables}\label{app:tables}
\subsection{The $\modA$ Model with Observable Demographics}\label{app:observables}
While it is ideal to control for households' observable characteristics non-parametrically, it is data demanding. In practice, it is commonly assumed that household characteristics shift the expected value of the preference-coefficient distribution.$\footnote{For exmaple, \cite{Cohen2007} assume that $\log \pparam_i = \mathbf{Z}_i \gamma +\varepsilon_i$, where $\mathbf{Z}_i$ are the observables for household $i$ and $\varepsilon_i$ is i.i.d. N(0, $\sigma^2$). Hence, $E(\pparam_i)=e^{\mathbf{Z}_i \gamma +\sigma^2/2}.$}$ We adopt the same strategy here by assuming that for each household $i$, $\log\frac{\beta_{1,i}}{\beta_{2}}= \mathbf{Z}_i\gamma$, where $\gamma$ is an unknown vector to be estimated. The terms $\beta_{1,i}$ and $\beta_{2}$ denote the parameters of the Beta distribution, where $\beta_{1,i}$ is household specific and $\beta_2$ is common across households.  The preference coefficients are random draws from a distribution with an expected value that is a function of the observable characteristics given by
$ E(\pparam_i)=\frac{\beta_{1,i}}{\beta_{1,i}+\beta_{2}}\bar{\pparam}=\frac{e^{\mathbf{Z}_i\gamma}}{1+e^{\mathbf{Z}_i \gamma}}\bar{\pparam}
$.\footnote{If, instead, we assume $\log\frac{\beta_{2,i}}{\beta_{1}}= \mathbf{Z}_i\tilde \gamma$, then we arrive to the same expression for the expected value with the exception that $\tilde \gamma = -\gamma$.} The results of this estimation are in line with our first estimation. (See Column 2 in Table \ref{T:mleARCcoll}, as well as Figures \ref{F:ARCaggX} and \ref{F:ARCbygroupX}.) The new observation here is that the model closely matches the distribution of choices across various sub-populations in the sample including gender, age, credit worthiness, and contracts with multiple drivers.  The model's ability to match these conditional distributions can be attributed, in part, to the dependence of risk preferences on household characteristics. The model is, however, fairly parsimonious as the consideration parameters are restricted to be the same across all households. Finally, estimated consideration probabilities are close in magnitude to those estimated above. In particular, the highest deductibles ($\$1,000$ and $\$500$) are most likely to be considered, with respective frequencies of $0.94$ and $0.92$.  The remaining alternatives are considered at much lower frequencies.
\clearpage
\subsection{Figures}
\renewcommand\thefigure{\thesection.\arabic{figure}}
\setcounter{figure}{0}
\begin{figure}[!htbp]\centering
    \caption{The $\modA$ Model with Observable Demographics}
    \label{F:ARCaggX}
    \includegraphics[trim={5cm 0 0 0.5cm},scale=0.4]{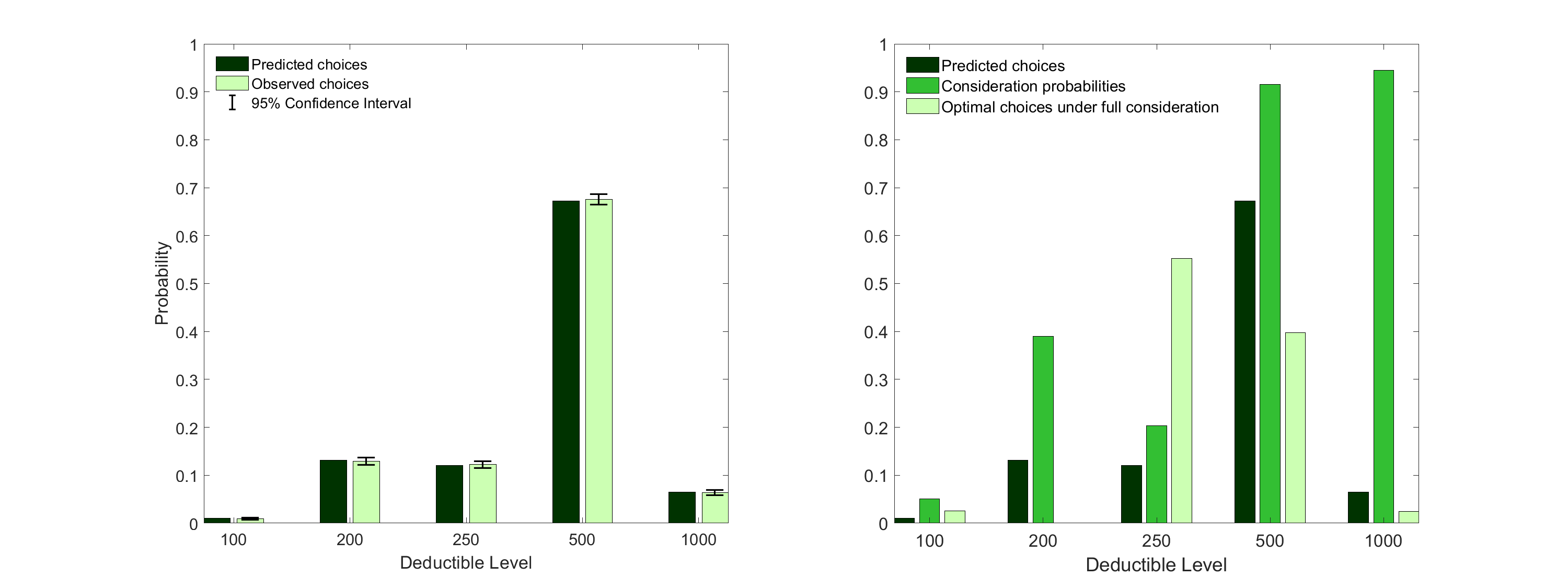}
    \captionsetup{font={footnotesize,sf},justification=raggedright}
    \caption*{The first panel reports the distribution of predicted and observed choices.  The second panel displays consideration probabilities and the distribution of optimal choices under full consideration. }
\end{figure}

\begin{figure}[!htbp]\centering
    \caption{The $\modA$ Model with Observable Demographics:  Conditional Distributions}
    \label{F:ARCbygroupX}
    \includegraphics[trim={5cm 0 0 0cm},scale=0.4]{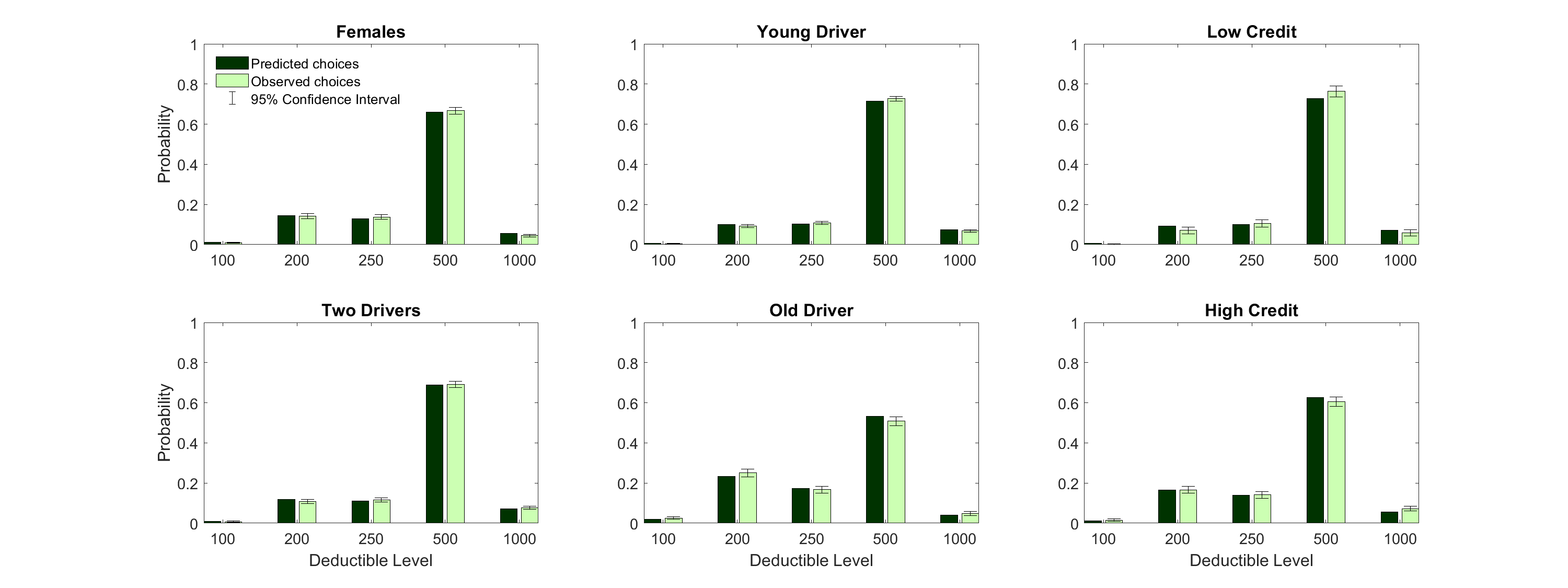}
\end{figure}

\begin{figure}\centering
	\caption{The Mixed Logit}
	\label{F:RCLagg}
	\includegraphics[trim={2.5cm 0 0 0.5cm},scale=0.225]{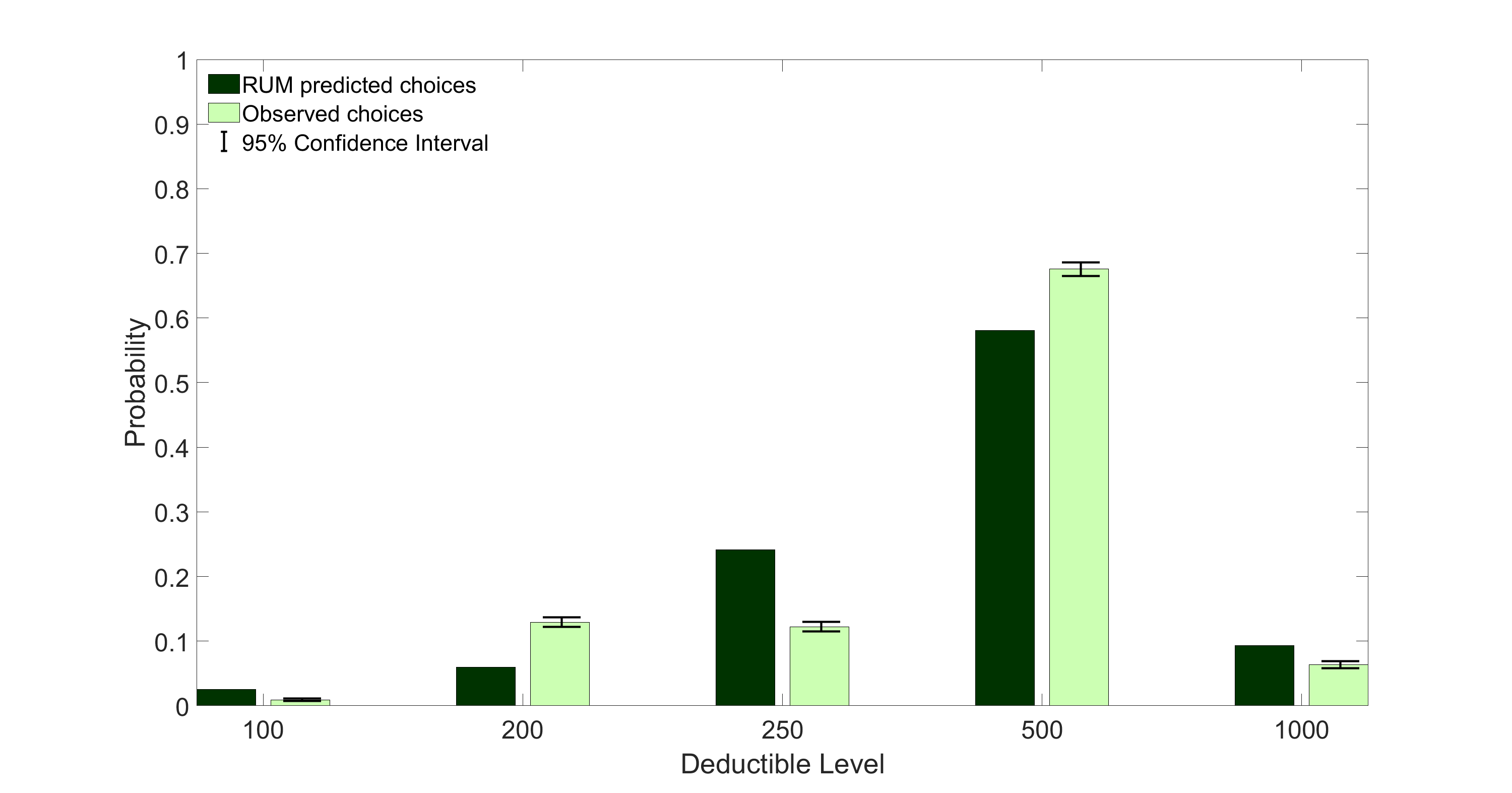}
\end{figure}

\begin{figure}[!htbp]\centering
	\caption{The Mixed Logit, Three Coverages}
	\label{F:RUMwide}
	\includegraphics[trim={5cm 0 0 0cm},scale=0.4]{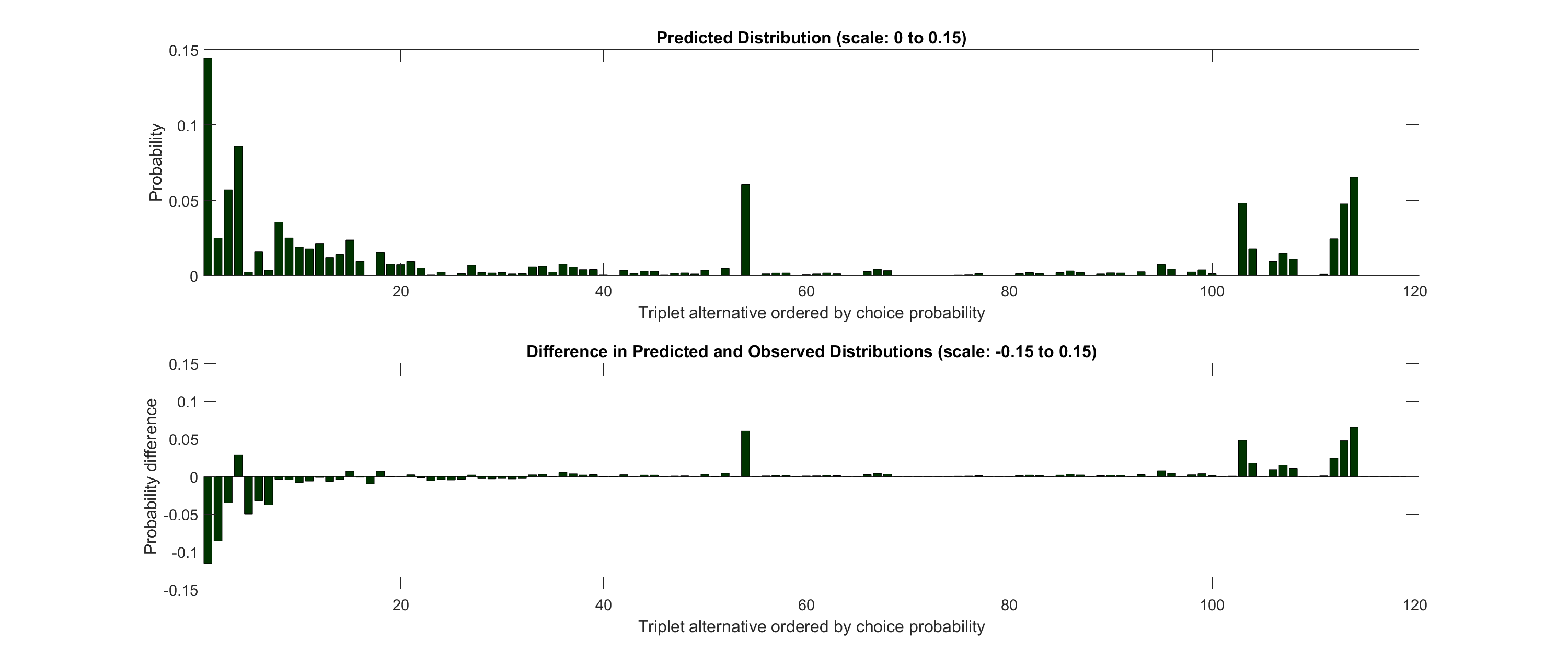}
    \captionsetup{font={footnotesize,sf},justification=raggedright}
    \caption*{Triplets are sorted by observed frequency at which they are chosen.  The first panel reports the predicted choice frequency and the second panel reports the difference in predicted and observed choice frequencies.}
\end{figure}

\begin{figure}[!htbp]\centering
	\caption{The $\modA$ Model, Three Coverages: \\Consideration and Optimal Choice Distribution}
	\label{F:ARCwideconsid}
	\includegraphics[trim={5cm 0 0 0cm},scale=0.4]{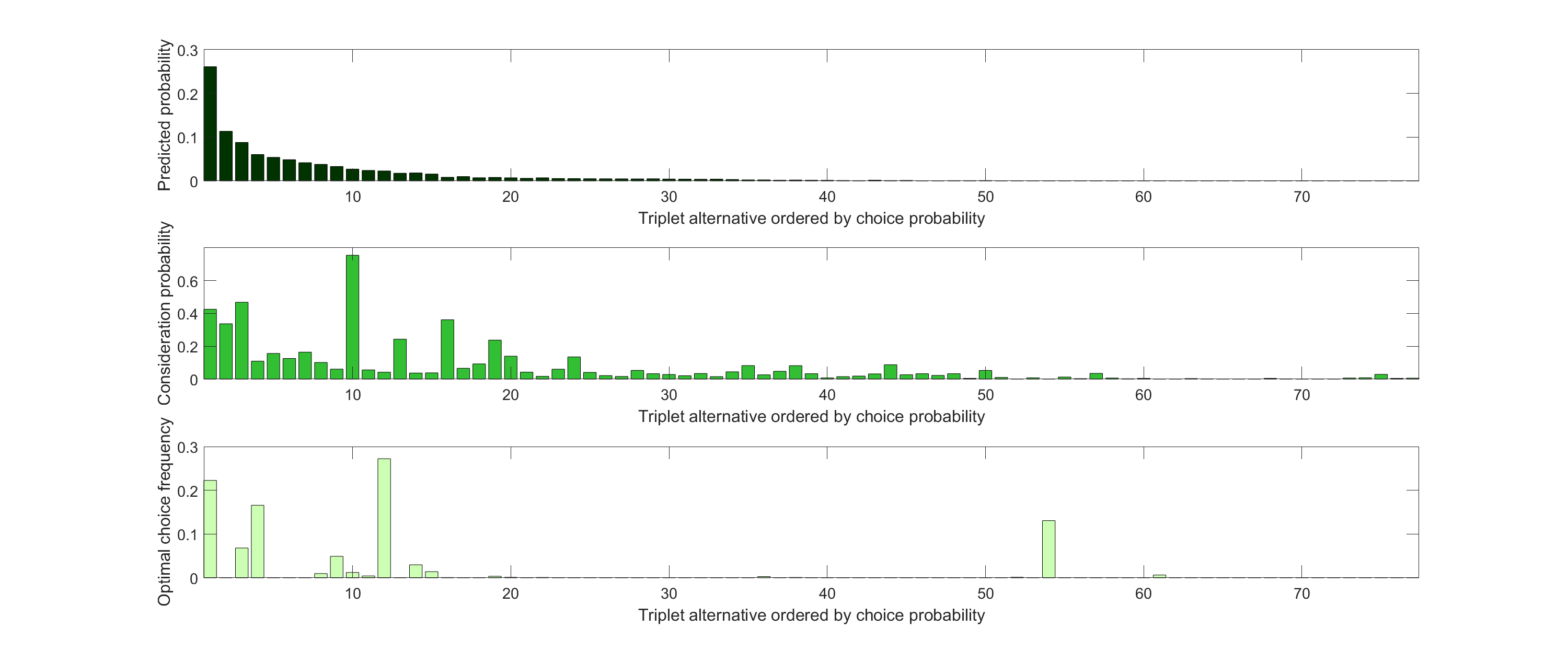}
    \captionsetup{font={footnotesize,sf},justification=raggedright}
    \caption*{Triplets are sorted by observed frequency at which they are chosen.}
\end{figure}

\begin{figure}[!htbp]\centering
	\caption{The $\modA$ Model with Three Coverages: \\ Monetary Loss From Limited Consideration}
	\label{F:WelfareWide}
	\includegraphics[trim={0cm 0 0 0cm},scale=0.3]{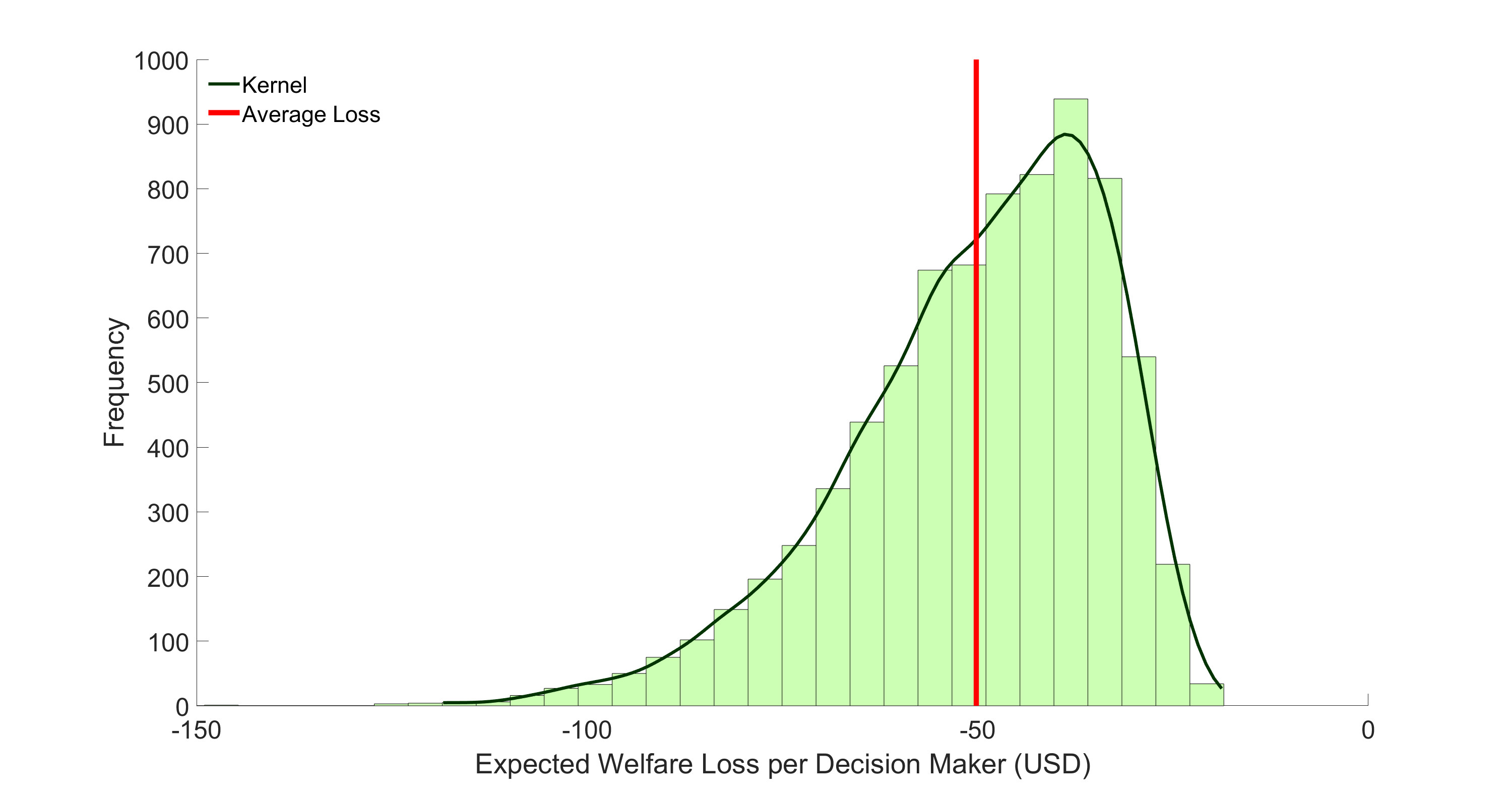}
\end{figure}

\newpage
\clearpage

\subsection{Tables}
\renewcommand\thetable{\thesection.\arabic{table}}
\setcounter{table}{0}
\begin{table}[!htbp]\centering
\normalsize
\begin{threeparttable}
\caption{MLE Estimation Results for the  $\modA$ Model: Collision Only}
\label{T:mleARCcoll}
\begin{tabular}{lllll}
\hline\hline
&\multicolumn{2}{c}{\textbf{ARC Model}}&\multicolumn{2}{c}{\textbf{ARC Model with Observables}}\\
\hline\\[-1.5ex]
\textbf{Average $\beta_{1i}$}&$1.70$&[1.56, 1.82]&$2.23$&[1.93, 2.50]\\\\[-1.5ex]
\textbf{$\beta_2$}&$7.45$&[6.68, 8.08]&$9.20$&[8.09, 10.1]\\\\[-1.5ex]
\textbf{Mean of $\nu$}&$0.0037$&[0.0036, 0.0038]&$0.0038$&[0.0036, 0.0040]\\\\[-1.5ex]
\textbf{SD of $\nu$}&$0.0024$&[0.0023, 0.0026]&$0.0022$&[0.0021, 0.0023]\\\\[-1.5ex]
\textbf{Intercept}&-&-&$-1.41$&[-1.47, -1.33]\\\\[-1.5ex]
\textbf{Age}&-&-&$0.207$&[0.173, 0.237]\\\\[-1.5ex]
\textbf{Age$^2$}&-&-&$0.048$&[0.022, 0.073]\\\\[-1.5ex]
\textbf{Female Driver}&-&-&$0.077$&[0.051, 0.104]\\\\[-1.5ex]
\textbf{Single Driver}&-&-&$0.050$&[0.022, 0.079]\\\\[-1.5ex]
\textbf{Married Driver}&-&-&$0.103$&[0.062, 0.144]\\\\[-1.5ex]
\textbf{Credit Score}&-&-&$0.134$&[0.107, 0.160]\\\\[-1.5ex]
\textbf{2+ Drivers}&-&-&$-0.302$&[-0.370, -0.224]\\\\[-1.5ex]
\textbf{Collision \$100}&$0.059$&[0.050, 0.068]&$0.050$&[0.042, 0.058]\\\\[-1.5ex]
\textbf{Collision \$200}&$0.412$&[0.391, 0.433]&$0.390$&[0.364, 0.413]\\\\[-1.5ex]
\textbf{Collision \$250}&$0.206$&[0.198, 0.214]&$0.204$&[0.193, 0.212]\\\\[-1.5ex]
\textbf{Collision \$500}&$0.920$&[0.913, 0.926]&$0.915$&[0.909, 0.924]\\\\[-1.5ex]
\textbf{Collision \$1000}&$1.000$&[1.000, 1.000]&$0.944$&[0.899, 1.000]\\\\[-1.5ex]
\hline\end{tabular}
\end{threeparttable}
\end{table}

\begin{table}[!htbp]\centering
\normalsize
\begin{threeparttable}
\caption{MLE Estimation Results for the  Proportionally Shifting Consideration Model}
\label{T:mleARCcollprop}
\begin{tabular}{lllll}
\hline\hline
&\multicolumn{2}{c}{\textbf{ARC Model}}&\multicolumn{2}{c}{\textbf{ARC Model with Observables}}\\
\hline\\[-1.5ex]
\textbf{Average $\beta_{1i}$}&$1.44$&[1.31, 1.55]&$2.11$&[1.86, 2.28]\\\\[-1.5ex]
\textbf{$\beta_2$}&$6.07$&[5.35, 6.67]&$8.74$&[7.73, 9.58]\\\\[-1.5ex]
\textbf{Mean of $\nu$}&$0.0038$&[0.0037, 0.0040]&$0.0038$&[0.0036, 0.0040]\\\\[-1.5ex]
\textbf{SD of $\nu$}&$0.0027$&[0.0026, 0.0028]&$0.0023$&[0.0021, 0.0024]\\\\[-1.5ex]
\textbf{Intercept}&-&-&$-1.40$&[-1.47, -1.35]\\\\[-1.5ex]
\textbf{Age}&-&-&$0.194$&[0.160, 0.222]\\\\[-1.5ex]
\textbf{Age$^2$}&-&-&$0.036$&[0.010, 0.059]\\\\[-1.5ex]
\textbf{Female Driver}&-&-&$0.070$&[0.046, 0.096]\\\\[-1.5ex]
\textbf{Single Driver}&-&-&$0.049$&[0.021, 0.076]\\\\[-1.5ex]
\textbf{Married Driver}&-&-&$0.091$&[0.047, 0.130]\\\\[-1.5ex]
\textbf{Credit Score}&-&-&$0.135$&[0.110, 0.160]\\\\[-1.5ex]
\textbf{2+ Drivers}&-&-&$-0.283$&[-0.348, -0.200]\\\\[-1.5ex]
\textbf{Collision \$100}&$0.061$&[0.051, 0.070]&$0.055$&[0.046, 0.063]\\\\[-1.5ex]
\textbf{Collision \$200}&$0.424$&[0.401, 0.446]&$0.408$&[0.382, 0.433]\\\\[-1.5ex]
\textbf{Collision \$250}&$0.211$&[0.202, 0.220]&$0.212$&[0.201, 0.222]\\\\[-1.5ex]
\textbf{Collision \$500}&$0.985$&[0.974, 0.998]&$0.961$&[0.929, 0.977]\\\\[-1.5ex]
\textbf{Collision \$1000}&$1.000$&-&$1.000$&-\\\\[-1.5ex]
\textbf{Average $\xi_{1i}$}&$0.478$&[0.277, 0.652]&$0.148$&[0.021, 0.212]\\\\[-1.5ex]
\textbf{$\xi_2$}&$26.7$&[16.1, 37.9]&$7.14$&[0.939, 10.3]\\\\[-1.5ex]
\textbf{$\xi_1$: Intercept}&-&-&$-2.24$&[-3.59, 0.0011]\\\\[-1.5ex]
\textbf{$\xi_1$: Age}&-&-&$1.24$&[-0.736, 1.75]\\\\[-1.5ex]
\textbf{$\xi_1$: Age$^2$}&-&-&$-0.382$&[-0.701, 0.584]\\\\[-1.5ex]
\textbf{$\xi_1$: Female Driver}&-&-&$-0.323$&[-1.16, 0.910]\\\\[-1.5ex]
\textbf{$\xi_1$: Single Driver}&-&-&$0.382$&[-1.51, 0.650]\\\\[-1.5ex]
\textbf{$\xi_1$: Married Driver}&-&-&$0.0017$&[-2.35, 1.45]\\\\[-1.5ex]
\textbf{$\xi_1$: Credit Score}&-&-&$0.405$&[-0.688, 0.642]\\\\[-1.5ex]
\textbf{$\xi_1$: 2+ Drivers}&-&-&$0.485$&[-2.22, 1.98]\\\\[-1.5ex]
\hline\end{tabular}
\end{threeparttable}
\end{table}

\begin{table}[!htbp]\centering
\normalsize
\begin{threeparttable}
\caption{MLE Estimation Results for the Mixed Logit: \\ Collision Only}
\label{T:mleRCLcoll}
\begin{tabular}{lll}
\hline\hline
&\multicolumn{2}{c}{\textbf{Mixed Logit}}\\
\hline\\[-1.5ex]
\textbf{Average $\beta_{1i}$}&$9.07$&[7.54, 10.2]\\\\[-1.5ex]
\textbf{$\beta_2$}&$124.4$&[106.0, 137.5]\\\\[-1.5ex]
\textbf{Mean of $\nu$}&$0.0014$&[0.0013, 0.0014]\\\\[-1.5ex]
\textbf{SD of $\nu$}&$0.0004$&[0.0004, 0.0005]\\\\[-1.5ex]
\textbf{Intercept}&$-2.59$&[-2.63, -2.55]\\\\[-1.5ex]
\textbf{Age}&$-0.139$&[-0.156, -0.122]\\\\[-1.5ex]
\textbf{Age$^2$}&$-0.024$&[-0.037, -0.010]\\\\[-1.5ex]
\textbf{Female Driver}&$-0.0035$&[-0.019, 0.012]\\\\[-1.5ex]
\textbf{Single Driver}&$-0.0098$&[-0.026, 0.0060]\\\\[-1.5ex]
\textbf{Married Driver}&$-0.030$&[-0.054, -0.0078]\\\\[-1.5ex]
\textbf{Credit Score}&$0.091$&[0.076, 0.105]\\\\[-1.5ex]
\textbf{2+ Drivers}&$-0.016$&[-0.061, 0.029]\\\\[-1.5ex]
\textbf{Sigma}&$0.039$&[0.037, 0.041]\\\\[-1.5ex]
\hline\end{tabular}
\end{threeparttable}
\end{table}

\begin{table}[!htbp]\centering
 \vspace*{-2em}
\scriptsize
\caption{MLE Estimation Results for the $\modA$ Model, Three Coverages}
\label{T:mleARCwide}
\begin{threeparttable}
\begin{minipage}{.5\linewidth}
    \begin{tabular}{lll}
\hline\hline
&\multicolumn{2}{c}{\textbf{ARC Model}}\\
\hline\\[-1.5ex]
\textbf{Average $\beta_{1i}$}&$4.70$&[3.89, 5.30]\\\\[-1.5ex]
\textbf{$\beta_2$}&$24.0$&[19.7, 27.2]\\\\[-1.5ex]
\textbf{Mean of $\nu$}&$0.0032$&[0.0032, 0.0033]\\\\[-1.5ex]
\textbf{SD of $\nu$}&$0.0013$&[0.0012, 0.0014]\\\\[-1.5ex]
\textbf{Intercept}&$-1.68$&[-1.72, -1.64]\\\\[-1.5ex]
\textbf{Age}&$0.162$&[0.146, 0.180]\\\\[-1.5ex]
\textbf{Age$^2$}&$0.041$&[0.026, 0.054]\\\\[-1.5ex]
\textbf{Female Driver}&$0.043$&[0.027, 0.061]\\\\[-1.5ex]
\textbf{Single Driver}&$0.010$&[-0.0075, 0.028]\\\\[-1.5ex]
\textbf{Married Driver}&$0.030$&[0.0054, 0.054]\\\\[-1.5ex]
\textbf{Credit Score}&$0.137$&[0.121, 0.153]\\\\[-1.5ex]
\textbf{2+ Drivers}&$-0.097$&[-0.141, -0.052]\\\\[-1.5ex]
\textbf{(100,50,250)}&$0.041$&[0.033, 0.049]\\\\[-1.5ex]
\textbf{(100,50,500)}&$0.015$&[0.0099, 0.021]\\\\[-1.5ex]
\textbf{(100,50,1000)}&$0.013$&[0.0048, 0.023]\\\\[-1.5ex]
\textbf{(100,100,100)}&$0.0023$&[0.0009, 0.0042]\\\\[-1.5ex]
\textbf{(100,100,250)}&$0.0077$&[0.0049, 0.010]\\\\[-1.5ex]
\textbf{(100,100,500)}&$0.0050$&[0.0027, 0.0078]\\\\[-1.5ex]
\textbf{(100,100,1000)}&$0.0047$&[0.0019, 0.0086]\\\\[-1.5ex]
\textbf{(100,200,250)}&$0.0005$&[0.0002, 0.0010]\\\\[-1.5ex]
\textbf{(100,200,500)}&$0.0008$&[0.0004, 0.0015]\\\\[-1.5ex]
\textbf{(100,200,1000)}&$0.0036$&[0.0013, 0.0065]\\\\[-1.5ex]
\textbf{(200,50,100)}&$0.011$&[0.0052, 0.016]\\\\[-1.5ex]
\textbf{(200,50,250)}&$0.066$&[0.057, 0.075]\\\\[-1.5ex]
\textbf{(200,50,500)}&$0.061$&[0.051, 0.071]\\\\[-1.5ex]
\textbf{(200,50,1000)}&$0.033$&[0.018, 0.048]\\\\[-1.5ex]
\textbf{(200,100,100)}&$0.0020$&[0.0008, 0.0037]\\\\[-1.5ex]
\textbf{(200,100,250)}&$0.021$&[0.017, 0.026]\\\\[-1.5ex]
\textbf{(200,100,500)}&$0.028$&[0.023, 0.034]\\\\[-1.5ex]
\textbf{(200,100,1000)}&$0.023$&[0.012, 0.033]\\\\[-1.5ex]
\textbf{(200,200,100)}&$0.0017$&[0.0007, 0.0032]\\\\[-1.5ex]
\textbf{(200,200,250)}&$0.157$&[0.147, 0.167]\\\\[-1.5ex]
\textbf{(200,200,500)}&$0.165$&[0.154, 0.176]\\\\[-1.5ex]
\textbf{(200,200,1000)}&$0.136$&[0.113, 0.158]\\\\[-1.5ex]
\textbf{(200,250,250)}&$0.0004$&[0.0002, 0.0006]\\\\[-1.5ex]
\textbf{(200,250,500)}&$0.0005$&[0.0002, 0.0009]\\\\[-1.5ex]
\textbf{(200,500,250)}&$0.0015$&[0.0008, 0.0023]\\\\[-1.5ex]
\textbf{(200,1000,1000)}&$0.0047$&[0.0017, 0.0085]\\\\[-1.5ex]
\textbf{(250,50,100)}&$0.0020$&[0.0008, 0.0037]\\\\[-1.5ex]
\textbf{(250,50,250)}&$0.021$&[0.017, 0.025]\\\\[-1.5ex]
\textbf{(250,50,500)}&$0.033$&[0.027, 0.039]\\\\[-1.5ex]
\textbf{(250,100,250)}&$0.017$&[0.015, 0.020]\\\\[-1.5ex]
\textbf{(250,100,500)}&$0.016$&[0.014, 0.020]\\\\[-1.5ex]
\textbf{(250,100,1000)}&$0.019$&[0.011, 0.026]\\\\[-1.5ex]
\textbf{(250,200,100)}&$0.0010$&[0.0003, 0.0019]\\\\[-1.5ex]
\hline\end{tabular}
\end{minipage}
\begin{minipage}{.5\linewidth}
 \begin{tabular}{lll}
\hline\hline
&\multicolumn{2}{c}{\textbf{ARC Model}}\\
\hline\\[-1.5ex]
\textbf{(250,200,250)}&$0.037$&[0.033, 0.041]\\\\[-1.5ex]
\textbf{(250,200,500)}&$0.056$&[0.051, 0.061]\\\\[-1.5ex]
\textbf{(250,200,1000)}&$0.045$&[0.035, 0.055]\\\\[-1.5ex]
\textbf{(250,250,100)}&$0.0011$&[0.0004, 0.0020]\\\\[-1.5ex]
\textbf{(250,250,250)}&$0.042$&[0.038, 0.046]\\\\[-1.5ex]
\textbf{(250,250,500)}&$0.061$&[0.056, 0.066]\\\\[-1.5ex]
\textbf{(250,250,1000)}&$0.026$&[0.019, 0.034]\\\\[-1.5ex]
\textbf{(250,500,500)}&$0.0007$&[0.0004, 0.0013]\\\\[-1.5ex]
\textbf{(500,50,250)}&$0.034$&[0.028, 0.042]\\\\[-1.5ex]
\textbf{(500,50,500)}&$0.053$&[0.044, 0.063]\\\\[-1.5ex]
\textbf{(500,50,1000)}&$0.033$&[0.018, 0.047]\\\\[-1.5ex]
\textbf{(500,100,250)}&$0.015$&[0.012, 0.019]\\\\[-1.5ex]
\textbf{(500,100,500)}&$0.043$&[0.036, 0.049]\\\\[-1.5ex]
\textbf{(500,100,1000)}&$0.048$&[0.034, 0.062]\\\\[-1.5ex]
\textbf{(500,200,100)}&$0.0079$&[0.0040, 0.012]\\\\[-1.5ex]
\textbf{(500,200,250)}&$0.126$&[0.119, 0.133]\\\\[-1.5ex]
\textbf{(500,200,500)}&$0.337$&[0.322, 0.352]\\\\[-1.5ex]
\textbf{(500,200,1000)}&$0.243$&[0.221, 0.264]\\\\[-1.5ex]
\textbf{(500,250,100)}&$0.0018$&[0.0008, 0.0032]\\\\[-1.5ex]
\textbf{(500,250,250)}&$0.038$&[0.034, 0.042]\\\\[-1.5ex]
\textbf{(500,250,500)}&$0.102$&[0.095, 0.109]\\\\[-1.5ex]
\textbf{(500,250,1000)}&$0.093$&[0.080, 0.106]\\\\[-1.5ex]
\textbf{(500,500,100)}&$0.0032$&[0.0015, 0.0059]\\\\[-1.5ex]
\textbf{(500,500,250)}&$0.110$&[0.103, 0.116]\\\\[-1.5ex]
\textbf{(500,500,500)}&$0.426$&[0.413, 0.439]\\\\[-1.5ex]
\textbf{(500,500,1000)}&$0.469$&[0.450, 0.486]\\\\[-1.5ex]
\textbf{(1000,50,250)}&$0.0069$&[0.0020, 0.013]\\\\[-1.5ex]
\textbf{(1000,50,500)}&$0.0085$&[0.0022, 0.015]\\\\[-1.5ex]
\textbf{(1000,50,1000)}&$0.029$&[0.0031, 0.053]\\\\[-1.5ex]
\textbf{(1000,100,250)}&$0.0049$&[0.0019, 0.0090]\\\\[-1.5ex]
\textbf{(1000,100,500)}&$0.0060$&[0.0021, 0.011]\\\\[-1.5ex]
\textbf{(1000,100,1000)}&$0.035$&[0.0080, 0.064]\\\\[-1.5ex]
\textbf{(1000,200,250)}&$0.032$&[0.019, 0.045]\\\\[-1.5ex]
\textbf{(1000,200,500)}&$0.082$&[0.061, 0.105]\\\\[-1.5ex]
\textbf{(1000,200,1000)}&$0.088$&[0.046, 0.128]\\\\[-1.5ex]
\textbf{(1000,250,250)}&$0.0067$&[0.0027, 0.012]\\\\[-1.5ex]
\textbf{(1000,250,500)}&$0.027$&[0.015, 0.039]\\\\[-1.5ex]
\textbf{(1000,250,1000)}&$0.053$&[0.025, 0.083]\\\\[-1.5ex]
\textbf{(1000,500,250)}&$0.033$&[0.022, 0.044]\\\\[-1.5ex]
\textbf{(1000,500,500)}&$0.140$&[0.119, 0.161]\\\\[-1.5ex]
\textbf{(1000,500,1000)}&$0.362$&[0.309, 0.405]\\\\[-1.5ex]
\textbf{(1000,1000,250)}&$0.082$&[0.058, 0.107]\\\\[-1.5ex]
\textbf{(1000,1000,500)}&$0.238$&[0.199, 0.267]\\\\[-1.5ex]
\textbf{(1000,1000,1000)}&$0.755$&[0.652, 0.829]\\\\[-1.5ex]
\\\\[-1.5ex]
\hline\end{tabular}
\end{minipage}
\end{threeparttable}
\end{table}

\begin{table}[!htbp]\centering
\normalsize
\begin{threeparttable}
\caption{MLE Estimation Results for RUM, Three Coverages}
\label{T:mleRUMbroad}
\begin{tabular}{lll}
\hline\hline
&\multicolumn{2}{c}{\textbf{Mixed Logit}}\\
\hline\\[-1.5ex]
\textbf{Average $\beta_{1i}$}&$4.89$&[4.60, 5.16]\\\\[-1.5ex]
\textbf{$\beta_2$}&$54.2$&[51.6, 56.6]\\\\[-1.5ex]
\textbf{Mean of $\nu$}&$0.0017$&[0.0016, 0.0017]\\\\[-1.5ex]
\textbf{SD of $\nu$}&$0.0007$&[0.0007, 0.0007]\\\\[-1.5ex]
\textbf{Intercept}&$-2.37$&[-2.39, -2.34]\\\\[-1.5ex]
\textbf{Age}&$-0.077$&[-0.088, -0.066]\\\\[-1.5ex]
\textbf{Age$^2$}&$-0.015$&[-0.024, -0.0059]\\\\[-1.5ex]
\textbf{Female Driver}&$0.0008$&[-0.0098, 0.012]\\\\[-1.5ex]
\textbf{Single Driver}&$-0.014$&[-0.025, -0.0030]\\\\[-1.5ex]
\textbf{Married Driver}&$-0.018$&[-0.033, -0.0029]\\\\[-1.5ex]
\textbf{Credit Score}&$0.034$&[0.023, 0.045]\\\\[-1.5ex]
\textbf{2+ Drivers}&$-0.048$&[-0.075, -0.020]\\\\[-1.5ex]
\textbf{Sigma}&$0.224$&[0.209, 0.238]\\\\[-1.5ex]
\hline\end{tabular}
\end{threeparttable}
\end{table}

\begin{table}[!htbp]\centering
\normalsize
\begin{threeparttable}
\caption{Average Monetary Loss by Group}
\label{T:ARCwelfare}
\begin{tabular}{lll}
\hline\hline
&\multicolumn{2}{c}{\textbf{Average Monetary Loss}}\\
\hline\\[-1.5ex]
\textbf{All}&$-50.2$&[-52.4, -47.3]\\\\[-1.5ex]
\textbf{Female Driver}&$-54.3$&[-56.8, -51.0]\\\\[-1.5ex]
\textbf{Single Driver}&$-45.1$&[-47.1, -42.3]\\\\[-1.5ex]
\textbf{Young}&$-45.5$&[-47.2, -43.2]\\\\[-1.5ex]
\textbf{Old}&$-65.4$&[-69.3, -59.4]\\\\[-1.5ex]
\textbf{Low Credit Driver}&$-47.6$&[-49.3, -44.9]\\\\[-1.5ex]
\textbf{High Credit Driver}&$-54.3$&[-57.3, -50.3]\\\\[-1.5ex]
\hline\end{tabular}
\end{threeparttable}
\end{table}

\end{appendices}

\end{document}